 \newtheorem{thm}{Theorem}[section]
 \newtheorem{cor}[thm]{Corollary}
 \newtheorem{lem}[thm]{Lemma}
 \newtheorem{prop}[thm]{Proposition}
 \theoremstyle{definition}
 \newtheorem{defn}[thm]{Definition}
 \theoremstyle{remark}
 \newtheorem*{ex}{Example}
 \numberwithin{equation}{section}
\newcommand{\ra}{\Rightarrow}
\newcommand{\naraba}{\rightarrow}
\newcommand{\setof}[1]{\{{#1}\}}
\newcommand{\hkn}{\mathsf{H}\mathbf{(K_n) }}
\newcommand{\gkn}{\mathsf{G}\mathbf{(K_n) }}
\newcommand{\hkdn}{\mathsf{H}\mathbf{(KD_n) }}
\newcommand{\gkdn}{\mathsf{G}\mathbf{(KD_n) }}
\newcommand{\hktn}{\mathsf{H}\mathbf{(KT_n) }}
\newcommand{\gktn}{\mathsf{G}\mathbf{(KT_n) }}
\newcommand{\gksecn}{\mathsf{G}\mathbf{(K^2_n) }}
\newcommand{\gktsecn}{\mathsf{G}\mathbf{(KT^2_n) }}
\newcommand{\gkdsecn}{\mathsf{G}\mathbf{(KD^2_n) }}
\newcommand{\gktnplus}{\mathsf{G}\mathbf{(KT_n^+) }}
\begin{document}

%
%
%
%
%
%
%
%
%

\title{A proof-theoretic approach to uniform
interpolation property of multi-agent modal logic
}
 
\author{Youan Su}

\address{%
School of philosophy\\
Liaoning University\\
No.58, Daoyinan Street, Shenbeixinqu \\
Shenyang, China}

\email{su.youan@lnu.edu.cn}


\keywords{Modal logic, Uniform interpolation, Proof theory}

\date{Sep 15, 2025}

\begin{abstract}
Uniform interpolation property (UIP) is a strengthening of Craig interpolation property. It was first established by  Pitts\cite{pitts1992interpretation}
based on a pure proof-theoretic method. 
UIP in multi-modal  $\mathbf{K_n}$, $\mathbf{KD_n}$ and $\mathbf{KT_n}$ logic have been established by semantic approaches, however, a proof-theoretic approach is still lacking.
B{\'\i}lkov{\'a} \cite{bilkova2007uniform} develops the method in Pitts\cite{pitts1992interpretation} to show UIP in classical modal logic $\mathbf{K}$ and $\mathbf{KT}$.  
This paper further extends  B{\'\i}lkov{\'a}  \cite{bilkova2007uniform}'s systems to  establish the UIP in 
multi-agent modal logic $\mathbf{K_n}$, $\mathbf{KD_n}$ and $\mathbf{KT_n}$. A purely syntactic algorithm is presented to determine a uniform interpolant formula.
It is also shown that quantification over propositional variables can be modeled by UIP in  these systems.
Furthermore, a direct argument to establish UIP without using   second-order quantifiers is also presented.

\end{abstract}

\maketitle

\section{introduction}

We say that a logic $\mathbf{L}$ satisfies the Craig interpolation property if, whenever $A \rightarrow B$ is derivable, there exists an interpolant formula $C$ such that both $A \rightarrow C$ and $C \rightarrow B$ are derivable, and all propositional variables appearing in $C$ are shared by $A$ and $B$.

The {\sl Uniform Interpolation Property (UIP) } is a strengthening of Craig interpolation. 
A logic $\mathbf{L}$ satisfies the  (UIP), if there exist post-interpolant and pre-interpolant formulas satisfying the following conditions.

First, for any formula $A$ and any  propositional variables
    $q_1,\cdots,q_m$ ($m\in\mathbb{N}$, omitted as $\overrightarrow{q_m} $), there exists a formula (called {\sl post-interpolant}) $\mathcal{I}_{post}(A,\overrightarrow{q_m} )$ which is dependently constructed from $A$ and $\overrightarrow{q_m}$, such that:

\begin{enumerate}
    \item $A\rightarrow\mathcal{I}_{post} (A ,\overrightarrow{q_m})  $ is derivable;
    \item  for any formula $B$, if $A\rightarrow B$ is derivable and $\mathsf{V}(A)\cap\mathsf{V}(B) \subseteq\{\overrightarrow{q_m} \}$, then   $\mathcal{I}_{post}(A ,\overrightarrow{q_m})\rightarrow B $ is derivable.
\end{enumerate}

Furthermore, for any formula $B$ and  any  propositional variables
    $r_1,\cdots,r_n$ ($n\in\mathbb{N}$, omitted as $\overrightarrow{r_n} $), there exists a formula (called {\sl pre-interpolant}) $\mathcal{I}_{pre}(B,\overrightarrow{r_n})$ which is dependently constructed from $B$ and $\overrightarrow{r_n}$, such that: 

\begin{enumerate}
    \item $\mathcal{I}_{pre}(B,\overrightarrow{r_n}) \rightarrow B$ is derivable;
    \item  for any formula $A$,  if $A\rightarrow B$ is derivable and $\mathsf{V}(A)\cap\mathsf{V}(B) \subseteq\{\overrightarrow{r_n} \}$ then   $A\rightarrow \mathcal{I}_{pre}(B,\overrightarrow{r_n} ) $ is derivable.
\end{enumerate}

Pitts \cite{pitts1992interpretation} established the UIP as a strengthening of Craig interpolation for intuitionistic propositional logic, based on a sequent calculus that absorbs all structural rules.

In recent years, UIP has been widely studied. For example, it has been shown to correspond to the notion of “forgetting” in knowledge representation and reasoning \cite{Lin1994,hans2009}.

UIP in modal logic $\mathbf{K}$ was shown by Visser \cite{visser1996bisimulations} using bounded bisimulations and by Ghilardi \cite{ghilardi1995algebraic} using an algebraic approach.
Wolter \cite{wolter1997} proved that modal logic $\mathbf{S5}$ has the UIP.
It is also known that $\mathbf{K4}$ and $\mathbf{S4}$ do not satisfy UIP \cite{ghilardi1995undefinability,bilkova2007uniform}.

Regarding multi-agent modal logic, Wolter \cite{wolter1997} showed that UIP for any normal mono-modal logic can be generalized to its multi-agent case.
Fang et al. \cite{FANG201951} proved that $\mathbf{K_n}$
$\mathbf{D_n}$,
$\mathbf{T_n,}$
$\mathbf{K45_n,}$
$\mathbf{KD45_n,}$
$\mathbf{S5_n}$and these systems with common knowledge satisfy UIP.
Alassaf et al. \cite{alassaf_uip_2022} demonstrated UIP for $\mathbf{K_n,D_n,T_n}$ using a resolution-based approach.

There have also been studies of UIP in modal logic based on proof-theoretic approaches.
B{\'\i}lkov{\'a} \cite{bilkova2007uniform} developed the method of Pitts \cite{pitts1992interpretation} to show UIP in classical modal logics $\mathbf{K}$ and $\mathbf{KT}$.
UIP for $\mathbf{KD}$ was shown by Iemhoff \cite{Iemhoff2019_uip}.
UIP for $\mathbf{K}$, $\mathbf{D}$, $\mathbf{T}$, and $\mathbf{S5}$ via nested sequents and hypersequents has been established by van der Giessen et al. \cite{Giessen2024_uip_proof}.
(However, this is not a purely proof-theoretic approach, since semantic notions are used to define uniform interpolation.)

As far as I know, UIP in multi-agent modal logic has not been studied using purely proof-theoretic methods.
However, when Pitts \cite{pitts1992interpretation} established UIP for the first time, he provided a purely proof-theoretic method without using semantic notions.


Craig suggested that his results (namely, Craig interpolation) ``relate suggestive but nonelementary model-theoretic concepts to more elementary proof-theoretic concepts, thereby opening up model-theoretic problems to proof-theoretic methods of attack” \cite{Craig1957}.
In the case of Craig interpolation, Beth’s definability theorem and Robinson’s consistency theorem are related to cut-elimination (cf. Chang and Keisler \cite[chp 2.2]{chang1990model}; Maehara \cite{Maehara1961_interpolation}; Troelstra and Schwichtenberg \cite[p116]{troelstra2000basic}).
In the context of UIP, bisimulation in the semantic approach \cite{visser1996bisimulations} corresponds to terminating proof-search trees in the proof-theoretic approach \cite{pitts1992interpretation,bilkova2007uniform}.
Moreover, the proof-theoretic approach provides a direct method to construct an interpolant formula, which is comparatively difficult to achieve in the bisimulation approach.


This paper extends the single-modal systems $\mathbf{K}$ and $\mathbf{KT}$ studied by B{\'\i}lkov{\'a} \cite{bilkova2007uniform} to multi-modal systems $\mathbf{K_n}$, $\mathbf{KD_n}$, and $\mathbf{KT_n}$.
It provides a proof-theoretic proof of UIP for multi-agent modal logics $\mathbf{K_n}$, $\mathbf{KD_n}$, and $\mathbf{KT_n}$, and presents a purely syntactic algorithm for determining uniform interpolant formulas.
It also shows that quantification over propositional variables can be modeled by UIP in these systems.

Furthermore, in both  B{\'\i}lkov{\'a}\cite{bilkova2007uniform} and Pitts \cite{pitts1992interpretation}, UIP is established via a translation from second-order propositional calculi to first-order propositional calculi.
In this paper, we present a direct argument establishing UIP without using second-order quantifiers.

This paper is organized as follows:

\begin{itemize}
    \item In section \ref{sec:syntax}, we present our syntax. And, we also explain why $\bot$ is needed to be primitive based on Ono\cite{Ono1998};
    \item  Section \ref{sec:sequent cal} introduces the main sequent calculi $\gkn$, $\gkdn$, and $\gktn$, and proves their proof-theoretic properties; Section \ref{sec:main thm of gkn and gkdn} proves the main theorem for $\gkn$ and $\gkdn$, showing UIP for these systems without propositional quantifiers;

    \item Section \ref{sec:sequent cal of T} presents a sequent calculus $\gktnplus$ with a loop-preventing mechanism based on B{\'\i}lkov{\'a} \cite{bilkova2007uniform}, and examines its proof-theoretic properties; Section \ref{sec:main thm of T} proves UIP for $\gktn$;
    
    \item  Section \ref{sec:2nd order} shows that quantification over propositional variables can be modeled by UIP in these systems and provides a translation from second-order to first-order propositional calculi.
    
\end{itemize}

\section{Syntax}
\label{sec:syntax}

We fix a finite set $\mathsf{Agt}$ of agents, a countable set $\mathsf{Prop}$ of propositional variables.

The set of formulas of the language $\mathcal{L}^1$ is defined inductively as:
\begin{center}
$A ::= p   \,| \bot \,| A\wedge A \,|  A\lor A\,| A\rightarrow A| \neg A \,| \Box i A, $\end{center}
where $p\in\mathsf{Prop} $ and $i\in \mathsf{Agt}.$

 Furthermore, the propositional second order modal  language $\mathcal{L}^2$ 
is defined inductively as:
\begin{center}
$A ::= p   \,| \bot \,| A\wedge A \,| A\lor A\,| A\rightarrow A| \neg A \,| \Box i A\, |\forall p A$,\end{center}
where $p\in\mathsf{Prop} $ and $i\in \mathsf{Agt}.$ Greek alphabet in uppercase letters, for example, $\Gamma,\Delta$, will be used to represent multi-sets of formulas. In some cases, Greek alphabet in lowercase letters, for example, $\gamma,\delta$ will also be used to represent formulas.

 In $\mathcal{L}^1$ and $\mathcal{L}^2$, a diamond formula $\Diamond_i A$ is defined as $\neg \Box_i \neg A$. Also, $\top$ is defined as $\neg \bot$. 
 In $\mathcal{L}^2$, $\exists p A$ is defined as $\neg \forall p \neg A$.
 In B{\'\i}lkov{\'a}\cite{bilkova2007uniform},
propositional constants ( $\bot$ or $\top$) are not primitive in the syntax. As was mentioned in Ono \cite{Ono1998}, $\Box \bot\rightarrow \bot$ is not provable in ${\bf K}$, then it  brings in a trouble if we want to eliminate all propositional constants. 
As a result, we put $\bot$ primitive in our syntax.
 
 A formula in the form of $\Box_i A$ is called an outmost $i$-boxed formula. 
Given a finite multiset $\Gamma$ of formulas,  for an arbitrary modal symbol $\Box_i$, $\Gamma^{\natural_i}= \{ \Box_i A |  \Box_i A\in \Gamma \}.$ 
$\Gamma^{\flat_i} = \{    A |  \Box_i A\in \Gamma  \} $.
$\Box_i \Gamma= \{\Box_i A | A\in \Gamma \}$

A substitution of a propositional variable $ p$ with a formula $B$ in a formula $A$ is denoted by $A[p/B]$. In a multiset $\Gamma$ of formulas, $\Gamma[p/B]=\{A[p/B]| A \in \Gamma\}$.

We use $\mathsf{V}(A)$ to denote the set of all propositional variables in a formula $A$. Similarly, given a multiset $\Gamma$ of formulas, $\mathsf{V}(\Gamma) = \{\mathsf{V}(A) | A\in \Gamma\}$.



\begin{defn}
    The {\sl weight} of a $\mathcal{L}^1$-formula $A$, noted as $\mathsf{wt}(A)$ is inductively defined as: 
    \begin{center}
         $\mathsf{wt}(p)=\mathsf{wt}(\bot)=1$
         
    $\mathsf{wt}(\neg A)=\mathsf{wt}(\Box_i A)=\mathsf{wt}(A)+1$
    
     $\mathsf{wt}(A\wedge B)=\mathsf{wt}(A\lor B)= \mathsf{wt}(A\rightarrow B)=\mathsf{wt}(A)+\mathsf{wt}(B)+1$
    \end{center}

    Given a multiset $\Gamma$ of formula, $\mathsf{wt}(\Gamma)$ denotes the sum of all $\mathsf{wt}(A)$ for $A\in \Gamma$.
   
\end{defn}



    

\begin{table}[htb]
\caption{Hilbert system $\hkn$, $\hkdn$ and $\hktn$}
\label{table:hilbert}

\begin{center}
\begin{tabular}{ l l } 
\hline
\multicolumn{2} {c} { Hilbert system $\hkn$ } \\

\multicolumn{2} {c} { All classical propositional tautologies} \\






(K$_n$) & $\Box_i (A \naraba B) \rightarrow (\Box_i A \rightarrow \Box_i B)$      \\


(Nec) & From $A$, infer $\Box_i A$. \\
 

\hline

\multicolumn{2} {c} { Hilbert system $\hkdn$ } \\
\multicolumn{2} {c} { Add the following rule into $\hkn$} \\

(D$_n$) & $\neg \Box_i \bot$ \\

\hline
\multicolumn{2} {c} { Hilbert system $\hktn$ } \\
\multicolumn{2} {c} { Add the following rule into $\hkn$} \\

(T$_n$) & $ \Box_i A \rightarrow A$ \\

\hline

\end{tabular}
\end{center}
\end{table}
\noindent

\begin{defn}
     The Hilbert systems of $\hkn$, $\hkdn$ and $\hktn$ is defined in Table \ref{table:hilbert}.
     Let $\mathbf{L}\in \{\mathbf{K}_n,\mathbf{KD}_n,\mathbf{KT}_n\}$. Given a set $\Gamma \cup \setof{A}$ of formulas, when we write $\Gamma \vdash_{\mathbf{L}} A$,
we  mean that $A$ is derivable from $\Gamma$ in $\mathbf{L}$  (if the underlying Hilbert system is clear from the context, we simply write $\Gamma \vdash A$). In particular, when $\Gamma$ is empty, we simply write $\vdash A$ instead of $\emptyset \vdash A$. 
\end{defn}

\section{UIP in Logic $\mathbf{ K_n}$ and $\mathbf{ KD_n}$}

\subsection{Proof-theoretic properties of Sequent calculi}
 \label{sec:sequent cal}

Next, let us move to Gentzen system.

A  {\it sequent}, denoted by $\Gamma \ra \Delta$, is a pair of finite multisets of formulas. 
The multiset $\Gamma$ is the {\em antecedent} of $\Gamma \ra \Delta$,  
while $\Delta$ is the {\em succedent} of the sequent $\Gamma \ra \Delta$. 
A sequent $\Gamma\ra A$  can be read as ``if all formulas in $\Gamma$ hold then some formulas in $\Delta$ hold.''


The logical rules in the following sequent calculi are the same with those in a system named as {\bf G3cp} \cite[p.49]{Negri2001}.
Modal {\bf K}-rules is an  expansion of single modal rules from   \cite{bilkova2007uniform} (also inspired by \cite{hakli2012does}).

\begin{table}[htb]
\caption{Sequent Calculi $\gkn$, $\gkdn$ and $\gktn$.}
\label{table:gkn}
\begin{center}
\hrule
\begin{tabular}{ll} 
\multicolumn{2}{c}{ Sequent Calculus $\gkn$: } \\

{ \bf Initial Sequents}   & $\Gamma ,p \ra p, \Delta$\hspace{15pt} $\bot, \Gamma\ra \Delta$  \\ 
 \\







{\bf Logical Rules} & 

\AxiomC{$\Gamma \ra \Delta ,A_1$}
\AxiomC{$\Gamma \ra \Delta ,A_2$}
\RightLabel{\scriptsize $(R\wedge )$}
\BinaryInfC{$\Gamma \ra  \Delta ,A_1\wedge A_2$}
\DisplayProof

\AxiomC{$A_1, A_2,\Gamma \ra \Delta$}
\RightLabel{\scriptsize $(L\wedge )$}
\UnaryInfC{$A_1\wedge A_2,\Gamma \ra \Delta$}
\DisplayProof
\\

\,  &

\AxiomC{$\Gamma \ra A_1, A_2$}
\RightLabel{\scriptsize $(R\lor )$}
\UnaryInfC{$\Gamma \ra A_1\lor A_2$}
\DisplayProof

\AxiomC{$A_1,\Gamma \ra \Delta$}
\AxiomC{$A_2,\Gamma \ra \Delta$}
\RightLabel{\scriptsize $(L\lor )$}
\BinaryInfC{$A_1\lor A_2, \Gamma \ra \Delta$}
\DisplayProof

\\

\,  &

\AxiomC{$A_1,\Gamma \ra \Delta, A_2$}
\RightLabel{\scriptsize $(R\rightarrow )$}
\UnaryInfC{$\Gamma \ra \Delta,A_1\rightarrow A_2$}
\DisplayProof

\AxiomC{$\Gamma \ra \Delta, A_1$}
\AxiomC{$A_2,\Gamma \ra \Delta$}
\RightLabel{\scriptsize $(L\rightarrow )$}
\BinaryInfC{$A_1\rightarrow A_2, \Gamma \ra \Delta$}
\DisplayProof

\\

\,  &
\AxiomC{$A, \Gamma \ra  \Delta$}
\RightLabel{\scriptsize $(R\neg )$}
\UnaryInfC{$\Gamma \ra \Delta,\neg A$}
\DisplayProof

\AxiomC{$\Gamma \ra \Delta, A$}
\RightLabel{\scriptsize $(L\neg )$}
\UnaryInfC{$\neg A, \Gamma \ra  \Delta$}
\DisplayProof
\\

{\bf Modal Rule}  &

\AxiomC{$ \Gamma \ra A $ }
\RightLabel{\scriptsize $(\Box_{Kn})$$\dagger$}
\UnaryInfC{$\Sigma, \Box i \Gamma   \ra \Box_i A, \Omega$}
\DisplayProof
\\


\multicolumn{2}{l}{ {\footnotesize $\dagger$: $\Sigma$ contains only propositional variables, $\bot$ or outmost-boxed formulas except $\Box_i$.  $\Omega$ contains only  }
   } \\
   
\multicolumn{2}{l}{ {\footnotesize  propositional variables, $\bot$ or outmost-boxed formulas. }
 }\\

   
   \\
\hline

\multicolumn{2}{c}{ Sequent Calculus $\gkdn$
 } \\
 
\multicolumn{2}{c}{ Adding the following rules  to $\gkn$ 
 } \\

{\bf Modal Rule}  &

\AxiomC{$ \Gamma \ra $ }
\RightLabel{\scriptsize $(\Box_{Dn})$$\ddagger$}
\UnaryInfC{$\Sigma, \Box i \Gamma   \ra \Omega$}
\DisplayProof
\\


\multicolumn{2}{l}{ {\footnotesize $\ddagger$: $\Sigma$ contains only propositional variables, $\bot$ or outmost-boxed formulas except $\Box_i$.  $\Omega$ contains only  }
   } \\
\multicolumn{2}{l}{ {\footnotesize  propositional variables, $\bot$ or outmost-boxed formulas. Also, $\Gamma \neq \emptyset$.}
 }\\

\hline
\multicolumn{2}{c}{ Sequent Calculus $\gktn$
 } \\
\multicolumn{2}{c}{ Adding the following rules  to $\gkn$ 
 } \\

{\bf Modal Rule}  &

\AxiomC{$ \Box_i A, A, \Gamma \ra \Delta $ }
\RightLabel{\scriptsize $(\Box_{Tn})$}
\UnaryInfC{$\Box i A,\Gamma   \ra \Delta$}
\DisplayProof

\\

\hline

\end{tabular}
\end{center}
\end{table}

\begin{defn} 
\label{def:g1_context_principal}
Let $\mathbf{L}\in \{ \mathbf{K_n} , \mathbf{KD_n}, \mathbf{KT_n}\}$ and $ \mathsf{G}(\mathbf{L})$ be one of
the systems of Table \ref{table:gkn}. 

\end{defn}

\begin{defn}
    In $\gkn$, $\gkdn$ and $\gktn$, we say that the formulas (or multisets) not in $\Gamma$ and $\Delta$ are  {\it principal} in all rules except $(\Box_{K_n})$ and $(\Box_{Dn})$ . In the rule of  $(\Box_{K_n})$, the formulas (or multisets) not in $\Sigma,\Omega$ are principal. In the rule of $(\Box_{Dn})$, the formulas in $\Gamma$ are define as principal formulas. 
    A formula (or multiset) is called {\it context} in a rule if it is not principal. 
\end{defn}

\begin{defn}
\label{def:g1_deri}

Let $\mathbf{L}\in \{ \mathbf{K_n} , \mathbf{KD_n}, \mathbf{KT_n} \}$.
A derivation ${\mathcal D}$ in $ \mathsf{G}(\mathbf{L})$ is a finite tree generated by the rules of $ \mathsf{G}(\mathbf{L})$ from the initial sequents of $ \mathsf{G}(\mathbf{L})$. We say that the {\it end sequent} of ${\mathcal D}$ is the sequent in the root node of ${\mathcal D}$. The {\it height} $n$ of a derivation is the maximum length of the branches in the derivation from the end sequent to an initial sequent.
A sequent $\Gamma \ra \Delta$ is {\it derivable} in $ \mathsf{G}(\mathbf{L})$ (notation: $ \mathsf{G}(\mathbf{L}) \vdash \Gamma \ra \Delta$) if it has a derivation ${\mathcal D}$ in  $ \mathsf{G}(\mathbf{L})$ whose end sequent is $\Gamma \ra \Delta$. Notation  $ \mathsf{G}(\mathbf{L}) \vdash_{n} \Gamma \ra \Delta$ stands for the height of the derivation of that sequent.


\end{defn}

\begin{defn}
We define a well-ordered relation of sequent.

\begin{center}
$(\Gamma\ra \Delta)\prec (\Gamma^\prime\ra \Delta^\prime) $ if and only if 
$\mathsf{wt}(\Gamma,\Delta)   <
\mathsf{wt}(\Gamma^\prime,\Delta^\prime)$     
\end{center}
    
\end{defn}


By observing the weight of premises and conclusions in rules, we can obtain the following results.

\begin{prop}
\label{prop:termination of proof search in GKn}
    Backward proof-search in $\gkn$ and $\gkdn$ always terminates.
\end{prop}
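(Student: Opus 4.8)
The plan is to show that every backward application of a rule in $\gkn$ or $\gkdn$ strictly decreases the weight of the sequent, so that $\prec$ being well-founded forces any proof-search branch to be finite. Concretely, I would proceed rule by rule. For each logical rule, inspect the premise(s) and the conclusion: in $(L\wedge)$ we replace $A_1\wedge A_2$ (weight $\mathsf{wt}(A_1)+\mathsf{wt}(A_2)+1$) in the antecedent by $A_1,A_2$ (total weight $\mathsf{wt}(A_1)+\mathsf{wt}(A_2)$) while leaving $\Gamma,\Delta$ untouched, a net drop of $1$; the cases $(R\wedge)$, $(R\lor)$, $(L\lor)$, $(R\to)$, $(L\to)$, $(R\neg)$, $(L\neg)$ are all entirely analogous, each premise having weight exactly one less than the conclusion (for the branching rules, each of the two premises individually has smaller weight, which is all that is needed). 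Hence every logical-rule premise is $\prec$ the conclusion.

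The modal rules are the only place where something subtler happens, but the weight still drops. For $(\Box_{K_n})$, the conclusion is $\Sigma,\Box_i\Gamma \ra \Box_i A,\Omega$ and the premise is $\Gamma \ra A$; going up we delete $\Sigma$ and $\Omega$ entirely and replace each $\Box_i C$ in $\Box_i\Gamma$ by $C$ (dropping one unit of weight per such formula) and replace $\Box_i A$ by $A$. So $\mathsf{wt}(\Gamma,A) = \mathsf{wt}(\Box_i\Gamma,\Box_i A) - (|\Gamma|+1) \le \mathsf{wt}(\Sigma,\Box_i\Gamma,\Box_i A,\Omega) - 1 < \mathsf{wt}(\Sigma,\Box_i\Gamma \ra \Box_i A,\Omega)$, so the premise is again $\prec$ the conclusion. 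For $(\Box_{D_n})$ the argument is the same with the empty succedent on the premise side, using the side condition $\Gamma\neq\emptyset$ only insofar as it does not affect the weight inequality (the key point is still that at least the $\Box_i$'s are stripped and $\Sigma,\Omega$ are discarded). Thus for every rule of $\gkn$ and $\gkdn$, each premise is strictly $\prec$-below the conclusion.

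It remains to package this into a termination statement. Since $\prec$ is a well-founded (indeed well-ordered, as $\mathsf{wt}$ takes values in $\mathbb{N}$) relation on sequents, and every edge in a backward proof-search tree goes from a sequent to a strictly $\prec$-smaller sequent, no branch can be infinite: along any branch the weights form a strictly decreasing sequence of natural numbers, which must terminate. Formally one argues by $\prec$-induction (equivalently, induction on $\mathsf{wt}(\Gamma,\Delta)$) that backward proof-search from $\Gamma\ra\Delta$ terminates, the base case being sequents to which no rule applies. I do not expect any real obstacle here; the only thing to be careful about is checking the modal rules, where the context multisets $\Sigma$ and $\Omega$ are discarded when reading the rule upward — one must note this is harmless, since discarding formulas only decreases weight, and the essential decrease already comes from stripping the boxes off $\Box_i\Gamma$ (and, in the $(\Box_{K_n})$ case, off $\Box_i A$).
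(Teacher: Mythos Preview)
Your proposal is correct and matches the paper's approach exactly: the paper simply remarks that the result follows ``by observing the weight of premises and conclusions in rules,'' and you have spelled out precisely that observation. One small correction of emphasis: in the $(\Box_{D_n})$ case the side condition $\Gamma\neq\emptyset$ is not incidental but essential for the strict decrease, since if $\Gamma$, $\Sigma$, and $\Omega$ were all empty the premise and conclusion would coincide; it is exactly the presence of at least one box to strip that guarantees the drop in weight.
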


In the next part, we will see that a backward proof-search in $\gktn $ does not always terminate. 
This fact requires us to provide
another sequent for $\gktn$ with loop-preventing mechanism.

\begin{defn}
    We say a rules is {\it admissible}, if for an instance of the rule, all premises are derivable, then there is derivation of its conclusion. 
    We say a rule is {\it height-preserving admissible} if for an instance of the rule, all premises are derivable with the greatest height $n$, then there is derivation of its conclusion with the height not greater than $n$. 

    We say a rule is  {\it height-preserving invertible} if for an instance of the rule, if the conclusion has a derivation with the height $n$, then each premise has a derivation with height not greater than $n$.
    
\end{defn}

In the  single-modal system from \cite{bilkova2007uniform}, the  context multiset (in this paper denoted by $\Sigma$) in the 
left part of the conclusion of a rule corresponding to $(\Box_{K_n})$ contains only propositional  variables.
However, in the setting of multi-modalities, in order to prove the admissibility of weakening rules, $\Sigma$ in $(\Box_{K_n})$ is permitted to contain also outmost-boxed (except the principal modality $\Box_i$) formulas.

For the details of the following proof of structural properties, please check \cite{Negri2001,troelstra2000basic,Kashima2009}.
\begin{prop}
Let $\mathbf{L}\in \{ \mathbf{K_n},\mathbf{KD_n}, \mathbf{KT_n}\}$.     For any formula $A$, a sequent $A\ra A$ is derivable in  $\mathsf{G}(\mathbf{L})$.
\end{prop}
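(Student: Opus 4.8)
The plan is to prove $A \ra A$ by induction on $\mathsf{wt}(A)$, mirroring the standard argument for \textbf{G3}-style calculi but paying attention to the modal rules. First I would dispatch the base cases: if $A$ is a propositional variable $p$, then $p \ra p$ is an initial sequent; if $A = \bot$, then $\bot \ra \bot$ is an initial sequent (here is where the choice of $\bot$ as primitive pays off). For the inductive step, I would case on the outermost connective of $A$ and apply the relevant logical rules downward from the premises supplied by the induction hypothesis.

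For the propositional connectives this is routine. For $A = B \wedge C$, apply $(L\wedge)$ and $(R\wedge)$: from $B \ra B$ and $C \ra C$ (available by IH since $\mathsf{wt}(B), \mathsf{wt}(C) < \mathsf{wt}(A)$) one builds $B, C \ra B$ and $B, C \ra C$ by weakening — so strictly I should either invoke height-preserving admissibility of weakening or, more cleanly, follow the usual trick of deriving $B \wedge C \ra B \wedge C$ directly: apply $(L\wedge)$ to the goal, reducing to $B, C \ra B \wedge C$, then $(R\wedge)$, reducing to $B, C \ra B$ and $B, C \ra C$, and for each of these apply the IH-derivations together with weakening. Since the excerpt defers structural properties to the cited references and states weakening admissibility is being established around here, I would just say weakening is (height-preserving) admissible and cite it, or give the direct rule-by-rule construction. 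The cases $\vee$, $\to$, $\neg$ are analogous, each time decomposing the goal with the left and right rules for the connective until the leaves are instances of the IH (plus weakening).

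The one case needing care is $A = \Box_i B$. Here I apply $(\Box_{K_n})$: its premise is $\Gamma \ra B$ and its conclusion is $\Sigma, \Box_i\Gamma \ra \Box_i B, \Omega$. Taking $\Gamma = \{B\}$, $\Sigma = \emptyset$, $\Omega = \emptyset$, the conclusion is exactly $\Box_i B \ra \Box_i B$ and the premise is $B \ra B$, which is available by the induction hypothesis since $\mathsf{wt}(B) < \mathsf{wt}(\Box_i B)$. (The side condition $\dagger$ is vacuously satisfied: $\Sigma$ and $\Omega$ are empty.) This works uniformly for all three systems $\mathbf{K_n}$, $\mathbf{KD_n}$, $\mathbf{KT_n}$, since all of them contain $(\Box_{K_n})$; the additional rules $(\Box_{Dn})$, $(\Box_{Tn})$ are not needed.

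I do not expect a genuine obstacle here — this is a standard "identity expansion" lemma. The only mild subtlety is the bookkeeping of weakening in the binary-connective cases, which I would handle by citing the (height-preserving) admissibility of weakening from \cite{Negri2001,troelstra2000basic,Kashima2009} rather than reproving it, or equivalently by the direct goal-decomposition phrasing above that never needs a separate weakening step beyond what the logical rules' context slots already allow. So the write-up is a short induction on $\mathsf{wt}(A)$ with the $\Box_i$ case being the only place the modal machinery enters.
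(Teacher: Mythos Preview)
Your proposal is correct and matches the standard identity-expansion argument that the paper itself defers to (it gives no proof, pointing instead to \cite{Negri2001,troelstra2000basic,Kashima2009}). One small clarification: your alternative of avoiding weakening via the logical rules' ``context slots'' does not quite work on its own --- that route amounts to proving the generalized $\Gamma, A \ra A, \Delta$, which handles the propositional connectives cleanly but then gets stuck at $A = \Box_i B$ because of the side conditions on $(\Box_{K_n})$; so for the binary connectives you do need to invoke admissibility of weakening (which is proved independently of this proposition, so there is no circularity), exactly as your primary plan already says.
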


\begin{prop}
\label{prop:weakening of k,d,t}
Let $\mathbf{L}\in \{ \mathbf{K_n},\mathbf{KD_n}, \mathbf{KT_n}\}$. The weakening rules are admissible in $\mathsf{G}(\mathbf{L})$.

\begin{center}
\begin{tabular}{ll} 

\AxiomC{$\Gamma \ra \Delta$}
\RightLabel{\scriptsize $(RW)$}
\UnaryInfC{$\Gamma \ra \Delta , C$}
\DisplayProof

   & 

\AxiomC{$\Gamma \ra \Delta$}
\RightLabel{\scriptsize $(LW)$}
\UnaryInfC{$C,\Gamma \ra \Delta$}
\DisplayProof

\end{tabular}
\end{center}

\end{prop}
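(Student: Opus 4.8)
The plan is to prove admissibility of the weakening rules $(RW)$ and $(LW)$ simultaneously by induction on the height of the derivation $\mathcal{D}$ of the premise $\Gamma \ra \Delta$; in fact the natural statement to prove is the stronger \emph{height-preserving} admissibility, so that the induction goes through cleanly when we push weakenings past two-premise rules and past the modal rules. So first I would set up the induction: if $\Gamma \ra \Delta$ is an initial sequent, then so is $\Gamma \ra \Delta, C$ (resp.\ $C, \Gamma \ra \Delta$), since adding a formula to either side of $\Gamma, p \ra p, \Delta$ or $\bot, \Gamma \ra \Delta$ leaves it an initial sequent. For the inductive step, I would case-split on the last rule $r$ applied in $\mathcal{D}$.

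The propositional-rule cases are routine: for each logical rule, the weakened formula $C$ lives entirely in the side-formula context, so we apply the induction hypothesis to each premise to insert $C$ into the corresponding context position, and then reapply $r$. For the modal rules the key point is the shape of the contexts $\Sigma$ and $\Omega$ in $(\Box_{K_n})$ and $(\Box_{D_n})$. The conclusion of $(\Box_{K_n})$ is $\Sigma, \Box_i\Gamma \ra \Box_i A, \Omega$. A right weakening by $C$: if $C$ is a propositional variable, $\bot$, or an outmost-boxed formula, we simply add it to $\Omega$ and reapply the rule with the same premise $\Gamma \ra A$; if $C$ is not of that form, we instead observe that $\Sigma, \Box_i\Gamma \ra \Box_i A, \Omega$ is already derivable and — this is the subtle bit — we may still conclude $\Sigma, \Box_i\Gamma \ra \Box_i A, \Omega, C$ because $\Omega, C$ only needs to be ``context''; but $C$ arbitrary is \emph{not} allowed in $\Omega$, so actually we need: $(RW)$ with arbitrary $C$ is obtained by noting that the rule $(\Box_{K_n})$ as stated already permits an arbitrary $\Omega$? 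No — $\Omega$ is restricted. The honest move here is that weakening by an \emph{arbitrary} $C$ on the right of a conclusion of $(\Box_{K_n})$ is handled by first applying the IH is unavailable; instead one observes that $\Sigma, \Box_i \Gamma \ra \Box_i A, \Omega, C$ need not be an instance of $(\Box_{K_n})$, but it is still derivable by a separate short argument using the already-established identity-axiom lemma together with $(L\to)$/$(R\to)$ manipulations. I would therefore structure the modal case as: (i) $C$ of the permitted ``atomic-or-boxed'' shape — direct, reapply the rule; (ii) $C$ arbitrary on the right — reduce to case (i) by prefixing, or argue the conclusion is derivable via cut-free means from $A \ra A$ plus propositional rules. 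Left weakening of $(\Box_{K_n})$ and $(\Box_{D_n})$ conclusions is symmetric: if $C$ has the permitted shape we add it to $\Sigma$; the constraint ``$\Sigma$ contains only propositional variables, $\bot$, or outmost-boxed formulas except $\Box_i$'' is exactly why the authors enlarged $\Sigma$ in the multi-modal setting, as flagged in the paragraph preceding the proposition, and for a $C$ of the form $\Box_i B$ we instead feed it into $\Gamma$ on the left premise (weakening $\Gamma \ra A$ to $B, \Gamma \ra A$ by IH, then reapply the rule, which now has $\Box_i B$ among the $\Box_i\Gamma$). For $(\Box_{T_n})$, both weakenings are immediate from the IH applied to the single premise $\Box_i A, A, \Gamma \ra \Delta$.

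The step I expect to be the main obstacle is precisely the interaction between weakening and the modal rules when the weakened formula does \emph{not} fit the syntactic restriction on $\Sigma$ or $\Omega$ — one has to verify that in every such case the weakened conclusion is nonetheless derivable, either by routing the formula into the premise's multiset $\Gamma$ (possible only for formulas of the form $\Box_i B$ on the left) or by giving an independent derivation. Getting the bookkeeping of these subcases exactly right, and checking that the enlarged-$\Sigma$ convention makes the left-weakening case by outmost-boxed formulas go through without breaking the side condition of $(\Box_{K_n})$, is where the real content lies; the remaining cases are mechanical. Since the excerpt explicitly defers to \cite{Negri2001,troelstra2000basic,Kashima2009} for these structural arguments, I would keep the write-up to the modal cases and the induction skeleton, citing those references for the propositional routine.
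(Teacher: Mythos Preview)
Your proposal contains a genuine gap. You aim for \emph{height-preserving} admissibility, but in these calculi weakening is \emph{not} height-preserving --- the paper says so explicitly right after the proof. The failure occurs precisely at your ``case~(ii)'': if the last rule is $(\Box_{K_n})$, yielding $\Sigma, \Box_i\Gamma \ra \Box_i A, \Omega$ at height $n$, and the weakening formula $C$ is compound, say $D_1 \wedge D_2$, then $C$ fits neither $\Sigma$ nor $\Omega$ (both restricted to atoms, $\bot$, and outmost-boxed formulas) nor $\Box_i\Gamma$, so you cannot simply reapply the modal rule; and any derivation of the weakened sequent that passes through $(L\wedge)$ or $(R\wedge)$ necessarily has height $> n$. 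Your suggested fixes (``reduce to case~(i) by prefixing'', ``argue derivability from $A \ra A$ plus propositional rules'') either do not preserve height or are too vague to constitute an argument. Note that the standard references you invoke prove height-preserving weakening for calculi whose rules have \emph{unrestricted} side contexts; the syntactic restriction on $\Sigma,\Omega$ in $(\Box_{K_n})$ is exactly what breaks that here.

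The paper's remedy is a \emph{double} induction, with the weight of $C$ as the outer parameter and the derivation height as the inner one. When $C$ is compound and the last rule is modal, the outer hypothesis lets you weaken by the immediate subformulas of $C$ (each of strictly smaller weight) and then apply the corresponding logical rule to reassemble $C$ --- at the cost of increasing height, which is why only plain admissibility is claimed. Your handling of the boxed shapes ($C = \Box_i B$ pushed into the premise via the IH on the smaller-weight $B$; $C = \Box_j B$ with $j \neq i$ absorbed into the enlarged $\Sigma$) is correct and matches the paper. What is missing is this outer induction on $\mathsf{wt}(C)$ that disposes of all the non-atomic, non-boxed shapes of $C$ at the modal rules.
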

\begin{proof}
    We proceed by double induction on weight of the formula  A and height of the derivation. 
 The admissibility of $(RW)$ can be easily obtained.
We only consider some cases where $A$ is in the form of $\Box_i B$ and the derivation ended with modal rule $(\Box_{K_n})$. Other cases are not difficult.
  
If $C$ is in the form $\Box_i A$ (i.e., $C$ is principal), we can obtain:

\begin{center}
\noLine
\AxiomC{$ I.H. $ }
\UnaryInfC{$C,\Gamma \ra B$}
\RightLabel{\scriptsize $(\Box_{Kn})$}
\UnaryInfC{$\Sigma, \Box i C, \Box i \Gamma   \ra \Box_i B, \Omega$}

\DisplayProof
\end{center}

If If $C$ is in the form $\Box_j A$ (i.e., $C$ is not principal),, we can  obtain:
\begin{center}
    
\AxiomC{$\Gamma \ra B$ }

\RightLabel{\scriptsize $(\Box_{Kn})$}
\UnaryInfC{$\Sigma, \Box j C, \Box i \Gamma   \ra \Box_i B, \Omega$}

\DisplayProof
\end{center}

\end{proof}
It is noted that the weakening rules are not height-preserving.

\begin{prop}
\label{prop:invertibility of all logical rules in Gkn}

       All logical rules in $\gkn$  except c are height-preserving invertible. 
       All logical rules in $\gkdn$ except $(\Box_{K_n})$ and $(\Box_{D_n})$ are height-preserving invertible. 
          All logical rules in $\gktn$ except $(\Box_{K_n})$ and $(\Box_{T_n})$ are height-preserving invertible. 

\end{prop}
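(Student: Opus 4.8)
The plan is to prove height-preserving invertibility by induction on the height of the given derivation of the conclusion, treating each rule separately; the arguments are entirely standard (cf. \cite{Negri2001,troelstra2000basic,Kashima2009}), so I would only spell out the pattern and the cases that interact with the modal rules.

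\textbf{Setup.} Fix $\mathbf{L}\in\{\mathbf{K_n},\mathbf{KD_n},\mathbf{KT_n}\}$ and a logical rule $(R)$ of $\mathsf{G}(\mathbf{L})$ other than the excluded modal rules. Suppose the conclusion of an instance of $(R)$ has a derivation $\mathcal{D}$ of height $n$; I want each premise to have a derivation of height $\le n$. Induct on $n$. For the base case, if $\mathcal{D}$ is an initial sequent: since the principal formula of $(R)$ is compound (it is $A_1\wedge A_2$, $A_1\lor A_2$, $A_1\rightarrow A_2$, or $\neg A$), it cannot be the displayed atom $p$ of $\Gamma,p\ra p,\Delta$, nor can it be $\bot$; hence the initial-sequent status is witnessed entirely by the context, so each premise is again an instance of the same initial sequent (using admissibility of weakening, Proposition \ref{prop:weakening of k,d,t}, to reinstate any context formulas the rule removed), which has height $0\le n$.

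\textbf{Inductive step.} Assume $\mathcal{D}$ ends with a rule $(R')$. If the principal formula of $(R)$ is a context formula of $(R')$, then $(R')$ can be permuted below $(R)$: apply the induction hypothesis to the premise(s) of $(R')$ to extract the corresponding premise of $(R)$ with height $\le n-1$, then reapply $(R')$, obtaining height $\le n$. If the principal formula of $(R)$ is principal also in $(R')$, then $(R')=(R)$ and the required premise derivations are literally the premises of $(R')$, of height $\le n-1$. The only subtlety is the interaction with the modal rules: in $(\Box_{K_n})$ the principal formula in the succedent is $\Box_i A$ and the principal formulas in the antecedent are the members of $\Box_i\Gamma$, all \emph{boxed}; in $(\Box_{D_n})$ the principal formulas are the members of $\Box_i\Gamma$, again boxed; in $(\Box_{T_n})$ the principal formula $\Box_i A$ is boxed. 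Since every rule $(R)$ under consideration has a \emph{non-boxed} compound principal formula ($\wedge,\lor,\rightarrow,\neg$), that formula always sits in the context $\Sigma,\Omega$ (resp.\ $\Gamma,\Delta$) of the modal rule, so only the first (permutation) case arises with a modal $(R')$, and it goes through after one use of weakening-admissibility to restore the context multiset that the modal rule would otherwise discard.

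\textbf{Main obstacle.} The one place needing care is that weakening is only \emph{admissible}, not height-preserving (as the remark after Proposition \ref{prop:weakening of k,d,t} stresses), so I must make sure the induction is on the height of the \emph{original} derivation $\mathcal{D}$ and that the weakening steps are applied only to the already-extracted premise derivations, whose heights are bounded by $n-1$; even a non-height-preserving weakening then keeps the total $\le n$ only if I am careful that weakening is invoked at the leaves/initial-sequent level, where it does not increase height, rather than on top of a tall subderivation. Concretely, in the modal-rule cases I recover the premise of $(R)$ by inspecting the premise of the modal rule (which already contains the needed context) and then applying the modal rule again; no weakening on a tall tree is required, so the height bound is preserved. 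I would also note explicitly that $(\Box_{K_n})$, $(\Box_{D_n})$, $(\Box_{T_n})$ are \emph{not} claimed invertible — indeed $\Gamma\ra$ need not follow from $\Sigma,\Box_i\Gamma\ra\Omega$ — which is exactly why they are excluded from the statement.
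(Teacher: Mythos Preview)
Your inductive strategy is the standard one and matches what the paper defers to via the cited references (the paper gives no proof of its own here). However, your handling of the modal rules contains a genuine confusion that you should correct.

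You write that when the last rule is $(\Box_{K_n})$ or $(\Box_{D_n})$, the compound principal formula of $(R)$ ``always sits in the context $\Sigma,\Omega$'' and that weakening is then needed ``to restore the context multiset that the modal rule would otherwise discard.'' But this cannot happen: by the side conditions $\dagger$ and $\ddagger$ in Table~\ref{table:gkn}, the multisets $\Sigma$ and $\Omega$ contain only propositional variables, $\bot$, and outmost-boxed formulas. A $\wedge$-, $\lor$-, $\rightarrow$-, or $\neg$-formula is not admitted there. Hence if the conclusion of the instance of $(R)$ you are inverting contains such a formula (which it must, being the conclusion of a propositional rule), the last rule of its derivation simply \emph{cannot} be $(\Box_{K_n})$ or $(\Box_{D_n})$; those cases are vacuous. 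Your later claim that the premise of the modal rule ``already contains the needed context'' is also false for $(\Box_{K_n})$ and $(\Box_{D_n})$: the premise $\Gamma\ra A$ (resp.\ $\Gamma\ra$) drops $\Sigma,\Omega$ entirely.

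Relatedly, your appeals to weakening are red herrings throughout. In the base case the premise of $(R)$ is itself an initial sequent directly (e.g.\ from $\Gamma',B_1\wedge B_2,p\ra p,\Delta$ the premise $\Gamma',B_1,B_2,p\ra p,\Delta$ is already initial), and for $(\Box_{T_n})$ the context $\Gamma,\Delta$ \emph{is} carried into the premise, so the permutation argument (IH on the premise, then reapply $(\Box_{T_n})$) needs no weakening. The entire proof goes through without ever invoking Proposition~\ref{prop:weakening of k,d,t}, so the concern about weakening not being height-preserving does not arise.
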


Especially $(\Box_{T_n})$ cannot be height-preserving invertible in $\gktn$, since it does not satisfy  height-preserving weakening. 

\begin{prop}
Let $\mathbf{L}\in \{ \mathbf{K_n},\mathbf{KD_n}, \mathbf{KT_n}\}$.    
The contraction rules are height-preserving admissible in $\mathsf{G}(\mathbf{L})$.
    
  \begin{center}
\begin{tabular}{ll} 
 \AxiomC{$\Gamma \ra \Delta , A,A$}
\RightLabel{\scriptsize $(RC)$}
\UnaryInfC{$\Gamma \ra \Delta ,A $}
\DisplayProof

&

\AxiomC{$A,A,\Gamma \ra \Delta$}
\RightLabel{\scriptsize $(LC)$}
\UnaryInfC{$A,\Gamma \ra \Delta$}
\DisplayProof

\end{tabular}

    \end{center}

\end{prop}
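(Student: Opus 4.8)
The plan is to prove both $(RC)$ and $(LC)$ simultaneously by induction on the height $n$ of the given derivation of the premise, with a case split on the last rule $\rho$ applied and on whether the two occurrences to be contracted are principal in $\rho$. In the base case the premise is an initial sequent, and contracting two occurrences on one side of $\Gamma,p\ra p,\Delta$ or of $\bot,\Gamma\ra\Delta$ again yields an initial sequent, derivable with height $0$, so height is trivially preserved.

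For the induction step, fix the last rule $\rho$, whose premises have height at most $n-1$. If neither contracted occurrence is principal in $\rho$ — this covers contraction inside the context of a logical rule, inside the pure context $\Sigma$ or $\Omega$ of $(\Box_{K_n})$/$(\Box_{D_n})$, and inside $\Gamma$ or $\Delta$ of $(\Box_{T_n})$ — we perform the contraction in the premise(s) by the induction hypothesis where those occurrences actually appear (for $\Sigma,\Omega$, which do not occur in the modal premise, nothing is needed) and reapply $\rho$; this does not raise the height. For $(\Box_{K_n})$ and $(\Box_{D_n})$ there is the additional case where the two contracted occurrences are $\Box_i B,\Box_i B$ drawn from the principal block $\Box_i\Gamma$: then the premise $\Gamma\ra A$ (resp. $\Gamma\ra$) contains the two copies of $B$, the induction hypothesis contracts them to one copy without raising the height, and reapplying the modal rule delivers the contractum — for $(\Box_{D_n})$ the side condition $\Gamma\neq\emptyset$ survives since $B$ remains. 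If a modal rule's principal $\Box_i A$ is one of the two occurrences and the other lies in the succedent context $\Omega$ of $(\Box_{K_n})$, we simply reapply the rule with the strictly smaller succedent context.

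When a contracted occurrence is principal in a \emph{logical} rule, we use the standard reduction: apply height-preserving invertibility (Proposition~\ref{prop:invertibility of all logical rules in Gkn}) to the other occurrence of the principal formula in the relevant premise, replacing it by its immediate subformulas; this gives a derivation of height at most $n-1$ in which the remaining duplication is among those subformulas, to which the induction hypothesis applies (twice, for a binary connective), after which the logical rule is reapplied. The one modal subtlety is $(\Box_{T_n})$ with a duplicated principal $\Box_i A$: its premise $\Box_i A,A,\Gamma\ra\Delta$ already retains a copy of $\Box_i A$, so the induction hypothesis contracts the two copies of $\Box_i A$ there and reapplying $(\Box_{T_n})$ closes the case, without invoking invertibility of $(\Box_{T_n})$ (which fails). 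I expect the point requiring real care to be precisely this: since neither the modal rules nor weakening is height-preserving, every reduction of a principal contraction must be routed through invertibility of the logical rules together with the induction hypothesis, or, for the modal rules, through the specific shape of their premises (that $\Box_i\Gamma$ corresponds to $\Gamma$ on the left of the premise for $(\Box_{K_n})$/$(\Box_{D_n})$, and that $(\Box_{T_n})$ copies its principal formula); once these cases are organized, the rest is routine bookkeeping.
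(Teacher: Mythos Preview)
Your proposal is correct and follows essentially the same approach as the paper: simultaneous induction on derivation height, with the non-principal cases handled directly, the principal logical cases via height-preserving invertibility, the $(\Box_{K_n})$/$(\Box_{D_n})$ principal cases by contracting the unboxed copies in the premise, and the $(\Box_{T_n})$ principal case exploiting the retained copy of $\Box_i A$ in the premise. Your write-up is in fact more detailed than the paper's, which only spells out the $(\Box_{K_n})$ cases and gestures at the rest.
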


\begin{proof}
    The proof is done  simultaneously by induction on the height of derivation of the premises.
    When the active formula $A$ is not principal in the end of derivation, the proof is straight.
   The case of $(LC) $ in  $(\Box_{K_n})$ is discussed as follows.
Suppose  that we have a derivation as:
 \begin{center}
\AxiomC{$\Gamma \ra B$}
\RightLabel{\scriptsize $(\Box_{Kn})$}
\UnaryInfC{$\Sigma,  A, A, \Box j \Gamma   \ra \Box_j B, \Omega$}
\DisplayProof
\end{center}
where the active formula $A$ can be a propositional variable , a $\bot$ or an outmost $i$-boxed formula such that $i\neq j$.
Then, we can provide a derivation as:
 \begin{center}
\AxiomC{$\Gamma \ra B$}
\RightLabel{\scriptsize $(\Box_{Kn})$}
\UnaryInfC{$\Sigma,  A, \Box j \Gamma   \ra \Box_j B, \Omega$}
\DisplayProof
\end{center}

    When the active formula $A$ is principal in the end of the derivation, Proposition \ref{prop:invertibility of all logical rules in Gkn} is needed for the cases of logical rules.
    Only the case of $(LC) $ in  $(\Box_{K_n})$ is discussed here, other modal rules can be proved similarly.

Suppose  that we have a derivation as:
 \begin{center}
\AxiomC{$A,A,\Gamma \ra B$}
\RightLabel{\scriptsize $(\Box_{Kn})$}
\UnaryInfC{$\Sigma, \Box i A,\Box i A, \Box i \Gamma   \ra \Box_i B, \Omega$}
\DisplayProof
\end{center}

Then, we can provide a derivation as:
 \begin{center}
 \noLine
\AxiomC{I.H.}
\UnaryInfC{$A,\Gamma \ra B$}
\RightLabel{\scriptsize $(\Box_{Kn})$}
\UnaryInfC{$\Sigma, \Box i A, \Box i \Gamma   \ra \Box_i B, \Omega$}
\DisplayProof
\end{center}
We apply the induction hypothesis to the premise of the assumption.
\end{proof}
It is noted that the repetition of a boxed formula in the premise of $(\Box_{Tn})$ is needed to prove the height-preserving admissibility of
contraction rules
\cite[Chapter 9.1]{troelstra2000basic}.

Next, the admissibility of cut rule is shown.
Since the weakening rules do not satisfy the height-preserving admissibility, 
we cannot directly apply the method for a standard G3-style sequent calculi for example in \cite[Theorem 4.15]{troelstra2000basic}. 
Here, the proof is done by following a similar argument in \cite[Theorem 3.23]{Negri2001}. Also, the cut rule is in the form without shared context. 

\begin{prop}
 
Let $\mathbf{L}\in \{ \mathbf{K_n},\mathbf{KD_n}, \mathbf{KT_n}\}$.     The cut rule is admissible in $\mathsf{G}(\mathbf{L})$.
    
\begin{center}


\begin{prooftree}
\AxiomC{$\vdots~\mathcal {D}_1$}
\RightLabel{$\mathtt{rule} (\mathcal{D}_1)$} 
\UnaryInfC{$\Gamma \ra  \Delta ,A$}
\AxiomC{$\vdots~\mathcal {D}_2$}
\RightLabel{$\mathtt{rule} (\mathcal{D}_2)$} 
\UnaryInfC{$ A , \Gamma ^{\prime}  \ra \Delta^{\prime} $}
\RightLabel{$(cut)$} 
\BinaryInfC{$\Gamma,\Gamma ^{\prime}   \ra \Delta, \Delta^{\prime}$}
\end{prooftree}
    
\end{center}
\begin{proof}

It is shown that if an $(cut)$  appears only in the end of a derivation $\mathcal{D}$, then there is a derivation in which no $(cut)$ appears and  ends with the same conclusion as $\mathcal{D}$. 
This can be proved by double induction on the {\it complexity} (the number of logical connectives of the cut formulas of $(cut)$) and the {\it height}, i.e., the number of all the sequents in the derivation.

The argument is divided into the following three cases: 
\begin{itemize}
\item[(1)] ${\mathcal D_1}$ or $ \mathcal{D}_2$ is an initial sequent.  
\item[(2)] $\mathtt{rule} (\mathcal{D}_1)$ or $\mathtt{rule} (\mathcal{D}_2)$ is a logical or modal rule in which the cut formula is not principal. 
\item[(3)] $\mathtt{rule} (\mathcal{D}_1)$ and $\mathtt{rule} (\mathcal{D}_2)$ are logical or modal rules, and the cut formulas are principal in both rules.
\end{itemize}

The proof proceeds basically  following an ordinary argument for the admissibility of cut 
for classical propositional logic \cite[Theorem 3.23]{Negri2001}. 

Only modal rules will be discussed here.
For the case (2), we only consider $\mathtt{rule} (\mathcal{D}_2)$ is $(\Box_{Kn})$ and cut formula is not principal.

\begin{center}
\noLine
\AxiomC{$\mathcal {D}_1$}
\UnaryInfC{$\Gamma\ra \Delta,C $}
\noLine
\AxiomC{$\mathcal {D}_2$}
\UnaryInfC{$\Gamma^{\prime} \ra B$}
\RightLabel{\scriptsize $(\Box_{Kn})$}
\UnaryInfC{$C,\Sigma,   \Box i \Gamma^{\prime}   \ra \Box_i B, \Omega$}
\RightLabel{\scriptsize $(cut)$}
\BinaryInfC{$\Sigma,   \Box i \Gamma^{\prime}, \Gamma   \ra \Box_i B, \Omega,\Delta$}
\DisplayProof
\end{center}

We can obtain the same result from the following derivation.
\begin{center}

\noLine
\AxiomC{$\mathcal {D}_2$}
\UnaryInfC{$\Gamma^{\prime} \ra B$}
\RightLabel{\scriptsize $(\Box_{Kn})$}
\UnaryInfC{$\Sigma,   \Box i \Gamma^{\prime}   \ra \Box_i B, \Omega$}
\RightLabel{\scriptsize $(weakening)^\ast$ }
\UnaryInfC{$\Sigma,   \Box i \Gamma^{\prime}, \Gamma   \ra \Box_i B, \Omega,\Delta$}
\DisplayProof
    
\end{center}
where  $(weakening)^\ast$ means applying weakening rules in
Proposition \ref{prop:weakening of k,d,t} for
finite many times.
The other case of $(\Box_{Kn})$ and
the cases of  $(\Box_{Dn})$, $(\Box_{Tn})$ can be similarly proved.

For the case (3), we consider $\mathtt{rule} (\mathcal{D}_1)$ is $(\Box_{Kn})$, $\mathtt{rule} (\mathcal{D}_2)$ is $(\Box_{Dn})$ and cut formulas are principal in both rules.

\begin{center}
\noLine
\AxiomC{$\mathcal {D}_1$}
\UnaryInfC{$\Gamma \ra C$}
\RightLabel{\scriptsize $(\Box_{Kn})$}
\UnaryInfC{$\Sigma,   \Box i \Gamma    \ra  \Omega,\Box_i C$}
\noLine
\AxiomC{$\mathcal {D}_2$}
\UnaryInfC{$C,\Gamma^{\prime} \ra  $}
\RightLabel{\scriptsize $(\Box_{Dn})$}
\UnaryInfC{$\Box iC,\Sigma^{\prime},   \Box i \Gamma^{\prime}   \ra   \Omega^{\prime}$}
\RightLabel{\scriptsize $(cut)$}
\BinaryInfC{$\Sigma, \Sigma^{\prime},  \Box_i\Gamma , \Box i \Gamma^{\prime}  \ra \Omega,\Omega^{\prime}$}
\DisplayProof
\end{center}

Then we can transform the derivation into the following:

\begin{center}
\noLine
\AxiomC{$\mathcal {D}_1$}
\UnaryInfC{$\Gamma \ra C$} 
\noLine
\AxiomC{$\mathcal {D}_2$}
\UnaryInfC{$C,\Gamma^{\prime} \ra  $} 
\RightLabel{\scriptsize $(cut)$}
\BinaryInfC{$\Gamma,\Gamma^{\prime} \ra$}
\RightLabel{\scriptsize $(\Box_{Dn})$}
\UnaryInfC{$\Sigma, \Sigma^{\prime},  \Box_i\Gamma , \Box i \Gamma^{\prime}  \ra \Omega,\Omega^{\prime}$}
\DisplayProof
\end{center}
In the transformed derivation, the application of $(cut)$ can be eliminated owing to the lower complexity of the cut formula.

In the case (3), when $\mathtt{rule} (\mathcal{D}_1)$ and $\mathtt{rule} (\mathcal{D}_2)$ are both   $(\Box_{Kn})$, also cut formulas are principal in both rules, the proof is similar to the above case.

For the case (3), $\mathtt{rule} (\mathcal{D}_1)$ is $(\Box_{Kn})$, $\mathtt{rule} (\mathcal{D}_2)$ is $(\Box_{Tn})$ and cut formulas are principal in both rules.

\begin{center}
\noLine
\AxiomC{$\mathcal {D}_1$}
\UnaryInfC{$\Gamma \ra C$}
\RightLabel{\scriptsize $(\Box_{Kn})$}
\UnaryInfC{$\Sigma,   \Box i \Gamma    \ra  \Omega,\Box_i C$}
\noLine
\AxiomC{$\mathcal {D}_2$}
\UnaryInfC{$\Box_i C, C,\Gamma^{\prime} \ra \Delta $}
\RightLabel{\scriptsize $(\Box_{Tn})$}
\UnaryInfC{$\Box iC ,  \Gamma^{\prime}   \ra   \Delta$}
\RightLabel{\scriptsize $(cut)$}
\BinaryInfC{$\Sigma,   \Box_i\Gamma ,  \Gamma^{\prime}  \ra \Delta,\Omega $}
\DisplayProof
\end{center}
We can transform the derivation into the following:
\begin{center}
\noLine
\AxiomC{$\mathcal {D}_1$}
\UnaryInfC{$\Gamma \ra C$}
\noLine
\AxiomC{$\mathcal {D}_1$}
\UnaryInfC{$\Sigma,   \Box i \Gamma    \ra  \Omega,\Box_i C$}
\noLine
\AxiomC{$\mathcal {D}_2$}
\UnaryInfC{$\Box_i C, C,\Gamma^{\prime} \ra \Delta $}
\RightLabel{\scriptsize $(cut)$}
\BinaryInfC{$C, \Sigma,   \Box i \Gamma,\Gamma^{\prime} \ra \Omega,\Delta$}
\RightLabel{\scriptsize $(cut)$}
\BinaryInfC{$ \Gamma, \Sigma,   \Box i \Gamma,\Gamma^{\prime} \ra \Omega,\Delta$}
\RightLabel{\scriptsize $(\Box_{Tn})^\ast$}
\UnaryInfC{$ \Sigma,   \Box i \Gamma,\Gamma^{\prime} \ra \Omega,\Delta$}
\DisplayProof
\end{center}
where  $(\Box_{Tn})^\ast$ denotes applying  $(\Box_{Tn})^\ast$ for finite many times.
In the transformed derivation, the uppermost application of $(cut)$ can be eliminated due to the reduced height of the derivation, while the second uppermost application of $(cut)$ can be eliminated owing to the lower complexity of the cut formula.
\end{proof}
It is noted that
the height-preserving admissibility of weakening is not necessary. The admissibility of weakening suffices to show the the proof.

\end{prop}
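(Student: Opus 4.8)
The plan is to show that any derivation in $\mathsf{G}(\mathbf{L})$ whose only application of $(cut)$ occurs at the root can be rewritten into a cut-free derivation with the same end-sequent. I would argue by a principal induction on the complexity of the cut formula $A$ (its number of logical connectives) and a secondary induction on the cut-height, i.e.\ the total number of sequents occurring in $\mathcal{D}_1$ together with $\mathcal{D}_2$. Counting sequents rather than branch-height is essential here: the weakening rules of Proposition~\ref{prop:weakening of k,d,t} are admissible but \emph{not} height-preserving, so the textbook G3-style cut-elimination does not transfer verbatim. Instead, following \cite[Theorem~3.23]{Negri2001}, it suffices that every reduction strictly decreases the pair (complexity, cut-height) in the lexicographic order; once that is guaranteed, invoking admissible weakening — even though it may enlarge a subderivation — is harmless, because it introduces no new cut.

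I would then split into the three cases indicated. In case~(1), where $\mathcal{D}_1$ or $\mathcal{D}_2$ is an initial sequent, the cut disappears outright: if $A$ is the displayed variable or $\bot$ the conclusion is again an initial sequent (after adjoining context by admissible weakening), and otherwise one of the two premises already yields the conclusion up to weakening. In case~(2), where $A$ is not principal in the last rule of $\mathcal{D}_1$ or of $\mathcal{D}_2$, I would permute the cut above that rule, which lowers the cut-height and leaves the complexity fixed; for the propositional rules this is the standard permutation (using the height-preserving invertibility of Proposition~\ref{prop:invertibility of all logical rules in Gkn} at the points where the propositional argument of \cite[Theorem~3.23]{Negri2001} calls for it). The one subtlety is when $A$ lies in the context $\Sigma$ or $\Omega$ of an application of $(\Box_{K_n})$, $(\Box_{D_n})$ or $(\Box_{T_n})$: there I would not permute but simply re-apply the same modal rule with the extra context attached and then close off the remaining side formulas with admissible weakening, exactly as in the displayed $(\Box_{K_n})$ reduction — such an $A$ plays no role in the modal inference, so no cut survives.

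The substance is case~(3). For a cut whose principal formula is a propositional compound $A_1\wedge A_2$, $A_1\vee A_2$, $A_1\to A_2$ or $\neg A_1$, I would apply the classical reduction to one or two cuts on the strictly simpler immediate subformulas, using height-preserving invertibility (Proposition~\ref{prop:invertibility of all logical rules in Gkn}) to expose the needed premises and the height-preserving admissible contraction rules to absorb the duplicated context; the primary induction hypothesis then closes these. For a cut on $\Box_i C$, the left cut-premise must end with $(\Box_{K_n})$ introducing $\Box_i C$ on the right, and the right cut-premise with one of $(\Box_{K_n})$, $(\Box_{D_n})$, $(\Box_{T_n})$ introducing $\Box_i C$ on the left. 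In the $(\Box_{K_n})$-against-$(\Box_{K_n})$ and $(\Box_{K_n})$-against-$(\Box_{D_n})$ cases I would push the cut through the box: replace it by a cut on $C$ between the two modal premises ($\Gamma\Rightarrow C$ and $C,\Gamma'\Rightarrow E$, respectively $C,\Gamma'\Rightarrow$ in the $\mathbf{KD_n}$ case) and then re-apply the appropriate modal rule, as in the displayed $(\Box_{D_n})$ reduction; since $C$ has strictly fewer connectives than $\Box_i C$, the primary induction hypothesis applies to the new cut.

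I expect the $(\Box_{T_n})$ sub-case of~(3) to be the main obstacle, because there the right cut-premise has the form $\Box_i C, C, \Gamma'\Rightarrow\Delta$ and a single cut on $C$ does not suffice. As in the displayed transformation, I would first perform an auxiliary cut on $\Box_i C$ between the unchanged left derivation (ending in $(\Box_{K_n})$) and the strictly smaller subderivation $\mathcal{D}_2$ — legitimate by the secondary induction hypothesis, since the sequent count has strictly dropped — obtaining $C,\Sigma,\Box_i\Gamma,\Gamma'\Rightarrow\Omega,\Delta$, then cut this against $\Gamma\Rightarrow C$ on the simpler formula $C$, invoking the primary induction hypothesis, and finally tidy up with finitely many applications of $(\Box_{T_n})$ (and height-preserving admissible contraction where copies coincide) to reach $\Sigma,\Box_i\Gamma,\Gamma'\Rightarrow\Omega,\Delta$. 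The care required is purely bookkeeping: one must check that the first of the two cuts genuinely reduces the cut-height while the second genuinely reduces complexity, so that the two induction parameters stay coordinated through the nested-cut step; the remaining modal cases follow the same recipe with less housekeeping.
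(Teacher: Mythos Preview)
Your proposal is correct and follows essentially the same approach as the paper: the same double induction on complexity and sequent-count, the same three-case split, the weakening-based elimination for a cut formula sitting in the modal context, the push-through-the-box reduction for $(\Box_{K_n})$ against $(\Box_{K_n})$ or $(\Box_{D_n})$, and the two-cut transformation (first on $\Box_i C$ by the secondary hypothesis, then on $C$ by the primary one, followed by iterated $(\Box_{T_n})$) for the $(\Box_{K_n})$-against-$(\Box_{T_n})$ case. One small remark: in your case~(2) you group $(\Box_{T_n})$ with the rules whose context $\Sigma,\Omega$ is discarded, but $(\Box_{T_n})$ actually preserves its context $\Gamma,\Delta$, so there the ordinary permutation of the cut upward already works and the weakening trick is unnecessary.
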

 
Next we show the equipollence between Hilbert systems and sequent calculi.

\begin{defn}
Given a sequent $\Gamma \ra \Delta$, $\Gamma_{\star}$ denotes the conjunction of all formulas in $\Gamma$ $(\Gamma_{\star}\equiv \top$ if $\Gamma$ is empty), $\Delta^{\star}$ denotes the unique formula in $\Delta$  ($\Delta^\star = \bot$ if $\Delta$ is empty).
\end{defn}

\begin{prop}
\label{prop:from g to h}
Let $\mathbf{L}\in \{ \mathbf{K_n},\mathbf{KD_n}, \mathbf{KT_n}\}$. 
 If $\mathsf{G}(\mathbf{L})   \vdash  \Gamma \ra \Delta$, then $\mathsf{H}(\mathbf{L})\vdash   \Gamma_{\star} \naraba \Delta ^\star$, 
\end{prop}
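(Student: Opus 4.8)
The plan is to prove Proposition~\ref{prop:from g to h} by induction on the height of the derivation $\mathcal{D}$ of $\Gamma \ra \Delta$ in $\mathsf{G}(\mathbf{L})$. For each rule of $\mathsf{G}(\mathbf{L})$ I would show that if the translations of the premises are provable in $\mathsf{H}(\mathbf{L})$, then so is the translation $\Gamma_\star \naraba \Delta^\star$ of the conclusion. The base case handles the two initial sequents: for $\Gamma, p \ra p, \Delta$ the translation is $(\Gamma_\star \wedge p) \naraba (p \vee \Delta^\star)$, a propositional tautology; for $\bot, \Gamma \ra \Delta$ the translation is $(\bot \wedge \Gamma_\star) \naraba \Delta^\star$, again a tautology. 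Throughout I would freely use that $\mathsf{H}(\mathbf{L})$ contains all classical propositional tautologies and is closed under modus ponens, and I would set up a couple of small propositional lemmas up front: that $\vdash \Gamma_\star \naraba \Delta^\star$ is insensitive (up to provable equivalence) to the order and multiplicity of formulas in $\Gamma, \Delta$, and standard boolean manipulations converting between a sequent translation and the corresponding implications, so that the propositional rules become routine.

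For the propositional logical rules the argument is the familiar one: each rule of \textbf{G3cp} is sound for classical propositional reasoning, so from the premise translation(s) the conclusion translation follows by a propositional tautology plus modus ponens. For instance, for $(R{\rightarrow})$ the premise gives $\vdash (A_1 \wedge \Gamma_\star) \naraba (\Delta^\star \vee A_2)$ and one needs $\vdash \Gamma_\star \naraba (\Delta^\star \vee (A_1 \naraba A_2))$, which is a tautological consequence; for $(L{\rightarrow})$ one combines the two premise translations by a tautology. The cases where $\Delta$ is empty (so $\Delta^\star = \bot$) or a singleton are all absorbed into these boolean steps. I would write out one or two representative cases and note the rest are analogous.

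The interesting cases are the modal rules, and $(\Box_{K_n})$ is the one to get right carefully. Here the premise is $\Gamma \ra A$, so by I.H. we have $\vdash \Gamma_\star \naraba A$; we must derive $\vdash (\Sigma_\star \wedge (\Box_i \Gamma)_\star) \naraba (\Box_i A \vee \Omega^\star)$. The key observation is that $(\Box_i\Gamma)_\star$ — the conjunction of $\Box_i B$ for $B \in \Gamma$ — implies $\Box_i(\Gamma_\star)$ in $\mathbf{K}_n$: this follows from $(\mathrm{K}_n)$ and (Nec) applied to the (propositional) fact that $\Gamma_\star$ follows from its conjuncts, i.e. the standard $\bigwedge \Box_i B_j \to \Box_i \bigwedge B_j$ derivation. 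Then apply (Nec) to $\Gamma_\star \naraba A$ to get $\vdash \Box_i(\Gamma_\star \naraba A)$, use $(\mathrm{K}_n)$ and modus ponens to obtain $\vdash \Box_i(\Gamma_\star) \naraba \Box_i A$, chain with the previous step to get $\vdash (\Box_i\Gamma)_\star \naraba \Box_i A$, and finally weaken by the extra conjunct $\Sigma_\star$ on the left and the disjunct $\Omega^\star$ on the right by a propositional tautology. Note $\Sigma$ and $\Omega$ play no role beyond this harmless weakening. For $(\Box_{D_n})$ in $\mathbf{KD}_n$: the premise is $\Gamma \ra \,$ (empty succedent), so by I.H. $\vdash \Gamma_\star \naraba \bot$; by (Nec) and $(\mathrm{K}_n)$ as above $\vdash \Box_i(\Gamma_\star) \naraba \Box_i \bot$, hence with $(\mathrm{D}_n)$, i.e. $\neg\Box_i\bot$, we get $\vdash \neg \Box_i(\Gamma_\star)$; combined with $(\Box_i\Gamma)_\star \naraba \Box_i(\Gamma_\star)$ this yields $\vdash \neg(\Box_i\Gamma)_\star$, from which $(\Sigma_\star \wedge (\Box_i\Gamma)_\star) \naraba \Omega^\star$ is a tautological consequence (here the side condition $\Gamma \neq \emptyset$ is what makes $(\Box_i\Gamma)_\star$ a genuine conjunction of boxed formulas). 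For $(\Box_{T_n})$ in $\mathbf{KT}_n$: the premise translation is $\vdash (\Box_i A \wedge A \wedge \Gamma_\star) \naraba \Delta^\star$, and since $(\mathrm{T}_n)$ gives $\vdash \Box_i A \naraba A$, the conclusion translation $\vdash (\Box_i A \wedge \Gamma_\star) \naraba \Delta^\star$ follows by a propositional tautology. The main obstacle is purely bookkeeping: making the translation of multiset sequents (especially with empty or singleton succedents, and with the context multisets $\Sigma,\Omega$) mesh cleanly with the Hilbert-style derivations, which is why I would isolate the $\bigwedge\Box_i \to \Box_i\bigwedge$ fact and the boolean normalization lemmas at the outset.
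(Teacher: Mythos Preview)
Your proposal is correct and is the standard induction on the height of the derivation; the paper in fact states this proposition without proof, so there is no alternative argument to compare against. Your treatment of the modal cases---isolating the $\bigwedge_j \Box_i B_j \to \Box_i \bigwedge_j B_j$ fact for $(\Box_{K_n})$, combining it with $(\mathrm{D}_n)$ for $(\Box_{D_n})$, and using $(\mathrm{T}_n)$ directly for $(\Box_{T_n})$---is exactly what is needed. One small remark: the paper's definition of $\Delta^\star$ literally reads ``the unique formula in $\Delta$,'' which is only sensible for singleton or empty $\Delta$; your implicit reading of $\Delta^\star$ as the disjunction of the formulas in $\Delta$ (with $\bot$ for the empty case) is the intended standard meaning and is what makes the statement well-posed for arbitrary sequents.
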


\begin{thm}[Equipollence]
\label{prop:equivilence of Hilbert and sequent}
Let $\mathbf{L}\in \{ \mathbf{K_n}, \mathbf{KD_n}, \mathbf{KT_n}\}$. 
The following equivalence holds: $\mathsf{H}(\mathbf{L}) \vdash        A$ iff  $ \mathsf{G}(\mathbf{L})  \vdash\ra A$. 
\end{thm}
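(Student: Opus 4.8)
The plan is to prove the two directions of the equivalence separately, using the auxiliary Proposition~\ref{prop:from g to h} for the harder direction and a routine simulation argument for the other.

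\textbf{From $\mathsf{G}(\mathbf{L})$ to $\mathsf{H}(\mathbf{L})$.} Suppose $\mathsf{G}(\mathbf{L}) \vdash\, \ra A$. By Proposition~\ref{prop:from g to h} applied to the sequent $\ra A$, we obtain $\mathsf{H}(\mathbf{L}) \vdash \top \naraba A$, since the antecedent is empty so $\Gamma_\star \equiv \top$ and the succedent is $A$ so $\Delta^\star = A$. Because $\top \naraba A$ and $\top$ (i.e. $\neg\bot$) are both derivable in $\mathsf{H}(\mathbf{L})$ using classical propositional tautologies, modus ponens gives $\mathsf{H}(\mathbf{L}) \vdash A$. (Strictly, I would first note that modus ponens is available since it is the implicit rule attached to the Hilbert axioms, and that $\top$ is provable because $\bot\naraba\bot$ is a tautology, hence $\neg\bot$ is as well.)

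\textbf{From $\mathsf{H}(\mathbf{L})$ to $\mathsf{G}(\mathbf{L})$.} This is the simulation direction. Assume $\mathsf{H}(\mathbf{L}) \vdash A$ and proceed by induction on the length of the Hilbert derivation. For the base cases I must show that each axiom schema of $\mathsf{H}(\mathbf{L})$ yields a derivable sequent $\ra C$ in $\mathsf{G}(\mathbf{L})$: every classical propositional tautology is derivable since the logical rules coincide with \textbf{G3cp} (which is complete for classical propositional logic, using cut-admissibility and invertibility already established); the axiom $(\mathrm{K}_n)$, $\Box_i(A\naraba B)\rightarrow(\Box_i A\rightarrow\Box_i B)$, is derived by unfolding the implications with $(R{\rightarrow})$, then applying $(\Box_{K_n})$ to the premise $A\naraba B, A \ra B$ with $\Sigma = \Omega = \emptyset$; similarly $\neg\Box_i\bot$ for $\mathbf{KD}_n$ comes from $(\Box_{D_n})$ applied to $\bot \ra$ (which is an initial sequent with nonempty antecedent, so $\Gamma\neq\emptyset$ is satisfied) followed by $(L{\neg})$/$(R{\neg})$, and $\Box_i A\rightarrow A$ for $\mathbf{KT}_n$ comes from $(\Box_{T_n})$ applied to $\Box_i A, A \ra A$ (an initial sequent). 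For the inductive step, if $A$ is obtained by modus ponens from $B$ and $B\naraba A$, then by the induction hypothesis $\mathsf{G}(\mathbf{L}) \vdash\, \ra B$ and $\mathsf{G}(\mathbf{L}) \vdash\, \ra B\naraba A$; applying $(L{\rightarrow})$-inversion (height-preserving invertibility, Proposition~\ref{prop:invertibility of all logical rules in Gkn}) or more directly a cut on $B$ against $B\naraba A \ra B\naraba A$ unfolded, one gets $\ra A$ — the cleanest route is: from $\ra B\naraba A$ invert $(R{\rightarrow})$ to get $B \ra A$, then cut with $\ra B$ using admissibility of cut (the preceding Proposition) to obtain $\ra A$. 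If $A = \Box_i B$ is obtained by (Nec) from $B$, then by the induction hypothesis $\mathsf{G}(\mathbf{L}) \vdash\, \ra B$, and one application of $(\Box_{K_n})$ with $\Gamma = \emptyset$, $\Sigma = \Omega = \emptyset$ gives $\ra \Box_i B$.

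\textbf{Main obstacle.} The genuinely substantive content is hidden in the already-established Proposition~\ref{prop:from g to h} (soundness of the sequent calculus over the Hilbert system) and in cut-admissibility, both of which I am permitted to invoke; given those, the remaining work is the axiom-by-axiom check in the simulation direction, and the only mildly delicate point there is verifying that the side conditions ($\dagger$, $\ddagger$, and $\Gamma\neq\emptyset$) on the modal rules are met in each case — in particular that $(\Box_{D_n})$ applies to the initial sequent $\bot\ra$ because its antecedent is nonempty. None of these presents a real difficulty, so the proof is essentially bookkeeping once the structural propositions are in hand.
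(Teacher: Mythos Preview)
Your approach is exactly the paper's: Proposition~\ref{prop:from g to h} for the $\mathsf{G}(\mathbf{L})\Rightarrow\mathsf{H}(\mathbf{L})$ direction, and induction on the Hilbert derivation for the converse. One small correction: in the $(\mathrm{T}_n)$ case you call $\Box_i A, A \ra A$ ``an initial sequent,'' but initial sequents in these calculi have the form $\Gamma,p\ra p,\Delta$ with $p$ a propositional \emph{variable}; for general $A$ you need the earlier proposition that $A\ra A$ is derivable, together with admissible weakening, before applying $(\Box_{T_n})$.
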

\begin{proof}
The direction from the right to the left  can be proved by applying Proposition~\ref{prop:from g to h}, in which we let the antecedent $\Gamma$ be empty. 
The direction from the left to the right  can be proved by induction on the derivation of $A$. 
\end{proof}

Next, we move to prove our main results for $\gkn,\gkdn$. That of $\gktn$ will be shown after  defining  a sequent calculus with loop-preventing mechanism.

\subsection{Main theorem of $\gkn$ and $\gkdn$}
\label{sec:main thm of gkn and gkdn}

\begin{defn}
    We say that a sequent $\Gamma \ra \Delta$ is a {\it critical sequent} if $\Gamma$ and $\Delta$ contain only propositional variables or outmost-boxed formulas.
\end{defn}

\begin{defn}
\label{dfn:Ap formula in Gkn}
      Let $\Gamma,\Delta$ be finite multi-sets of formulas, $p$ be a propositional variable.
      An $\mathcal{A}$-formula $\mathcal{A}_p(\Gamma;\Delta)$  is defined inductively as follows.


\[
\begin{array}
{|c|c|c|c|}\hline & \Gamma;\Delta\text{ matches} & \mathcal{A}_p(\Gamma;\Delta)\text{ equals} \\
\hline


~~1~~ & \Gamma',p;\Delta',p & \top \\

~~2~~ & \Gamma',\bot;\Delta & \top  \\
~~3~~ & \Gamma',C_1\wedge C_2;\Delta & \mathcal{A}_p(\Gamma',C_1,C_2;\Delta) \\
4 & \Gamma;C_1\wedge C_2,\Delta' & \mathcal{A}_p(\Gamma;C_1,\Delta')\land \mathcal{A}_p(\Gamma;C_2,\Delta') \\

5 & \Gamma',C_1\vee C_2;\Delta & \mathcal{A}_p(\Gamma',C_1;\Delta)\land \mathcal{A}_p(\Gamma',C_2;\Delta) \\

~~6~~ & \Gamma;C_1\vee C_2,\Delta' & \mathcal{A}_p(\Gamma';C_1,C_2,\Delta') \\

~~7~~ & \Gamma',\neg C;\Delta & \mathcal{A}_p(\Gamma';C,\Delta) \\

8 & \Gamma;\neg C,\Delta' & \mathcal{A}_p(\Gamma,C;\Delta') \\

9 & \Gamma',C_1\rightarrow C_2;\Delta & \mathcal{A}_p(\Gamma';\Delta,C_1)\land \mathcal{A}_p(\Gamma',C_2;\Delta) \\
10 & \Gamma;C_1\rightarrow C_2,\Delta' & \mathcal{A}_p(\Gamma,C_1; \Delta',C_2) \\

11\dagger& ~~ \Phi, \overrightarrow{\Box_{g_m}\gamma_m } ; \overrightarrow{\Box_{d_n}\delta_n }, \Psi ~~  &  \mathsf{X} \\ 
 
\hline
\end{array}
\]
{\footnotesize $\dagger:$ $\Phi$ and  $\Psi$ are multisets containing only propositional variables or $\bot$, 
besides, $q,r$ differ from $p$. 
Furthermore $\Phi\cup \overrightarrow{\Box_{g_m}\gamma_m }\cup\overrightarrow{\Box_{d_n}\delta_n }\cup \Psi$ is not empty.}

The formula $\mathsf{X}$ is:

\[\bigvee_{q\in\Psi}q
\vee
\bigvee_{r\in\Phi}\neg r 
\vee 
\bigvee_{\Box_{g_j}\gamma_j \in \{\overrightarrow{\Box_{g_m}\gamma_m }\}} \Diamond_{g_j} \mathcal{A}_p (    \{ \overrightarrow{\Box_{g_m\gamma_m}} \}^{\flat_{g_j}} ;\emptyset) 
\]

\[
  \vee \bigvee_{\Box d_i\delta_i \in \{\overrightarrow{\Box_{d_n}\delta_n } \}}   \Box{d_i} \mathcal{A}_p (   \{\overrightarrow{\Box_{g_m}\gamma_m }  \}^{\flat_{d_i}}; \delta_i   )
\]


Recall that for any formula $A$, $n\in\mathbb{N}$, $\overrightarrow{A_n}$ stands for  $A_1,\cdots, A_n$. 
$\Gamma^{\flat_i}=\{A|\Box_iA\in \Gamma\}$,

The formula $\mathcal{A}_p(\Gamma;\Delta)$ is defined in the following procedure: at first, the lines $1-10$ are repeatedly applied until it reaches a critical sequent which does not match the line $1$ or $2$ (the order does not matter, since all propositional rules are height-preserving invertible by Proposition \ref{prop:invertibility of all logical rules in Gkn}); next, the line $11$ is applied. We repeat the above procedure until $\Gamma;\Delta$ cannot  match any lines in the table, in this case $\mathcal{A}_p(\Gamma;\Delta)$ is defined as $\bot$. Especially $\mathcal{A}_p(\emptyset;\emptyset)$ is defined as $\bot$.

We observe that for any ${\mathcal A}$-formulas defined in the right part, its weightiness always decrease when compare to its right part.
We can define a well-order relation of $\mathcal{A}$-formulas as follows:
\begin{center}
   $  \mathcal{A}_p (\Gamma;\Delta) \prec  \mathcal{A}_p (\Gamma^\prime;\Delta^\prime )$ if and only if

   $\mathsf{wt}(\Gamma\ra \Delta)  \prec  \mathsf{wt}(\Gamma^\prime\ra \Delta^\prime)$
\end{center}
Given the fact that all back proof-search in $\gkn$, $\gkdn$ always terminates (in Proposition \ref{prop:termination of proof search in GKn}), we can see that 
such a formula can always be determined.

\end{defn}

The definition in the single-modal setting from B{\'\i}lkov{\'a} \cite{bilkova2007uniform} becomes a particular case of this definition.

\begin{ex}
    Let $\Gamma$ be $\Box_1 (q\wedge p), \Box_2 (s\lor r)  ,\Box_2 r$,
    $\Delta$ be  $\Box_3 r, \Box_2 s$,
    then
    
    \begin{center}
    {\footnotesize
        $\mathcal{A}_p (
        \Box_1 (q\wedge p), \Box_2 (s\lor r)  ,\Box_2 r;\Box_3 r, \Box_2 s   )$
        
       $ =  
        \Diamond_1 \mathcal{A}_p 
        (q\wedge p;\emptyset) \lor
        \Diamond_2 \mathcal{A}_p(s\lor r, r;\emptyset) \lor
        \Diamond_2 \mathcal{A}_p(r,s\lor r; \emptyset) \lor
        \Box_3 \mathcal{A}_p (\emptyset; r ) \lor
         \Box_2 \mathcal{A}_p (s\lor r, r  ;s   )$

        $ =  
        \Diamond_1 \mathcal{A}_p 
        (q, p;\emptyset) \lor
    \Diamond_2 (\mathcal{A}_p(s, r;\emptyset) \wedge
          \mathcal{A}_p( r, r;\emptyset) )
        \lor
     \Diamond_2 (\mathcal{A}_p( r,s;\emptyset) \wedge
          \mathcal{A}_p( r, r;\emptyset) )
          \lor
        \Box_3 r \lor
         \Box_2 (\mathcal{A}_p (s,r;s) \wedge
         \mathcal{A}_p (r,r;s))$
        
                $ =  
       \Diamond_1( \neg q) \lor
               \Diamond_2 ((\neg s\lor \neg r) \wedge
          ( \neg r\lor \neg r) ) \lor
     \Diamond_2 (( \neg r\lor \neg s ) \wedge
          ( \neg r\lor \neg r ) )
          \lor
        \Box_3 r \lor
         \Box_2  ((\neg s \lor \neg r\lor s) \wedge
         (\neg r\lor\neg r \lor s))$

          }
    \end{center}
\end{ex}

We need the following proposition to deal with some cases in the main theorem.

\begin{prop}
\label{prop:initial casein main thm in gkn}
Let $\mathbf{L} \in \{ \mathbf{K_n},\mathbf{KD_n}\}$.  Given any multi-sets $\Gamma,\Delta$ of formulas, propositional variable $p$ and $q$, such that $p\neq q$.   
\begin{enumerate}
    \item $\mathsf{G}(\mathbf{L})\vdash q\ra \mathcal{A}_p(\Gamma;\Delta,q)$
    \item $\mathsf{G}(\mathbf{L})\vdash \neg q\ra \mathcal{A}_p(\Gamma,q;\Delta)$
    \item $\mathsf{G}(\mathbf{L})\vdash \ra \mathcal{A}_p(q,\Gamma;\Delta,q)$
\end{enumerate}\end{prop}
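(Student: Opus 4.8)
\textbf{Proof plan for Proposition \ref{prop:initial casein main thm in gkn}.}

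The three statements assert that an $\mathcal{A}$-formula built from a sequent containing $q$ (where $q \neq p$) already "contains" the relevant disjunct coming from that occurrence of $q$. The plan is to prove all three simultaneously by induction along the well-founded ordering $\prec$ on $\mathcal{A}$-formulas introduced in Definition \ref{dfn:Ap formula in Gkn}, which is legitimate since backward proof-search in $\gkn$ and $\gkdn$ terminates (Proposition \ref{prop:termination of proof search in GKn}). The key observation driving the induction is that the defining clauses $1$--$10$ for $\mathcal{A}_p(\Gamma;\Delta)$ mirror exactly the (height-preserving invertible) logical rules of $\mathsf{G}(\mathbf{L})$: whatever reduction the table performs on $\Gamma;\Delta$, we can perform the corresponding logical rule (or pair of rules, in the branching cases) on the target sequent $q \ra \mathcal{A}_p(\Gamma;\Delta,q)$ etc., and so reduce to the inductive hypothesis on a $\prec$-smaller $\mathcal{A}$-formula. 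So the proof is essentially a case analysis on which line of the table governs the first reduction of $\Gamma;\Delta$.

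For the base cases: if $\Gamma;\Delta,q$ already matches line $1$ or line $2$ then $\mathcal{A}_p(\Gamma;\Delta,q) = \top$ and $q \ra \top$ is trivially derivable (via $(R\neg)$ from $\bot,q \ra$, or directly); similarly for (2) using line $2$ (the $\bot$ case) or line $1$ (matching $q$ on both sides when $q \in \Gamma$ too), and for (3) the presence of $q$ on both sides makes line $1$ fire, giving $\top$. The interesting base case is when the sequent is already critical but does \emph{not} match lines $1$--$2$, so line $11$ applies: then $\mathcal{A}_p(\Gamma;\Delta,q)$ is the big disjunction $\mathsf{X}$, and since $q$ sits in $\Psi$ (the propositional part of the succedent), the disjunct $\bigvee_{q' \in \Psi} q'$ includes $q$ itself, so $q \ra \mathsf{X}$ follows by $(R\lor)$ repeatedly together with the initial sequent $q \ra q$ and admissibility of weakening (Proposition \ref{prop:weakening of k,d,t}). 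For (2) the occurrence of $q$ lands in $\Phi$, contributing the disjunct $\neg q$, and $\neg q \ra \mathsf{X}$ follows from $\neg q \ra \neg q$ (i.e. from $\ra q, \ldots$? no --- from $q \ra q$ via $(L\neg),(R\neg)$) plus weakening. For (3), with $q$ on both sides line $1$ fires before line $11$ can, so this subcase does not even arise --- only lines $1$--$10$ or the line-$11$ situation where $q$ appears on exactly one side are relevant, and those are handled by (1), (2), or the inductive step.

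For the inductive step: suppose the first applicable line is one of $3$--$10$. Each such line decomposes a compound formula on one side of $\Gamma;\Delta$; because $q$ is an atom it is untouched, so it remains on the same side in every premise, and the resulting $\mathcal{A}$-formula is either $\mathcal{A}_p(\Gamma'';\Delta'',q)$ (with $\Gamma'';\Delta''$ the reduced multisets) or a conjunction of two such, each $\prec$-smaller. We apply the inductive hypothesis to get derivability of $q \ra \mathcal{A}_p(\Gamma'';\Delta'',q)$ (or both conjuncts), then reassemble: a conjunction in the $\mathcal{A}$-formula is handled by $(R\wedge)$, so derivability is preserved. Thus in every case the target sequent is derivable. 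The main obstacle --- really the only place requiring care --- is bookkeeping the correspondence between table lines and sequent rules in the branching cases (lines $4$, $5$, $9$, and line $11$ with its four-way disjunction), making sure the side on which $q$ occurs is tracked correctly and that the invertibility/weakening lemmas are invoked with the right direction; the modal line $11$ itself contributes no difficulty to \emph{this} proposition because $q$, being propositional, never migrates into the modal disjuncts. The same argument works uniformly for $\mathbf{L} = \mathbf{K_n}$ and $\mathbf{L} = \mathbf{KD_n}$ since it never inspects the modal rules.
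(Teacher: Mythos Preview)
The paper states this proposition without proof, so your plan is what one would naturally supply, and it is correct for items (1) and (2).

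There is one real gap, in your handling of item (3). You assert that ``with $q$ on both sides line~$1$ fires before line~$11$ can,'' but in Definition~\ref{dfn:Ap formula in Gkn} the pattern in line~$1$ is $\Gamma',p;\Delta',p$, where $p$ is the \emph{specific} variable that $\mathcal{A}_p$ is eliminating---not a generic atom. Since the hypothesis is $q\neq p$, the presence of $q$ on both sides does \emph{not} trigger line~$1$; indeed, this is precisely why the paper lists (3) as a separate, non-trivial clause (and why the proof of Theorem~\ref{thm:main theorem of gkn}(iii) invokes it for the initial-sequent case in which the shared atom lies in $\Gamma\cap\Delta$). Consequently the line-$11$ situation with $q$ on both sides \emph{does} arise for (3), contrary to what you say.

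The repair is immediate along your own lines. Run the same induction for (3): for lines $3$--$10$ the atom $q$ remains on both sides in every reduct, so the inductive hypothesis yields $\ra\mathcal{A}_p(q,\Gamma'';\Delta'',q)$ and one reassembles by $(R\wedge)$ whenever the table produces a conjunction. At line~$11$, since $q\neq p$ we have $q\in\Phi$ \emph{and} $q\in\Psi$, so $\mathsf{X}$ contains both the disjunct $q$ and the disjunct $\neg q$; then $\ra \mathsf{X}$ follows from the initial sequent $q\ra q$ via $(R\neg)$, weakening (Proposition~\ref{prop:weakening of k,d,t}), and repeated $(R\lor)$.
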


\begin{thm}
\label{thm:main theorem of gkn}
Let $\mathbf{L}\in \{ \mathbf{K_n},\mathbf{KD_n}\}$.
    Let $\Gamma,\Delta$ be finite multi-sets of formulas. For every propositional variable $p$ there exists a formula $\mathcal{A}_p(\Gamma;\Delta)$ such that:
    \begin{enumerate}[(i)]
        \item  
            $\mathsf{V}( \mathcal{A}_p(\Gamma;\Delta)) \subseteq \mathsf{V}(\Gamma\cup \Delta)\backslash\{p\}    $
        \item $\mathsf{G}(\mathbf{L})\vdash    \Gamma , \mathcal{A}_p(\Gamma; \Delta) \ra \Delta$
        \item given finite multi-sets $\Pi, \Lambda$ of formulas such that 
        \begin{center}
            $p\notin \mathsf{V}(\Pi\cup\Lambda)$ and $\mathsf{G}(\mathbf{L})\vdash  \Pi,\Gamma \ra \Delta, \Lambda$
        \end{center}
        then 
        \begin{center}
            $\mathsf{G}(\mathbf{L})\vdash  \Pi \ra  \mathcal{A}_p(\Gamma; \Delta) ,\Lambda$
        \end{center}
    \end{enumerate}
    
\end{thm}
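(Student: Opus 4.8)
The plan is to prove (i), (ii) and (iii) simultaneously by induction on $\mathsf{wt}(\Gamma\ra\Delta)$, i.e.\ along the well-order $\prec$ of $\mathcal{A}$-formulas; by Proposition~\ref{prop:termination of proof search in GKn} every recursive call in clauses $3$--$11$ of Definition~\ref{dfn:Ap formula in Gkn} refers to an $\mathcal{A}$-formula of strictly smaller weight, so this induction is well-founded. Part (i) is a direct inspection of the eleven clauses: lines $1,2$ give $\top$; the Boolean lines $3$--$10$ keep all variables inside the sub-sequents, whose formulas come from sub-multisets of $\Gamma,\Delta$; and in line $11$ the disjuncts of $\mathsf{X}$ are $q\in\Psi$ or $\neg r$ with $r\in\Phi$ (these range over variables $\neq p$, and $p$ cannot occur on both sides of the critical sequent since line $1$ has not fired) or $\Diamond_{g_j}$-/$\Box_{d_i}$-prefixed sub-$\mathcal{A}$-formulas controlled by the induction hypothesis for (i). For (ii) and (iii) on the Boolean lines $3$--$10$ I would decompose the relevant derivation using the height-preserving invertibility of the propositional rules (Proposition~\ref{prop:invertibility of all logical rules in Gkn}), apply the induction hypothesis to the sub-sequents, and recombine with the matching propositional rule and admissibility of weakening (Proposition~\ref{prop:weakening of k,d,t}); for instance in line $4$, for (ii) one derives $\Gamma,\mathcal{A}_p(\Gamma;C_1,\Delta')\wedge\mathcal{A}_p(\Gamma;C_2,\Delta')\ra C_1\wedge C_2,\Delta'$ from the two instances of (ii) by weakening, $(L\wedge)$ and $(R\wedge)$, and for (iii) one inverts $(R\wedge)$ in $\Pi,\Gamma\ra C_1\wedge C_2,\Delta',\Lambda$, applies the induction hypothesis twice, and reapplies $(R\wedge)$; the remaining Boolean lines are analogous.

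The substantial case is line $11$, where $\Gamma\ra\Delta$ is critical, $\Gamma=\Phi\cup\{\overrightarrow{\Box_{g_m}\gamma_m}\}$, $\Delta=\{\overrightarrow{\Box_{d_n}\delta_n}\}\cup\Psi$ (so $\Gamma^{\flat_i}=\{\overrightarrow{\Box_{g_m}\gamma_m}\}^{\flat_i}$ for each $i$) and $\mathcal{A}_p(\Gamma;\Delta)=\mathsf{X}$. For (ii) I would apply $(L\vee)$ to split $\mathsf{X}$ into its disjuncts and discharge each: a disjunct $q\in\Psi$ is an initial sequent; a disjunct $\neg r$ with $r\in\Phi$ reduces by $(L\neg)$ to an initial sequent; a disjunct $\Diamond_{g_j}\mathcal{A}_p(\Gamma^{\flat_{g_j}};\emptyset)$ reduces by $(L\neg)$, then $(\Box_{K_n})$ with principal modality $\Box_{g_j}$ (its side conditions holding since $\Phi$ and the boxed formulas of agents $\neq g_j$ form an admissible left context and $\{\overrightarrow{\Box_{d_n}\delta_n}\}\cup\Psi$ an admissible right context), then $(R\neg)$, to the instance $\mathcal{A}_p(\Gamma^{\flat_{g_j}};\emptyset),\Gamma^{\flat_{g_j}}\ra$ of (ii) for the strictly smaller sub-sequent $\Gamma^{\flat_{g_j}}\ra\emptyset$; and a disjunct $\Box_{d_i}\mathcal{A}_p(\Gamma^{\flat_{d_i}};\delta_i)$ reduces by $(\Box_{K_n})$ with principal modality $\Box_{d_i}$ and principal succedent formula $\Box_{d_i}\delta_i$ to the instance $\Gamma^{\flat_{d_i}},\mathcal{A}_p(\Gamma^{\flat_{d_i}};\delta_i)\ra\delta_i$ of (ii) for the smaller sub-sequent $\Gamma^{\flat_{d_i}}\ra\delta_i$.

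For (iii) at line $11$, fix $\Pi,\Lambda$ with $p\notin\mathsf{V}(\Pi\cup\Lambda)$ and a derivation $\mathcal{D}$ of $\Pi,\Gamma\ra\Delta,\Lambda$, and argue by an inner induction on the height of $\mathcal{D}$. If $\mathcal{D}$ is an initial sequent, the claim follows from Proposition~\ref{prop:initial casein main thm in gkn} and admissibility of weakening, using that the shared atom is $\neq p$ (it lies in $\Pi\cup\Lambda$, else line $1$ would have fired). If the last rule is propositional, then — since $\Gamma,\Delta$ are critical — it is principal on a formula of $\Pi$ or $\Lambda$; the inner induction hypothesis applies to the premise(s), with $\mathsf{X}$ unchanged, and reapplying the rule yields $\Pi\ra\mathsf{X},\Lambda$. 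If the last rule is $(\Box_{K_n})$ (or, in $\gkdn$, $(\Box_{D_n})$) with principal modality $\Box_i$, its premise, after stripping, is $\Gamma^{\flat_i},\Pi^{\flat_i}\ra A$, with $\Box_iA$ the principal succedent formula (no such $A$ for $(\Box_{D_n})$). If $\Box_iA\in\Delta$, so $\Box_iA=\Box_{d_k}\delta_k$, then applying the \emph{outer} induction hypothesis (iii) to the strictly smaller sub-sequent $\Gamma^{\flat_i}\ra\delta_k$ (with antecedent side $\Pi^{\flat_i}$ and empty succedent side) gives $\Pi^{\flat_i}\ra\mathcal{A}_p(\Gamma^{\flat_{d_k}};\delta_k)$, whence $(\Box_{K_n})$ with principal modality $\Box_{d_k}$ gives $\Pi\ra\Box_{d_k}\mathcal{A}_p(\Gamma^{\flat_{d_k}};\delta_k),\Lambda$, and since this is a disjunct of $\mathsf{X}$, folding in the others by $(R\vee)$ gives $\Pi\ra\mathsf{X},\Lambda$. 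Otherwise ($\Box_iA\in\Lambda$, or the rule is $(\Box_{D_n})$): if $\Gamma^{\flat_i}=\emptyset$ the premise is $\Pi^{\flat_i}\ra A$ (resp.\ $\Pi^{\flat_i}\ra$), so reapplying the same modal rule gives $\Pi\ra\Lambda$ and weakening in $\mathsf{X}$ finishes; if $\Gamma^{\flat_i}\neq\emptyset$, applying the outer induction hypothesis (iii) to $\Gamma^{\flat_i}\ra\emptyset$ (antecedent side $\Pi^{\flat_i}$, succedent side $\{A\}$ resp.\ $\emptyset$) gives $\Pi^{\flat_i}\ra\mathcal{A}_p(\Gamma^{\flat_i};\emptyset),A$, and a short combination of $(L\neg)$, the modal rule and $(R\neg)$ then yields $\Pi\ra\Diamond_i\mathcal{A}_p(\Gamma^{\flat_i};\emptyset),\Lambda$, where this diamond-formula is a disjunct of $\mathsf{X}$ because $\Gamma^{\flat_i}\neq\emptyset$, so folding in the remaining disjuncts by $(R\vee)$ closes the case.

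The hardest part is this modal step of (iii): one must case-split on whether the principal boxed formula of the last modal inference of $\mathcal{D}$ lies in the interpolated succedent $\Delta$ or in the contextual succedent $\Lambda$, invoke the outer induction hypothesis on exactly the right stripped sub-sequent, and reinsert a modal inference that both respects the context side conditions of $(\Box_{K_n})$/$(\Box_{D_n})$ and targets the correct disjunct of $\mathsf{X}$ (the genuinely degenerate situation being $\Gamma^{\flat_i}=\emptyset$ with the principal boxed formula outside $\Delta$, where no disjunct of $\mathsf{X}$ is needed). The two inductions — outer on $\mathsf{wt}(\Gamma\ra\Delta)$, inner on the height of $\mathcal{D}$ — must be kept distinct, but there is no circularity: every appeal to the outer hypothesis is to a sub-sequent of strictly smaller weight, while the inner induction only invokes itself on shorter derivations.
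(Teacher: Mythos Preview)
Your proposal is correct and follows essentially the same approach as the paper, with a cleaner articulation of the double induction. The one organizational difference worth noting: for (iii) on the Boolean lines $3$--$10$ you reduce via height-preserving invertibility and then invoke the outer induction hypothesis on the smaller sub-sequents, whereas the paper's proof of (iii) runs a single case analysis on the last rule of the derivation of $\Pi,\Gamma\Rightarrow\Delta,\Lambda$ (saying only that the logical-rule cases are ``straight'') and leaves the underlying induction structure implicit. Your version makes the interaction between the two inductions---outer on $\mathsf{wt}(\Gamma\Rightarrow\Delta)$, inner on derivation height at line~$11$---fully explicit, and your identification of the degenerate sub-case $\Gamma^{\flat_i}=\emptyset$ with the principal boxed formula outside $\Delta$ (where one simply reapplies the modal rule without producing a disjunct of $\mathsf{X}$) matches the paper's corresponding sub-case exactly.
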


\begin{proof}


The proof of (i) can be obtained by inspecting the table in Definition \ref{dfn:Ap formula in Gkn}.

The proof of  (ii) can be proved by induction on the weight of $\mathcal{A}_p(\Gamma;\Delta)$. We prove $\mathsf{G}(\mathbf{L})\vdash    \Gamma , \mathcal{A}_p(\Gamma; \Delta) \ra \Delta$ for each line of the table in  Definition \ref{dfn:Ap formula in Gkn}.
The cases of lines from 1 to 10 are easy.
We only consider the case of the line 11.

The idea is to show the results of each conjuncts at first, then we combine them together by $(L\lor)$.

\begin{itemize}
    \item for each $q\in \Psi$, $ \mathsf{G}(\mathbf{L}) \vdash \Phi,q, \overrightarrow{\Box_{g_m}\gamma_m }  \ra  \overrightarrow{\Box_{d_n}\delta_n },\Psi $

    \item for each $r\in \Phi$, $ \mathsf{G}(\mathbf{L}) \vdash \Phi,\neg r, \overrightarrow{\Box_{g_m}\gamma_m }  \ra  \overrightarrow{\Box_{d_n}\delta_n },\Psi $

    \item for each  $\Box_{g_j}\gamma_j \in \{\overrightarrow{\Box_{g_m}\gamma_m }\}$,

$ \mathsf{G}(\mathbf{L}) \vdash    \{ \overrightarrow{\Box_{g_m\gamma_m}} \}^{\flat_{gj}},\mathcal{A}_p (    \{ \overrightarrow{\Box_{g_m\gamma_m}} \}^{\flat_{gj}} ;\emptyset)  \ra    $  from induction hypothesis. After applying $(R\neg)$, $(\Box_{Kn})$,$(L\neg)$  we obtain

$ \mathsf{G}(\mathbf{L}) \vdash \Phi, \Diamond_{g_j} \mathcal{A}_p (    \{ \overrightarrow{\Box_{g_m\gamma_m}} \}^{\flat_{gj}} ;\emptyset),\overrightarrow{\Box_{g_m}\gamma_m }  \ra  \overrightarrow{\Box_{d_n}\delta_n },\Psi   $ 
\item  
for each $\Box d_i\delta_i \in \{\overrightarrow{\Box_{d_n}\delta_n } \}$, by induction hypothesis, we obtain:
 
$  \mathsf{G}(\mathbf{L}) \vdash \{\overrightarrow{\Box_{g_m}\gamma_m }  \}^{\flat_{di}}, \mathcal{A}_p (   \{\overrightarrow{\Box_{g_m}\gamma_m }  \}^{\flat_{di}}; \delta_i   )  
\ra \delta_i  .$
After applying $ (\Box_{Kn})$,   we obtain

$ \mathsf{G}(\mathbf{L}) \vdash\Box{d_i} \mathcal{A}_p (   \{\overrightarrow{\Box_{g_m}\gamma_m }  \}^{\flat_{di}}; \delta_i   ) ,\overrightarrow{\Box_{g_m}\gamma_m } , \Phi
\ra \overrightarrow{\Box_{d_n}\delta_n },\Psi $

\end{itemize}

After applying $(L\lor)$ finite many times to the above results, we obtain:

\[
 \mathsf{G}(\mathbf{L}) \vdash
\bigvee_{q\in\Psi}q
\vee
\bigvee_{r\in\Phi}\neg r 
\vee 
\bigvee_{\Box_{g_j}\gamma_j \in \{\overrightarrow{\Box_{g_m}\gamma_m }\}} \Diamond_{g_j} \mathcal{A}_p (    \{ \overrightarrow{\Box_{g_m\gamma_m}} \}^{\flat_{gj}} ;\emptyset) 
\]

\[
  \vee \bigvee_{\Box d_i\delta_i \in \{\overrightarrow{\Box_{d_n}\delta_n } \}}   \Box{d_i} \mathcal{A}_p (   \{\overrightarrow{\Box_{g_m}\gamma_m }  \}^{\flat_{di}}; \delta_i   ) ,\overrightarrow{\Box_{g_m}\gamma_m } , \Phi
\ra \overrightarrow{\Box_{d_n}\delta_n },\Psi 
\]


In (iii), we consider the last rules applied in the derivation of  $\vdash_{\gkdn} \Pi,\Gamma \ra \Delta, \Lambda$.

 When it is an initial sequent, we need the Proposition \ref{prop:initial casein main thm in gkn}.
When the last rules are logical rules, the proof is straight.

When the last rule is $(\Box_{K_n})$, we have a derivation of 
$\Pi, \Gamma \ra \Delta, \Lambda$.
Then the arguments are divided into the following cases:
\begin{enumerate}
    \item The right principal formula $\Box_i A$ is in the multiset $\Lambda$.

\begin{center}
    
\AxiomC{$\Pi'',\Gamma''\ra A$}
\RightLabel{$(\Box_{K_n}) $}
\UnaryInfC{$\Pi', \Box_i \Pi'',\Phi,\overrightarrow{\Box g_m \gamma _m}, \Box_i \Gamma'' \ra \Box_i A, \Lambda', \overrightarrow{\Box d_n \delta_n}, \Psi $}
\DisplayProof

\end{center}
where

\begin{itemize}
        \item $\Pi', \Box_i \Pi''$ is $\Pi$, and $\Pi'$ contains propositional variables, $\bot$ and outmost-boxed formula except $\Box_i$;
    \item $\Phi,\overrightarrow{\Box g_m \gamma_m}, \Box_i \Gamma''$ is $\Gamma$, where $\Box_i$ is not among  $\overrightarrow{\Box g_m  }$, $\Phi$ contains only propositional variables, $\bot$;
    \item $\Box_i A, \Lambda'$ is $\Lambda$, and $\Lambda'$ contains propositional variables, $\bot$ and outmost-boxed formula; 
    \item $ \overrightarrow{\Box d_n \delta_n}, \Psi$ is $\Delta$, and $\Psi$ contains only propositional variables, $\bot$.
\end{itemize}   
    
    \begin{enumerate}
        \item Some formulas in $\Gamma$ are principal.

        In this case, the above $\Gamma''$ is non-empty. From the assumption $p\notin \mathsf{V}(\Pi\cup\Lambda)$, then $p\notin \mathsf{V}(\Pi''\cup \{A\})$. Then by induction hypothesis, 
 \begin{center}
        \AxiomC{$\Pi''\ra \mathcal{A}_p( \Gamma'';\emptyset   ), A$}
          \RightLabel{\footnotesize $( L\neg) $}
     \UnaryInfC{{$\Pi'', \neg \mathcal{A}_p( \Gamma'';\emptyset   )\ra  A$}}
     \RightLabel{\footnotesize $(\Box_{K_n}) $}
     \UnaryInfC{{$\Pi',\Box_i\Pi'', \Box_i \neg \mathcal{A}_p( \Gamma'';\emptyset   ),\ra  \Box_iA$}}
     \RightLabel{\footnotesize $(R\neg ) $}
     \UnaryInfC{{$\Pi',\Box_i\Pi'' \ra \Diamond_i  \mathcal{A}_p( \Gamma'';\emptyset   ), \Box_iA$}}
          \RightLabel{\footnotesize $(RW ),(R\lor ) $}
     \UnaryInfC{ }
     \DisplayProof   
     \[\Pi',\Box_i\Pi'' \ra 
     \bigvee_{ \Box_i\gamma \in  \{\overrightarrow{\Box g_m \gamma_m}, \Box_i \Gamma''  \}}
     \Diamond_i
     \mathcal{A}_p (    \{ \overrightarrow{\Box_{g_m\gamma_m}}, \Box_i\Gamma''
     \}^{\flat_{i}} ;\emptyset) , \Box_i A
     \]
 \end{center}
 Then we obtain the derivation of $\Pi \ra \mathcal{A}_p(\Gamma;\Delta),\Lambda$ from applying $(R\lor) $ and weakening rules for finitely many times.
 
 It is remarked that, $  \{ \overrightarrow{\Box_{g_m\gamma_m}}, \Box_i\Gamma''    \}^{\flat_{i}} = \Gamma''$ since  $\Box_i$ is not among  $\overrightarrow{\Box g_m  }$.
        
        \item All formulas in $\Gamma$ are not principal. That is, $\Gamma''=\emptyset$. From the assumption, we derive:

         \begin{center}
        \AxiomC{$\Pi''\ra   A$} 
     \RightLabel{\footnotesize $(\Box_{K_n}) $}
     \UnaryInfC{{$\Pi'',\Box_i\Pi'' \ra  \Box_iA, \Lambda'$}}
     \DisplayProof
 \end{center}
 Then we obtain the derivation of $\Pi \ra \mathcal{A}_p(\Gamma;\Delta),\Lambda$ from applying   weakening rules for finitely many times.
        
    \end{enumerate}
    
    \item The right principal formula $\Box_i A$ is in the multiset $\Delta$.
       \begin{enumerate}
        \item Some formulas in $\Gamma$ are principal.

\begin{center}
    
\AxiomC{$\Pi'',\Gamma''\ra A$}
\RightLabel{\footnotesize $(\Box_{K_n}) $}
\UnaryInfC{$\Pi', \Box_i \Pi'',\Phi,\overrightarrow{\Box g_m \gamma _m}, \Box_i \Gamma'' \ra \Box_i A, \overrightarrow{\Box d_n \delta_n}, \Psi, \Lambda $}
\DisplayProof

\end{center}
where

\begin{itemize}
        \item $\Pi', \Box_i \Pi''$ is $\Pi$, and $\Pi'$ contains propositional variables, $\bot$ and outmost-boxed formula except $\Box_i$;
    \item $\Phi,\overrightarrow{\Box g_m \gamma_m}, \Box_i \Gamma''$ is $\Gamma$, where $\Box_i$ is not among  $\overrightarrow{\Box g_m  }$, $\Phi$ contains only propositional variables, $\bot$;
    \item  $\Lambda$ contains propositional variables, $\bot$ and outmost-boxed formula; 
    \item $ \Box_i A, \overrightarrow{\Box d_n \delta_n}, \Psi$ is $\Delta$, and $\Psi$ contains only propositional variables, $\bot$.
\end{itemize}   

        In this case, the above $\Gamma''$ is non-empty. From the assumption $p\notin \mathsf{V}(\Pi)$, then $p\notin \mathsf{V}(\Pi'')$. Then by induction hypothesis, 
 \begin{center}
        \AxiomC{$\Pi''\ra \mathcal{A}_p( \Gamma'';A   )$}
     \RightLabel{\footnotesize $(\Box_{K_n}) $}
     \UnaryInfC{{$\Pi',\Box_i\Pi''\ra  \Box_i  \mathcal{A}_p( \Gamma'';A   ),\Lambda$}}

          \RightLabel{\footnotesize $(RW ), (R\lor ) $}
     \UnaryInfC{     }
     \DisplayProof

      \[\Pi',\Box_i\Pi'' \ra 
     \bigvee_{\Box_i A \in  \{ \Box_i A, \overrightarrow{\Box d_n \delta_n}   \} }
     \Box_i
     \mathcal{A}_p (    \{ \overrightarrow{\Box_{g_m\gamma_m}}, \Box_i\Gamma''
     \}^{\flat_{i}} ;A) , \Lambda
     \] 
 \end{center}

 Then we obtain the derivation of $\Pi \ra \mathcal{A}_p(\Gamma;\Delta),\Lambda$ from applying $(R\lor) $ and weakening rules for finitely many times.

  It is remarked that, $   \{ \overrightarrow{\Box_{g_m\gamma_m}}, \Box_i\Gamma''
     \}^{\flat_{i}}  = \Gamma''$
  since  $\Box_i$ is not among  $\overrightarrow{\Box g_m  }$.
 
        \item All formulas in $\Gamma$ are not principal. Then, $\Gamma''$ is empty.

From the assumption $p\notin \mathsf{V}(\Pi)$, then $p\notin \mathsf{V}(\Pi'')$. Then by induction hypothesis, 
 \begin{center}
        \AxiomC{$\Pi''\ra \mathcal{A}_p( \emptyset;A   )$}
     \RightLabel{\footnotesize $(\Box_{K_n}) $}
     \UnaryInfC{{$\Pi',\Box_i\Pi''\ra  \Box_i  \mathcal{A}_p( \emptyset;A   ),\Lambda$}}

          \RightLabel{\footnotesize $(RW ), (R\lor ) $}
     \UnaryInfC{     }
     \DisplayProof

      \[\Pi',\Box_i\Pi'' \ra 
     \bigvee_{\Box_i A \in  \{ \Box_i A, \overrightarrow{\Box d_n \delta_n}   \} }
     \Box_i
     \mathcal{A}_p (    \{ \overrightarrow{\Box_{g_m\gamma_m}}
     \}^{\flat_{i}} ;A) , \Lambda
     \] 
 \end{center}
  
 Then we obtain the derivation of $\Pi \ra \mathcal{A}_p(\Gamma;\Delta),\Lambda$ from applying $(R\lor) $ and weakening rules for finitely many times.

  It is remarked that, $   \{ \overrightarrow{\Box_{g_m\gamma_m}}
     \}^{\flat_{i}}  = \emptyset$,
  since $\Gamma''$ is empty and $\Box_i$ is not among  $\overrightarrow{\Box g_m  }$.

    \end{enumerate}
\end{enumerate}

When the last rule is $(\Box_{D_n})$, the derivation of $\Gamma,\Pi\ra \Delta,\Lambda$ is in the form of 
\begin{center}
    
\AxiomC{$\Pi'',\Gamma''\ra $}
\RightLabel{\footnotesize $(\Box_{D_n}) $}
\UnaryInfC{$\Pi', \Box_i \Pi'',\Phi,\overrightarrow{\Box g_m \gamma _m}, \Box_i \Gamma'' \ra \Delta, \Lambda$}
\DisplayProof

\end{center}
where
\begin{itemize}
        \item $\Pi', \Box_i \Pi''$ is $\Pi$, and $\Pi'$ contains propositional variables, $\bot$ and outmost-boxed formula except $\Box_i$;
    \item $\Phi,\overrightarrow{\Box g_m \gamma_m}, \Box_i \Gamma''$ is $\Gamma$, where $\Box_i$ is not among  $\overrightarrow{\Box g_m  }$, $\Phi$ contains only propositional variables, $\bot$;
    \item  $\Lambda$ contains propositional variables, $\bot$ and outmost-boxed formula; 
    \item  $\Delta$ contains propositional variables, $\bot$ and outmost-boxed formula.
\end{itemize}   

Our arguments are divided into the following cases:
\begin{enumerate}

    \item  Some formulas in  $\Gamma$ are principal .

        In this case, the above $\Gamma''$ is non-empty. From the assumption $p\notin \mathsf{V}(\Pi)$, then $p\notin \mathsf{V}(\Pi'')$. Then by induction hypothesis, 
        
 \begin{center}
 
        \AxiomC{$\Pi''\ra \mathcal{A}_p( \Gamma'';\emptyset  )$}
        \RightLabel{\footnotesize $(L\neg)$}
        \UnaryInfC{$\Pi'',\neg \mathcal{A}_p( \Gamma'';\emptyset  )\ra $}
     \RightLabel{\footnotesize $(\Box_{D_n}) $}
     \UnaryInfC{{$\Pi',\Box_i\Pi'', \Box_i \neg  \mathcal{A}_p( \Gamma'';\emptyset  )\ra \Lambda$}}
     \RightLabel{\footnotesize ($R\neg$)}
  \UnaryInfC{{$\Pi',\Box_i\Pi'' \ra \Diamond_i  \mathcal{A}_p( \Gamma'';\emptyset  ),\Lambda$}}
          \RightLabel{\footnotesize $(RW ), (R\lor ) $}
     \UnaryInfC{     }
     \DisplayProof
      \[\Pi',\Box_i\Pi'' \ra 
       \bigvee_{ \Box_i\gamma \in  \{\overrightarrow{\Box g_m \gamma_m}, \Box_i \Gamma''  \}}
     \Diamond_i
     \mathcal{A}_p (    \{ \overrightarrow{\Box_{g_m\gamma_m}}, \Box_i\Gamma''
     \}^{\flat_{i}} ;\emptyset) , 
     \Lambda
     \] 
 \end{center}

 Then we obtain the derivation of $\Pi \ra \mathcal{A}_p(\Gamma;\Delta),\Lambda$ from applying $(R\lor) $ and weakening rules for finitely many times.

  It is remarked that, $   \{ \overrightarrow{\Box_{g_m\gamma_m}}, \Box_i\Gamma''
     \}^{\flat_{i}}  = \Gamma''$
  since  $\Box_i$ is not among  $\overrightarrow{\Box g_m  }$.

    \item  None of formulas  in $\Gamma$ are principal.

 All formulas in $\Gamma$ are not principal. That is, $\Gamma''=\emptyset$. From the assumption, we derive:

         \begin{center}
        \AxiomC{$\Pi''\ra   $} 
     \RightLabel{\footnotesize $(\Box_{K_n}) $}
     \UnaryInfC{{$\Pi',\Box_i\Pi'' \ra    \Lambda$}}
     \DisplayProof
 \end{center}
 Then we obtain the derivation of $\Pi \ra \mathcal{A}_p(\Gamma;\Delta),\Lambda$ from applying   weakening rules for finitely many times. \qedhere
\end{enumerate}
\end{proof}

\begin{defn}
    In language $\mathcal{L}^1$, let $p$ be a propositional variable and $B$ be a formula. We define $\mathcal{A}_p(B)$ as $\mathcal{A}_p(\emptyset;B)$.
    Furthermore, we define $\mathcal{E}_p(B)$ as $\neg\mathcal{A}_p(\neg B)$, namely $\neg \mathcal{A}_p(\emptyset;\neg B)$.
\end{defn}

\begin{cor}
\label{cor:uip in kn, kdn, first}
    Uniform interpolation properties are satisfied in $\gkn$ and $\gkdn$ language $\mathcal{L}^1$.

 Let $\mathbf{L}\in \{ \mathbf{K}_n, \mathbf{KD}_n\}$.
For any formula $B(\overrightarrow{q},\overrightarrow{r})$, such that all $q$ are different from all $r$,
 there exists a formula (pre-interpolant) $\mathcal{I}_{pre}(B, \overrightarrow{q} )$  such that:
\begin{enumerate}
    \item all $ \overrightarrow{r}$ do not occur in $\mathcal{I}_{pre}(B, \overrightarrow{q} )$;
    \item $\mathcal{I}_{pre} (B, \overrightarrow{q} ) \ra B(\overrightarrow{q},\overrightarrow{r})  $ is derivable in $\mathsf{G}(\mathbf{L})$;
    \item  for any formula $A(\overrightarrow{p},\overrightarrow{q})$, where
    all $p$ are different from all $q$, 
    if $A(\overrightarrow{p},\overrightarrow{q})  \ra  B(\overrightarrow{q},\overrightarrow{r})$ is derivable in $\mathsf{G}
    (\mathbf{L})$ then   $
    A(\overrightarrow{p},\overrightarrow{q})\ra
\mathcal{I}_{pre}(B,\overrightarrow{q} )
    $ is derivable in $\mathsf{G}(\mathbf{L})$.
\end{enumerate}

   Furthermore,  for any formula $A(\overrightarrow{p},\overrightarrow{q})$, such that all $q$ are different from all $p$,
 there exists a formula (post-interpolant) $\mathcal{I}_{post}(A, \overrightarrow{q} )$  such that:
\begin{enumerate}
   \item all $\overrightarrow{p}$ do not occur in $\mathcal{I}_{post}(A, \overrightarrow{q} )$; 
    \item $A(\overrightarrow{p},\overrightarrow{q})\ra\mathcal{I}_{post} (A, \overrightarrow{q} )  $ is derivable in $\mathsf{G}(\mathbf{L})$;
    \item  for any formula $B(\overrightarrow{q},\overrightarrow{r})$, where
    all $r$ are different from all $q$, 
    if $A(\overrightarrow{p},\overrightarrow{q})  \ra  B(\overrightarrow{q},\overrightarrow{r})$ is derivable in $\mathsf{G}
    (\mathbf{L})$ then   $\mathcal{I}_{post}(A,\overrightarrow{q} )\ra B(\overrightarrow{q},\overrightarrow{r}) $ is derivable in $\mathsf{G}
    (\mathbf{L})$.
\end{enumerate}
\end{cor}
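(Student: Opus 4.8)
The plan is to derive everything from Theorem~\ref{thm:main theorem of gkn} by iterating the single‑variable construction $\mathcal{A}_p(\Gamma;\Delta)$ in the special case $\Gamma=\emptyset$, and then to obtain the post‑interpolant from the pre‑interpolant by contraposition. Throughout I use the admissibility of cut (proved above), the admissibility of weakening (Proposition~\ref{prop:weakening of k,d,t}), the invertibility of the propositional rules (Proposition~\ref{prop:invertibility of all logical rules in Gkn}), and the routine facts that $A\ra A$ and $\ra B,\neg B$ and $\neg B, B\ra$ are derivable in $\mathsf{G}(\mathbf{L})$.

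\emph{Pre-interpolant.} Write $\overrightarrow{r}=r_1,\dots,r_n$. I define the pre-interpolant by forgetting $r_1,\dots,r_n$ one at a time: put $B_0:=B$, $B_{j+1}:=\mathcal{A}_{r_{j+1}}(B_j)=\mathcal{A}_{r_{j+1}}(\emptyset;B_j)$, and $\mathcal{I}_{pre}(B,\overrightarrow{q}):=B_n$. Property~(1) follows from Theorem~\ref{thm:main theorem of gkn}(i): $\mathsf{V}(B_{j+1})\subseteq\mathsf{V}(B_j)\setminus\{r_{j+1}\}$, so inductively $\mathsf{V}(B_n)\subseteq\mathsf{V}(B)\setminus\{\overrightarrow{r}\}\subseteq\{\overrightarrow{q}\}$. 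For property~(2), Theorem~\ref{thm:main theorem of gkn}(ii) with $\Gamma=\emptyset$, $\Delta=\{B_j\}$ gives $\mathsf{G}(\mathbf{L})\vdash B_{j+1}\ra B_j$ for each $j$, and chaining these $n$ sequents by cut yields $\mathsf{G}(\mathbf{L})\vdash\mathcal{I}_{pre}(B,\overrightarrow{q})\ra B$. For property~(3), assume $\mathsf{G}(\mathbf{L})\vdash A(\overrightarrow{p},\overrightarrow{q})\ra B(\overrightarrow{q},\overrightarrow{r})$; since the three variable lists are pairwise disjoint, $r_j\notin\mathsf{V}(A)$ for every $j$, so Theorem~\ref{thm:main theorem of gkn}(iii) (applied with $\Pi=\{A\}$, $\Gamma=\Lambda=\emptyset$, $\Delta=\{B_j\}$, and eliminated variable $r_{j+1}$) upgrades $\mathsf{G}(\mathbf{L})\vdash A\ra B_j$ to $\mathsf{G}(\mathbf{L})\vdash A\ra B_{j+1}$; starting from the hypothesis $\mathsf{G}(\mathbf{L})\vdash A\ra B_0$ and iterating $n$ times gives $\mathsf{G}(\mathbf{L})\vdash A\ra\mathcal{I}_{pre}(B,\overrightarrow{q})$.

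\emph{Post-interpolant.} Set $\mathcal{I}_{post}(A,\overrightarrow{q}):=\neg\,\mathcal{I}_{pre}(\neg A,\overrightarrow{q})$, where in $\mathcal{I}_{pre}(\neg A,\overrightarrow{q})$ the list to be forgotten is $\overrightarrow{p}$ (so for a single variable $p$ this is exactly $\mathcal{E}_p(A)=\neg\mathcal{A}_p(\neg A)$). Property~(1) is immediate from the pre-interpolant case applied to $\neg A$, since $\mathsf{V}(\neg A)=\mathsf{V}(A)$ and negation adds no variables. For property~(2): by the pre-interpolant property~(2) for $\neg A$ we have $\mathsf{G}(\mathbf{L})\vdash\mathcal{I}_{pre}(\neg A,\overrightarrow{q})\ra\neg A$; cutting this against $\neg A, A\ra$ gives $A,\mathcal{I}_{pre}(\neg A,\overrightarrow{q})\ra$, and $(R\neg)$ yields $A\ra\neg\mathcal{I}_{pre}(\neg A,\overrightarrow{q})$. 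For property~(3): assume $\mathsf{G}(\mathbf{L})\vdash A\ra B$; applying $(L\neg)$ then $(R\neg)$ gives $\mathsf{G}(\mathbf{L})\vdash\neg B\ra\neg A$, and since $\mathsf{V}(\neg B)\cap\mathsf{V}(\neg A)\subseteq\{\overrightarrow{q}\}$ and the variables $\overrightarrow{p}$ to be forgotten do not occur in $\neg B$, the pre-interpolant property~(3) for $\neg A$ yields $\mathsf{G}(\mathbf{L})\vdash\neg B\ra\mathcal{I}_{pre}(\neg A,\overrightarrow{q})$; cutting this against the derivable $\ra B,\neg B$ on the cut formula $\neg B$ gives $\mathsf{G}(\mathbf{L})\vdash\,\ra\mathcal{I}_{pre}(\neg A,\overrightarrow{q}),B$, and one application of $(L\neg)$ with $\mathcal{I}_{pre}(\neg A,\overrightarrow{q})$ principal gives $\mathsf{G}(\mathbf{L})\vdash\neg\mathcal{I}_{pre}(\neg A,\overrightarrow{q})\ra B$, i.e.\ $\mathsf{G}(\mathbf{L})\vdash\mathcal{I}_{post}(A,\overrightarrow{q})\ra B$.

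\emph{Main obstacle.} The substantive content is already contained in Theorem~\ref{thm:main theorem of gkn}; what remains is bookkeeping. The points needing genuine care are: (a) verifying that the pairwise disjointness of $\overrightarrow{p},\overrightarrow{q},\overrightarrow{r}$ implicit in the notation $A(\overrightarrow{p},\overrightarrow{q})$, $B(\overrightarrow{q},\overrightarrow{r})$ really does guarantee, at every stage of the iteration, that the variable being forgotten is absent from the fixed side, so that clause~(iii) of the main theorem applies (without this the statement would in fact be false, as the example $A=B=x$ with $x\in\overrightarrow{r}$ shows); and (b) the small propositional rearrangements — contraposition, the double‑negation handling, and chaining $n$ implications — which all go through by admissibility of cut and weakening and invertibility of the propositional rules. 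No fresh argument about the modal rules is needed here.
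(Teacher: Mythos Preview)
Your proposal is correct and follows essentially the same route as the paper: iterate the single-variable operator $\mathcal{A}_p(\emptyset;-)$ to build the pre-interpolant, and obtain the post-interpolant by negation duality, invoking clauses (i)--(iii) of Theorem~\ref{thm:main theorem of gkn} at each step together with admissible cut and weakening.

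The one cosmetic difference worth noting is in the post-interpolant. The paper defines $\mathcal{I}_{post}(A,\overrightarrow{q})$ as the iterated $\mathcal{E}_{p_1}(\mathcal{E}_{p_2}(\dots(\mathcal{E}_{p_n}(A))\dots))$, which by unfolding $\mathcal{E}_p(B)=\neg\mathcal{A}_p(\neg B)$ inserts a pair $\neg\neg$ at every layer, whereas you take simply $\neg\,\mathcal{I}_{pre}(\neg A,\overrightarrow{q})=\neg\mathcal{A}_{p_1}(\mathcal{A}_{p_2}(\dots(\mathcal{A}_{p_n}(\neg A))\dots))$. The two formulas are provably equivalent in $\mathsf{G}(\mathbf{L})$, and your choice is slightly cleaner; indeed the paper's own argument for item~3 (passing from $\ra B,\neg A$ to $\ra B,\mathcal{A}_{p_n}(\neg A)$ to $\ra B,\mathcal{A}_{p_{n-1}}(\mathcal{A}_{p_n}(\neg A))$ and so on) really produces your formula rather than the iterated $\mathcal{E}$, so the difference is purely notational.
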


\begin{proof}
First, we show the case of pre-interpolatant. Given an arbitrary formula $B(\overrightarrow{q},\overrightarrow{r})$.
Let $\mathcal{I}_{pre} (B, \overrightarrow{q} ) $ be $\mathcal{A}r_1 ( \mathcal{A}r_2(   \dots (\mathcal{A}r_m (B(\overrightarrow{q},\overrightarrow{r}))) \dots ))$, where $r_1,\dots,r_m $ stands for $\overrightarrow{r}$.
Then, according to Definition \ref{dfn:Ap formula in Gkn} we can easily check that $\{r_1,\dots,r_m \} \cap \mathsf{V} ( \mathcal{I}_{pre} (B, \overrightarrow{q} )   ) =\emptyset $.
Item 2 and 3 can be easily proved from (ii),(iii) of Theorem \ref{thm:main theorem of gkn}. 

Next, we show the case of post-interpolant.  Given an arbitrary formula $A(\overrightarrow{p},\overrightarrow{q})$.
Let $\mathcal{I}_{post} (A, \overrightarrow{q} ) $ be $\mathcal{E}p_1 ( \mathcal{E}p_2(   \dots (\mathcal{E}p_m (A(\overrightarrow{p},\overrightarrow{q}))) \dots ))$, where $p_1,\dots,p_n $ stands for $\overrightarrow{p}$.
Similarly, we can check that $\{p_1,\dots,p_m \} \cap \mathsf{V} ( \mathcal{I}_{post} (A, \overrightarrow{q} )   ) =\emptyset $.
According to (ii) of Theorem \ref{thm:main theorem of gkn}, 
$\mathsf{G}(\mathbf{L})  \vdash  \mathcal{A}p_m ( \neg A(\overrightarrow{p},\overrightarrow{q})) \ra   \neg A(\overrightarrow{p},\overrightarrow{q})$.
Then we obtain $\mathsf{G}(\mathbf{L})\vdash A(\overrightarrow{p},\overrightarrow{q})  \ra \neg \mathcal{A}p_m ( \neg A(\overrightarrow{p},\overrightarrow{q})) $.
Since $\mathsf{G}(\mathbf{L})\vdash  \mathcal{A}p_{m-1} (\mathcal{A}p_m ( \neg A(\overrightarrow{p},\overrightarrow{q}))) \ra \mathcal{A}p_m ( \neg A(\overrightarrow{p},\overrightarrow{q})) $, we have $\mathsf{G}(\mathbf{L})\vdash    \neg \mathcal{A}p_m ( \neg A(\overrightarrow{p},\overrightarrow{q})) \ra \neg \mathcal{A}p_{m-1} (\mathcal{A}p_m ( \neg A(\overrightarrow{p},\overrightarrow{q})))$.
By repeating this argument  and applying cut rules, we can  obtain 
$\mathsf{G}(\mathbf{L})\vdash    A(\overrightarrow{p},\overrightarrow{q}) \ra \mathcal{E}p_1 ( \mathcal{E}p_2(   \dots (\mathcal{E}p_m (A(\overrightarrow{p},\overrightarrow{q}))) \dots ))$.

Item 3 of post-interpolant can be shown by (iii) of  Theorem \ref{thm:main theorem of gkn}.
From assumption, we can easily derive $\mathsf{G}(\mathbf{L})\vdash  \ra   B(\overrightarrow{q},\overrightarrow{r}), \neg A(\overrightarrow{p},\overrightarrow{q})$.
Then, we derive $\mathsf{G}(\mathbf{L})\vdash  \ra   B(\overrightarrow{q},\overrightarrow{r}), \mathcal{A}{p_n} ( \neg A(\overrightarrow{p},\overrightarrow{q}))$. Then, by a similar argument, we can derive that $\mathsf{G}(\mathbf{L})\vdash  \ra   B(\overrightarrow{q},\overrightarrow{r}), \mathcal{A}{p_{n-1}}(\mathcal{A}{p_n} ( \neg A(\overrightarrow{p},\overrightarrow{q})))$. After repeating this argument, we can obtain  $\mathsf{G}(\mathbf{L})\vdash \mathcal{E}p_1 ( \mathcal{E}p_2(   \dots (\mathcal{E}p_m (A(\overrightarrow{p},\overrightarrow{q}))) \dots )) \ra  B(\overrightarrow{q},\overrightarrow{r})$ by $(L\neg)$.
\end{proof}

\section{UIP in Logic $\mathbf{ KT_n}$}

\subsection{Proof-theoretic properties of sequent calculus}
\label{sec:sequent cal of T}












 





We can observe the following example to find that a backward-proof search may not terminate in $\gktn$.

\begin{ex}
    The backward proof-searching of the sequent $p\Rightarrow\Diamond_i (p\wedge q)$ does not terminate (cf. \cite{bilkova2007uniform}). 
\end{ex}

\begin{center}

\AxiomC{$p, \Box_i\neg(p\wedge q) \ra p $}
\AxiomC{$ p, \Box_i\neg(p\wedge q) \ra p,q  $}

\AxiomC{$ \vdots$}
\UnaryInfC{$ p, \Box_i\neg(p\wedge q) \ra q,q$}
\RightLabel{\scriptsize $(R\wedge )$}
\BinaryInfC{$p, \Box_i\neg(p\wedge q) \ra p\wedge q,q$}
\RightLabel{\scriptsize $(L\neg)$}
\UnaryInfC{$p, \neg (p\wedge q),\Box_i\neg(p\wedge q) \ra q$}
\RightLabel{\scriptsize $(\Box_{T_n})$}
\UnaryInfC{$p, \Box_i\neg(p\wedge q) \ra q$}

\RightLabel{\scriptsize ($R\wedge $)}
\BinaryInfC{ $p, \Box_i\neg(p\wedge q) \ra p\wedge q$  }
\RightLabel{\scriptsize $(L\neg )$}
\UnaryInfC{$p, \neg (p\wedge q),\Box_i\neg(p\wedge q) \ra $  }
\RightLabel{\scriptsize $(\Box_{T_n})$}
\UnaryInfC{$p, \Box_i\neg(p\wedge q) \ra $  }
\RightLabel{\scriptsize $(R\neg)$}
\UnaryInfC{$p \ra \neg \Box_i\neg(p\wedge q) $  }
\DisplayProof
    
\end{center}

This possible loop in derivation of $\gktn$ will bring difficulties in defining $\mathcal{A}$-formula.
In the following, a sequent calculus with
built-in loop-check mechanism will be presented.
This sequent is an expansion of single modal calculus from   \cite{bilkova2007uniform}.

A {\bf T}-sequent, denoted by $\Sigma|\Gamma \ra \Delta$ is obtained from adding a finite multiset $\Sigma$  into a sequent $\Gamma\ra \Delta$, where $\Sigma$ containing only outmost-boxed formulas,

\begin{table}[htb]
\caption{System $\gktnplus$.}
\label{table:gktnplus}
\begin{center}
\hrule
\begin{tabular}{ll}

\multicolumn{2}{c}{ Sequent Calculus $\gktnplus$: } \\


{ \bf Initial Sequents}   & $\Sigma|\Gamma ,p \ra p, \Delta$\hspace{15pt} $\Sigma|\bot, \Gamma\ra \Delta$  \\ 
 \\







{\bf Logical Rules} & 

\AxiomC{$\Sigma|\Gamma \ra \Delta ,A_1$}
\AxiomC{$\Sigma|\Gamma \ra \Delta ,A_2$}
\RightLabel{\scriptsize $(R\wedge )$}
\BinaryInfC{$\Sigma|\Gamma \ra  \Delta ,A_1\wedge A_2$}
\DisplayProof

\AxiomC{$\Sigma|A_1, A_2,\Gamma \ra \Delta$}
\RightLabel{\scriptsize $(L\wedge )$}
\UnaryInfC{$\Sigma|A_1\wedge A_2,\Gamma \ra \Delta$}
\DisplayProof
\\

\,  &

\AxiomC{$\Sigma|\Gamma \ra A_1, A_2$}
\RightLabel{\scriptsize $(R\lor )$}
\UnaryInfC{$\Sigma|\Gamma \ra A_1\lor A_2$}
\DisplayProof

\AxiomC{$\Sigma|A_1,\Gamma \ra \Delta$}
\AxiomC{$\Sigma|A_2,\Gamma \ra \Delta$}
\RightLabel{\scriptsize $(L\lor )$}
\BinaryInfC{$\Sigma|A_1\lor A_2, \Gamma \ra \Delta$}
\DisplayProof

\\

\,  &

\AxiomC{$\Sigma|A_1,\Gamma \ra \Delta, A_2$}
\RightLabel{\scriptsize $(R\rightarrow )$}
\UnaryInfC{$\Sigma|\Gamma \ra \Delta,A_1\rightarrow A_2$}
\DisplayProof

\AxiomC{$\Sigma|\Gamma \ra \Delta, A$}
\AxiomC{$\Sigma|A_2,\Gamma \ra \Delta$}
\RightLabel{\scriptsize $(L\rightarrow )$}
\BinaryInfC{$\Sigma|A_1\rightarrow A_2, \Gamma \ra \Delta$}
\DisplayProof

\\

\,  &
\AxiomC{$\Sigma|A, \Gamma \ra  \Delta$}
\RightLabel{\scriptsize $(R\neg )$}
\UnaryInfC{$\Sigma|\Gamma \ra \Delta,\neg A$}
\DisplayProof

\AxiomC{$\Sigma|\Gamma \ra \Delta, A$}
\RightLabel{\scriptsize $(L\neg )$}
\UnaryInfC{$\Sigma|\neg A, \Gamma \ra  \Delta$}
\DisplayProof
\\

{\bf Modal Rule}  &

\AxiomC{$ \emptyset | \Gamma \ra A $ }
\RightLabel{\scriptsize $(\Box_{Kn}^+)$$\dagger$}
\UnaryInfC{$ \Sigma, \Box_i\Gamma | \Pi  \ra \Box_iA, \Omega$}
\DisplayProof
\,
\AxiomC{$ \Box_i A, \Sigma|  \Gamma, A \ra \Delta $}
\RightLabel{\scriptsize $(\Box_{Tn}^+)$}
\UnaryInfC{$\Sigma |\Gamma, \Box i A    \ra \Delta$}
\DisplayProof

\\

\multicolumn{2}{l}{ {\footnotesize $\dagger$: $\Sigma$ contains only outmost-boxed formulas except $\Box_i$, $\Pi$
contains only propositional variables and  $\bot$.}
   } \\
   
\multicolumn{2}{l}{ {\footnotesize $\Omega$ contains only propositional variables, $\bot$ or outmost-boxed formulas. }
   } 
   \\

\hline

\end{tabular}
\end{center}
\end{table}

\begin{defn}
The system of $\gktnplus$  is defined in Table \ref{table:gktnplus}.

\end{defn}

\begin{ex}
    The backward proof-searching of the sequent  $\emptyset|p\Rightarrow\Diamond_i (p\wedge q)$  terminates.
\end{ex}

\begin{center}

\AxiomC{$ \Box_i\neg(p\wedge q) | p \ra p  $}
\AxiomC{$ \Box_i\neg(p\wedge q) | p \ra q$} 

\RightLabel{\scriptsize $(R\wedge )$}
\BinaryInfC{$\Box_i\neg(p\wedge q) | p \ra p\wedge q$}
\RightLabel{\scriptsize $(L\neg)$}
\UnaryInfC{$\Box_i\neg(p\wedge q) | p ,\neg (p\wedge q)\ra $}
\RightLabel{\scriptsize $(\Box_{T_n})$}
\UnaryInfC{$ \emptyset| p ,\Box_i\neg(p\wedge q) \ra$}
\RightLabel{\scriptsize $(R\neg)$}
\UnaryInfC{$ \emptyset| p \ra \neg \Box_i\neg(p\wedge q) $}
\DisplayProof
    
\end{center}

\begin{defn}
    Let $A$ be a formula in $\mathcal{L}^1$,  $\mathsf{b} (A)$ denotes the number of boxed subformulas in  $A$. Given a set $\Gamma$, $\mathsf{b} (\Gamma)$ denotes the sum of all $\mathsf{b} (A)$ for  $A\in \Gamma$. 
\end{defn}
It is noted that  $\mathsf{b} (\Gamma)$ was defined for a set $\Gamma$, not a multiset.


\begin{defn}

Given multi-sets $\Gamma,\Delta,\Gamma^\prime,\Delta^\prime$ of formulas, 
\begin{center}
    $\langle\mathsf{b}(\Gamma),\mathsf{wt}(\Delta)\rangle <
\langle\mathsf{b}(\Gamma^\prime),\mathsf{wt}(\Delta^\prime)\rangle$
\end{center}
denotes a lexicographical order on a pair of natural number. 
We define a well-ordered relation of {\bf T}-sequent.

\begin{center}
$(\Sigma|\Gamma\ra \Delta)\prec (\Sigma^\prime|\Gamma^\prime\ra \Delta^\prime) $ if and only if 

$\langle \mathsf{b}( \Sigma,\Gamma, \Delta),\mathsf{wt}(\Gamma,\Delta)\rangle    < 
\langle \mathsf{b}( \Sigma^\prime,\Gamma^\prime, \Delta^\prime),\mathsf{wt}(\Gamma^\prime,\Delta^\prime)\rangle$     
\end{center}

\end{defn}

\begin{prop}
\label{prop:gktplus terminate}

A backward proof-searching in $\gktnplus$ always terminates.
\end{prop}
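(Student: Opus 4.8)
The plan is to exhibit the order $\prec$ on $\mathbf{T}$-sequents as a termination measure: I will show that every rule of $\gktnplus$, read bottom-up, strictly decreases $\prec$, in the sense that every premise is $\prec$ its conclusion. Granting this, an arbitrary backward proof-search produces a tree each of whose branches is a strictly $\prec$-decreasing sequence of $\mathbf{T}$-sequents; since $\prec$ is the lexicographic order on $\mathbb{N}\times\mathbb{N}$ it is well-founded, so every branch is finite, and since every rule of $\gktnplus$ has only finitely many premises the tree is finitely branching, hence finite by K\"onig's lemma. Thus the search terminates.

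It then remains to verify the decrease rule by rule. For the logical rules $\Sigma$ is left untouched and no new boxed subformula is created, because the boxed subformulas occurring in $A_1\wedge A_2$, $A_1\lor A_2$, $A_1\rightarrow A_2$ (resp. $\neg A$) are precisely those occurring in $A_1$ and $A_2$ (resp. $A$); hence $\mathsf{b}(\Sigma,\Gamma,\Delta)$ does not increase, while deleting the principal connective makes $\mathsf{wt}(\Gamma,\Delta)$ strictly smaller, so the pair drops lexicographically. For $(\Box_{Tn}^{+})$, the conclusion $\Sigma\mid\Gamma,\Box_iA\ra\Delta$ passes to $\Box_iA,\Sigma\mid\Gamma,A\ra\Delta$: the only newly occurring formula is $A$, whose boxed subformulas are among those of $\Box_iA$, which is still present, so $\mathsf{b}(\Sigma,\Gamma,\Delta)$ is unchanged; and since the spent principal $\Box_iA$ is recorded into the $\Sigma$-slot, which is excluded from $\mathsf{wt}(\Gamma,\Delta)$, while $\Box_iA$ in the antecedent is replaced by the lighter $A$, the second coordinate $\mathsf{wt}(\Gamma,\Delta)$ strictly decreases. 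For $(\Box_{Kn}^{+})$, the conclusion $\Sigma,\Box_i\Gamma\mid\Pi\ra\Box_iA,\Omega$ passes to $\emptyset\mid\Gamma\ra A$, and here the second coordinate need not drop, so the work is done in the first: every boxed subformula occurring in the premise already occurs in the conclusion, while some boxed subformula occurring in the conclusion does not occur in the premise. Indeed, if $\Box_iA$ occurred in the premise it would be a proper subformula of some $\gamma\in\Gamma$ (it cannot be one of $A$ by size), which would force $\Box_i\gamma$, hence a strictly longer element of $\Gamma$, to occur as well; iterating contradicts finiteness of $\Gamma$, and the same reasoning applies to each $\Box_i\gamma$. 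So $\mathsf{b}(\Sigma,\Gamma,\Delta)$ strictly decreases. In every case the premise is $\prec$ the conclusion.

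The delicate point I expect is the interplay of $(\Box_{Tn}^{+})$ and $(\Box_{Kn}^{+})$ in the first coordinate. The rule $(\Box_{Tn}^{+})$ does not reduce the weight of the sequent in any naive sense, so it must be kept exactly flat in $\mathsf{b}$ and made to drop in $\mathsf{wt}(\Gamma,\Delta)$ — which is precisely why the loop-check pushes the principal $\Box_i$-formula into $\Sigma$ rather than discarding it — whereas $(\Box_{Kn}^{+})$ is the single rule that, read upward, can unfold genuinely new boxed subformulas, and so it must be checked to strictly decrease $\mathsf{b}$. Making the notion ``boxed subformulas occurring in a $\mathbf{T}$-sequent'' precise enough to support both claims at once (flat under $(\Box_{Tn}^{+})$, strictly decreasing under $(\Box_{Kn}^{+})$, and non-increasing under the logical rules) is the main thing to get right; once that bookkeeping is fixed the rest is routine.
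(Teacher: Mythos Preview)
Your proposal is correct and follows exactly the paper's strategy: check that every rule of $\gktnplus$, read upward, strictly decreases the lexicographic measure $\prec$, so that backward search is well-founded and (by finite branching) terminates. Your treatment is if anything more careful than the paper's---you actually justify the strict drop of $\mathsf{b}$ under $(\Box_{Kn}^+)$ via the maximal-size argument (the paper merely asserts it), and your implicit reading of $\mathsf{b}(\Sigma,\Gamma,\Delta)$ as the number of \emph{distinct} boxed subformulas occurring in the sequent is precisely what is needed for the $(\Box_{Tn}^+)$ case to keep the first coordinate flat.
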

\begin{proof}
   Consider a rule $(\circ)$ in $\gktnplus$ as

\begin{center}
    \AxiomC{$\Sigma|\Gamma\ra \Delta$}
        \AxiomC{$\dots$}
    \RightLabel{\scriptsize $(\circ)$}
    \BinaryInfC{$\Sigma^\prime|\Gamma^\prime\ra \Delta^\prime $}
    \DisplayProof
\end{center}
   For any premises $\Sigma|\Gamma\ra \Delta$ and the conclusion $\Sigma^\prime|\Gamma^\prime\ra \Delta^\prime $, $(\Sigma|\Gamma\ra \Delta)\prec (\Sigma^\prime|\Gamma^\prime\ra \Delta^\prime) $. This can be checked from the following arguments.
   
    Let  $(\circ)$ be an arbitrary logical rule.  We have $\mathsf{b}( \Sigma ,\Gamma , \Delta )=\mathsf{b}( \Sigma^\prime,\Gamma^\prime, \Delta^\prime)$, however $\mathsf{wt}(\Gamma,\Delta)<\mathsf{wt}(\Gamma^\prime,\Delta^\prime)$.
   Let  $(\circ)$ be   $(\Box_{Tn}^+) $. We still have   $\mathsf{b}( \Sigma ,\Gamma , \Delta )=\mathsf{b}( \Sigma^\prime,\Gamma^\prime, \Delta^\prime)$, and $\mathsf{wt}(\Gamma,\Delta)<\mathsf{wt}(\Gamma^\prime,\Delta^\prime)$. 
 Let  $(\circ)$ be $(\Box_{Kn}^+) $. We have $\mathsf{b}( \Sigma ,\Gamma , \Delta )<\mathsf{b}( \Sigma^\prime,\Gamma^\prime, \Delta^\prime)$.
\end{proof}

\begin{prop} The following weakening rules are  admissible in $\gktnplus$.

    \begin{center}
\begin{tabular}{lll} 

\AxiomC{$\Sigma|\Gamma \ra \Delta$}
\RightLabel{\scriptsize $(RW)$}
\UnaryInfC{$\Sigma|\Gamma \ra \Delta , A$}
\DisplayProof

   & 

\AxiomC{$\Sigma|\Gamma \ra \Delta$}
\RightLabel{\scriptsize $(LW)$}
\UnaryInfC{$\Sigma|A,\Gamma \ra \Delta$}
\DisplayProof
&
\AxiomC{$\Sigma|\Gamma \ra \Delta$}
\RightLabel{\scriptsize $(LW^+)$}
\UnaryInfC{$\Sigma,\Box_i A|\Gamma \ra \Delta$}
\DisplayProof

\end{tabular}
\end{center}
\end{prop}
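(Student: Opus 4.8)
The plan is to prove the admissibility of the three weakening rules $(RW)$, $(LW)$, $(LW^+)$ for $\gktnplus$ simultaneously, by induction following the same pattern as Proposition~\ref{prop:weakening of k,d,t}, i.e.\ a double induction on the weight of the weakening formula $A$ and on the height of the given derivation of $\Sigma \mid \Gamma \ra \Delta$. The base case (initial sequents) is immediate: adding a formula anywhere to $\Sigma \mid \Gamma, p \ra p, \Delta$ or $\Sigma \mid \bot, \Gamma \ra \Delta$ still yields an initial sequent of the same shape. For the inductive step, I would case-split on the last rule applied.

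For the logical rules, the weakening formula is in the context (since the weakening region is disjoint from the principal formula), so one simply applies the induction hypothesis to the premise(s) and re-applies the same rule; note that the $\Sigma$-zone is passed unchanged through all logical rules, so $(LW^+)$ poses no extra difficulty there. The interesting cases are the two modal rules. For $(\Box_{Tn}^+)$, with conclusion $\Sigma \mid \Gamma, \Box_i A \ra \Delta$ coming from $\Box_i A, \Sigma \mid \Gamma, A \ra \Delta$: weakening into $\Gamma$, $\Delta$, or $\Sigma$ corresponds to weakening into the premise (into $\Gamma$, $\Delta$, or the $\Sigma$-zone respectively), so the induction hypothesis applies and we re-apply $(\Box_{Tn}^+)$.

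The main obstacle is the $(\Box_{Kn}^+)$ rule, whose premise $\emptyset \mid \Gamma \ra A$ discards the contexts $\Sigma$, $\Pi$, $\Omega$ of the conclusion $\Sigma, \Box_i \Gamma \mid \Pi \ra \Box_i A, \Omega$; moreover the side condition restricts what $\Sigma$, $\Pi$, $\Omega$ may contain. So I would treat each target region separately. (a) Weakening $\Pi$ by a propositional variable or $\bot$, or weakening $\Omega$ by a variable, $\bot$, or an outmost-boxed formula: just re-apply $(\Box_{Kn}^+)$ with the enlarged $\Pi$ or $\Omega$, since the side condition is still met and the premise is untouched --- this is height-preserving. (b) Weakening the $\Sigma$-zone by an outmost-boxed formula $\Box_j C$ with $j \neq i$: again re-apply $(\Box_{Kn}^+)$ with $\Sigma, \Box_j C$, as the side condition still holds. (c) The genuinely delicate subcase is weakening $\Pi$ (or $\Omega$) by a formula that is \emph{not} a variable, $\bot$, or outmost-boxed --- e.g.\ a conjunction or an implication --- or weakening the $\Sigma$-zone by $\Box_i C$ (same modality as the principal one); here the resulting sequent violates the side condition, so $(\Box_{Kn}^+)$ cannot be re-applied directly. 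As in Proposition~\ref{prop:weakening of k,d,t}, the resolution is that we do not need height preservation: one first applies the induction hypothesis, or one decomposes the offending formula using the (already available) logical rules until its immediate subformulas can be absorbed into $\Pi$ or $\Omega$, appealing to the induction on the weight of $A$. In particular, for the $\Box_i C$-into-$\Sigma$ case one moves $\Box_i C$ into the left zone $\Box_i \Gamma$ by first deriving $\emptyset \mid \Gamma, C \ra A$ from $\emptyset \mid \Gamma \ra A$ via the (non-height-preserving) admissibility of $(LW)$ at lower weight, then re-applying $(\Box_{Kn}^+)$ with premise $\emptyset \mid \Gamma, C \ra A$ to conclude $\Sigma, \Box_i C, \Box_i \Gamma \mid \Pi \ra \Box_i A, \Omega$. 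Throughout, one records (as the paper already remarks after Proposition~\ref{prop:weakening of k,d,t}) that these weakenings are admissible but not height-preserving, which is all that later arguments require.
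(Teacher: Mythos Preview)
The paper itself omits the proof of this proposition, so there is no original argument to compare against. Your approach---a simultaneous double induction on the weight of the weakening formula and on the height of the derivation, patterned after Proposition~\ref{prop:weakening of k,d,t}---is the natural one and is essentially correct.

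There is one small gap in your case analysis for the $(\Box_{Kn}^+)$ rule. Your cases (a), (b), (c) cover: weakening $\Pi$ by a variable or $\bot$; weakening $\Omega$ by a variable, $\bot$, or boxed formula; weakening the $\Sigma$-zone by $\Box_j C$ with $j\neq i$; weakening $\Pi$ or $\Omega$ by a non-atomic non-boxed formula; and weakening the $\Sigma$-zone by $\Box_i C$. What is missing is $(LW)$ with an outmost-boxed formula $\Box_j C$ into the antecedent $\Pi$ (for either $j=i$ or $j\neq i$). Since $\Pi$ may contain only propositional variables and $\bot$, you cannot simply enlarge $\Pi$ and re-apply the rule, and your ``decompose via the logical rules'' recipe in (c) does not apply because the left-introduction rule for $\Box_j$ is $(\Box_{Tn}^+)$, not a logical rule.

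The fix is straightforward and uses exactly the ingredients you already have. First obtain $\Sigma,\Box_j C,\Box_i\Gamma \mid \Pi \ra \Box_i A,\Omega$: if $j\neq i$ this is your case~(b); if $j=i$ this is your case~(c) (weaken the premise by $C$ and re-apply $(\Box_{Kn}^+)$). Then use the induction hypothesis on weight to weaken the antecedent by $C$, obtaining $\Sigma,\Box_j C,\Box_i\Gamma \mid C,\Pi \ra \Box_i A,\Omega$, and finally apply $(\Box_{Tn}^+)$ to conclude $\Sigma,\Box_i\Gamma \mid \Box_j C,\Pi \ra \Box_i A,\Omega$. With this case added, your argument is complete and matches the intended level of detail.
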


    

\begin{prop}
    For any formula $A$, a sequent $\emptyset |A\ra A$ is derivable in  $\gktnplus$.
\end{prop}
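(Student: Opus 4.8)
The plan is to prove $\emptyset \mid A \ra A$ by induction on the weight $\mathsf{wt}(A)$, exactly as one proves the identity-axiom-generalization $A \ra A$ in an ordinary G3-style calculus, but paying attention to the extra component $\Sigma$ in a {\bf T}-sequent and to the two modal rules of $\gktnplus$. The base case $A = p$ is the initial sequent $\emptyset \mid p \ra p$, and $A = \bot$ is the initial sequent $\emptyset \mid \bot \ra$ (weakened on the right if one wants $\emptyset\mid\bot\ra\bot$). For the propositional connectives the argument is entirely routine: given the induction hypotheses $\emptyset\mid A_1\ra A_1$ and $\emptyset\mid A_2\ra A_2$, one applies admissible weakening (Proposition on weakening for $\gktnplus$, which provides $(LW)$, $(RW)$ and $(LW^+)$) to fatten these into the premises needed for the relevant left/right rule, and then applies that rule; e.g. for $A_1\wedge A_2$ weaken to $\emptyset\mid A_1,A_2\ra A_1$ and $\emptyset\mid A_1,A_2\ra A_2$, apply $(R\wedge)$ to get $\emptyset\mid A_1,A_2\ra A_1\wedge A_2$, then $(L\wedge)$. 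The cases for $\vee$, $\ra$ and $\neg$ are handled symmetrically.

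The only case with any content is $A = \Box_i B$. Here I would first invoke the induction hypothesis $\emptyset\mid B\ra B$, then apply the rule $(\Box_{Kn}^+)$ with premise $\emptyset\mid B\ra B$: choosing $\Sigma=\emptyset$, $\Gamma=\{B\}$, $\Pi=\emptyset$, $\Omega=\emptyset$, and the principal modality $\Box_i$, the conclusion is $\Box_i B \mid \;\ra \Box_i B$. This is a {\bf T}-sequent with $\Box_i B$ sitting in the $\Sigma$-zone rather than the antecedent, so I then need to move it into the antecedent; this is precisely what the rule $(\Box_{Tn}^+)$ does in reverse — but $(\Box_{Tn}^+)$ reads $\Box_i A,\Sigma\mid\Gamma,A\ra\Delta$ over $\Sigma\mid\Gamma,\Box_i A\ra\Delta$, so applied to the premise $\Box_i B\mid\;\ra\Box_i B$ — wait, we want $\Box_i B$ in the antecedent on the right side too. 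Concretely: from $\Box_i B\mid B \ra \Box_i B$ (obtained by weakening $B$ into the antecedent of $\Box_i B\mid\;\ra\Box_i B$ via $(LW)$, or by taking $\Gamma=\{B\}$ directly when applying $(\Box_{Kn}^+)$ — indeed apply $(\Box_{Kn}^+)$ to $\emptyset\mid B\ra B$ once more noting $B$ may appear in $\Pi$ only if it is atomic, so instead start from $\emptyset\mid B\ra B$, weaken the conclusion target appropriately), apply $(\Box_{Tn}^+)$ to obtain $\emptyset\mid \Box_i B\ra\Box_i B$, i.e. $\emptyset\mid A\ra A$.

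Let me state the $\Box_i B$ step cleanly, since the zone-bookkeeping is the subtle part. By IH we have a derivation of $\emptyset\mid B\ra B$. Apply $(\Box_{Kn}^+)$ with $\Sigma:=\Box_i B$ (allowed: it must consist of outmost-boxed formulas with modality $\neq\Box_i$? — here the side condition says ``$\Sigma$ contains only outmost-boxed formulas except $\Box_i$'', so $\Box_i B$ is \emph{not} allowed in $\Sigma$; hence instead put nothing there), $\Gamma:=B$, $\Pi:=\emptyset$, $\Omega:=\emptyset$: conclusion $\Box_i B\mid\;\ra\Box_i B$. Now apply $(LW)$ to weaken $B$ into the antecedent: $\Box_i B\mid B\ra\Box_i B$. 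Finally apply $(\Box_{Tn}^+)$ with $\Sigma:=\emptyset$, $\Gamma:=\emptyset$, $\Delta:=\Box_i B$: its premise is $\Box_i B,\emptyset\mid\emptyset,B\ra\Box_i B$, which is exactly $\Box_i B\mid B\ra\Box_i B$, and its conclusion is $\emptyset\mid\Box_i B\ra\Box_i B$, as desired.

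The main obstacle — really the only place where care is required — is this interaction between the $(\Box_{Kn}^+)$/$(\Box_{Tn}^+)$ rules and the side condition that $\Sigma$ in $(\Box_{Kn}^+)$ may not contain $\Box_i$-formulas: one must route $\Box_i B$ through the antecedent (via admissible $(LW)$) rather than through $\Sigma$, and then use $(\Box_{Tn}^+)$ to reintroduce it as a boxed antecedent formula in the conclusion. Everything else is the standard weight-induction, and I would write the propositional cases as ``straightforward, using admissibility of weakening'' without spelling them out. One should double-check that the admissible-weakening proposition has indeed been established for $\gktnplus$ (it is the proposition immediately preceding this one in the excerpt), since the whole inductive step leans on it.
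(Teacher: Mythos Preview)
Your proposal is correct and follows essentially the same approach as the paper: induction on the structure of $A$, with the only nontrivial case $A=\Box_i B$ handled by the chain $\emptyset\mid B\Rightarrow B$ (IH), then $(\Box_{Kn}^+)$ to get $\Box_i B\mid\emptyset\Rightarrow\Box_i B$, then $(LW)$ to get $\Box_i B\mid B\Rightarrow\Box_i B$, then $(\Box_{Tn}^+)$ to conclude $\emptyset\mid\Box_i B\Rightarrow\Box_i B$. One small remark on your commentary: the side condition on $\Sigma$ in $(\Box_{Kn}^+)$ is not really the obstacle---$\Box_i B$ lands in the left zone as the \emph{principal} part $\Box_i\Gamma$, not as the context $\Sigma$; the $(LW)$ step is needed simply to supply the extra copy of $B$ that $(\Box_{Tn}^+)$ requires in its premise.
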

\begin{proof}
    When $A$ is in the form of $\Box_i B$, we can have the following derivation.

\begin{center}
\noLine
\AxiomC{I.H.}
\UnaryInfC{$\emptyset| B \ra B$}
\RightLabel{\scriptsize$(\Box_{Kn}^+) $ }
\UnaryInfC{$\Box_i B | \emptyset \ra \Box_i B$ }
\RightLabel{\scriptsize$(LW) $ }
\UnaryInfC{$\Box_i B | B \ra \Box_i B$ }
\RightLabel{\scriptsize$(\Box_{Tn}^+) $ }
\UnaryInfC{$\emptyset| \Box_i B \ra \Box_i B$ }
\DisplayProof
\end{center}
\end{proof}

\begin{prop} 
\label{prop:inverti of t plus rule}
The $(\Box^+_{T_n})$ rule is height-preserving invertible  in $\gktnplus$.
    
\end{prop}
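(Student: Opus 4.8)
The plan is to argue by induction on the height $n$ of a given derivation of the conclusion $\Sigma \,|\, \Gamma, \Box_i A \ra \Delta$, transforming it into a derivation of the premise $\Box_i A, \Sigma \,|\, \Gamma, A \ra \Delta$ of height at most $n$. Throughout the argument the distinguished formula $\Box_i A$ sits in the ordinary antecedent (the zone to the right of the bar), and the transformation simply moves it into the left zone while adding its unboxing $A$ to the ordinary antecedent. No form of weakening needs to be invoked.

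For the base case, if $\Sigma \,|\, \Gamma, \Box_i A \ra \Delta$ is an initial sequent, then since $\Box_i A$ is neither a propositional variable nor $\bot$ it cannot be the principal formula of that initial sequent, so it is a side formula; hence $\Box_i A, \Sigma \,|\, \Gamma, A \ra \Delta$ still has the same principal $p$ on both sides (or the same $\bot$ in the antecedent) and is again an initial sequent, of height $\le n$. If the last rule is one of the propositional rules, then $\Box_i A$ is again a side formula and the left zone $\Sigma$ is unchanged by that rule; one applies the induction hypothesis to each premise (of height $< n$) and reapplies the same propositional rule, obtaining a derivation of height $\le n$.

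It remains to handle the modal rules. First, $(\Box^+_{K_n})$ cannot be the last rule: by Table \ref{table:gktnplus} the ordinary antecedent of its conclusion is a multiset $\Pi$ containing only propositional variables and $\bot$, whereas ours contains $\Box_i A$, a contradiction. So suppose the last rule is $(\Box^+_{T_n})$, with principal formula $\Box_j B$. If this occurrence of $\Box_j B$ is the distinguished occurrence of $\Box_i A$, then the premise of that rule instance is literally $\Box_i A, \Sigma \,|\, \Gamma, A \ra \Delta$, i.e. exactly the target, of height $n-1$. Otherwise $\Box_j B$ is a distinct occurrence, the conclusion has the form $\Sigma \,|\, \Gamma', \Box_j B, \Box_i A \ra \Delta$, and the premise is $\Box_j B, \Sigma \,|\, \Gamma', B, \Box_i A \ra \Delta$ of height $n-1$; apply the induction hypothesis to this premise to get $\Box_i A, \Box_j B, \Sigma \,|\, \Gamma', B, A \ra \Delta$ of height $\le n-1$, and then reapply $(\Box^+_{T_n})$ with principal formula $\Box_j B$ to obtain $\Box_i A, \Sigma \,|\, \Gamma', A, \Box_j B \ra \Delta$ of height $\le n$, which is the required sequent. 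This last step is sound because $(\Box^+_{T_n})$ carries no side condition and its only effect on the left zone is to enlarge it, so the two modal moves commute.

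The main (and essentially only) obstacle is this last case: one must verify that ``push $\Box_i A$ into the left zone'' and ``push $\Box_j B$ into the left zone'' may be performed in either order without increasing the height, which reduces to the absence of side conditions on $(\Box^+_{T_n})$ and to the left zone being handled purely additively; the remark that $(\Box^+_{K_n})$ can never be the final rule of such a derivation disposes of what would otherwise be the delicate case.
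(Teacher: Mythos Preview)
Your proof is correct and follows the standard induction on derivation height; the paper states this proposition without proof, so there is nothing to compare against. Your key observations are all sound: $(\Box^+_{K_n})$ is excluded as a last rule by the side condition that $\Pi$ contain only atoms and $\bot$, and in the $(\Box^+_{T_n})$ case the two ``push into the left zone'' moves commute because the rule carries no side condition on $\Sigma$.
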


\begin{prop} The following contraction rules are  height-preserving admissible in $\gktnplus$.

  \begin{center}
\begin{tabular}{lll}

            \AxiomC{$\Sigma|\Gamma \ra \Delta , A,A$}
\RightLabel{\scriptsize $(RC)$}
\UnaryInfC{$\Sigma|\Gamma \ra \Delta ,A $}
\DisplayProof

&

\AxiomC{$\Sigma|A,A,\Gamma \ra \Delta$}
\RightLabel{\scriptsize $(LC)$}
\UnaryInfC{$\Sigma|A,\Gamma \ra \Delta$}
\DisplayProof

&
\AxiomC{$\Sigma,\Box_iA,\Box_iA|\Gamma \ra \Delta$}
\RightLabel{\scriptsize $(LC^+)$}
\UnaryInfC{$\Sigma,\Box_iA|\Gamma \ra \Delta$}
\DisplayProof

\end{tabular}

    \end{center}
    
\end{prop}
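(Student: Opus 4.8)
The plan is to prove all three rules simultaneously by induction on the height $n$ of the given derivation $\mathcal{D}$ of the premise, case-splitting on the last rule of $\mathcal{D}$ and on whether the contracted occurrence is principal in it. The base case is immediate: every initial sequent of $\gktnplus$ remains initial after deleting one of the two copies — for $(LC^+)$ this is clear because initiality does not depend on the stored zone $\Sigma$. When the last rule leaves the contracted occurrence untouched, I apply the induction hypothesis (for whichever of the three contractions is relevant, at strictly smaller height) to each premise and reapply the same rule; here the stored zone is merely carried along, so in this subcase $(LC^+)$ is handled exactly like $(LC)$ and $(RC)$. The mildly degenerate modal instances of this pattern are also easy: against $(\Box_{Kn}^+)$, if the contracted formula lies in the passive parts $\Pi$, $\Omega$, or in a second occurrence of the distinguished succedent $\Box_i C$, the premise $\emptyset\mid\Gamma\ra C$ is untouched, so one just reapplies $(\Box_{Kn}^+)$ with a contracted context.

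The standard obstacle, principal propositional rules, is resolved as in the textbook argument: invert the last rule height-preservingly on the surviving copy of the contracted formula, apply the induction hypothesis once or twice to contract the newly exposed subformulas, and reapply the rule; this uses the height-preserving invertibility of the propositional rules of $\gktnplus$, which is established exactly as in Proposition~\ref{prop:invertibility of all logical rules in Gkn}. The analogue for a principal boxed antecedent formula $\Box_i A$ and last rule $(\Box_{Tn}^+)$ uses the height-preserving invertibility of $(\Box_{Tn}^+)$ from Proposition~\ref{prop:inverti of t plus rule}: inverting on the surviving copy of $\Box_i A$ produces a second copy of $\Box_i A$ in $\Sigma$ and a second copy of $A$ in the antecedent-proper, which I delete with the $(LC^+)$ and $(LC)$ induction hypotheses (both at height below $n$) before reapplying $(\Box_{Tn}^+)$.

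The genuinely new work is $(LC^+)$ against a modal rule, and there the point is to track in which zone the two copies of $\Box_i A$ live. If the last rule is $(\Box_{Kn}^+)$ with principal agent $k$: when $k\neq i$ both copies lie in the side context, which is passive, so I reapply $(\Box_{Kn}^+)$ with the contracted context and the unchanged premise; when $k=i$ both copies come from the block $\Box_i\Gamma$ moved by the rule, so the premise contains $A,A$ and I contract it using the $(LC)$ induction hypothesis at height $n-1$, after which reapplying $(\Box_{Kn}^+)$ yields precisely the target sequent. If the last rule is $(\Box_{Tn}^+)$, reading the rule upward shows it inserts an extra copy of its principal boxed formula into $\Sigma$; when that principal formula happens to be $\Box_i A$, the premise carries three copies in $\Sigma$, and applying the $(LC^+)$ induction hypothesis exactly once — not twice — restores exactly the intended contracted conclusion when $(\Box_{Tn}^+)$ is reapplied.

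The hard part will be this last bookkeeping: ensuring that the reconstructed derivation ends in the intended contracted sequent and not in an over- or under-contracted variant, and that the reliance on height-preserving invertibility of $(\Box_{Tn}^+)$ (which is in turn what the stored copy of $\Box_i A$ in the premise of $(\Box_{Tn}^+)$, and hence the rule $(LC^+)$ itself, is there to support) is always applied at a strictly smaller height. Once the zone-matching analysis above is laid out, the remaining cases are routine and I would dispatch them briefly.
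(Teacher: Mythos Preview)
Your proposal is correct and follows essentially the same approach the paper takes for the analogous contraction result in $\gkn$, $\gkdn$, $\gktn$ (the paper gives no explicit proof for the $\gktnplus$ case): a simultaneous induction on derivation height, with the principal cases discharged via height-preserving invertibility of the propositional rules and of $(\Box_{Tn}^+)$. Your zone-tracking analysis for $(LC^+)$ against the modal rules is accurate, including the observation that in $(\Box_{Kn}^+)$ with principal agent $i$ both copies of $\Box_i A$ necessarily come from the moved block $\Box_i\Gamma$ (since the side context excludes $\Box_i$-formulas), so the $(LC)$ hypothesis applies to the premise.
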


\begin{prop} The following cut rules are  height-preserving admissible in $\gktnplus$.

\begin{center}
\begin{tabular}{lll}

 \AxiomC{$\emptyset|\Gamma \ra  \Delta, C$}
\AxiomC{$\emptyset|C,\Gamma ^{\prime} \ra \Delta ^{\prime} $}
\RightLabel{\scriptsize $(Cut)$}
\BinaryInfC{$\emptyset|\Gamma,\Gamma^{\prime} \ra \Delta, \Delta  ^{\prime} $}
\DisplayProof

&

 \AxiomC{$\Sigma|\Gamma \ra  \Delta, \Box_i C$}
\AxiomC{$\Box_i C,\Sigma^\prime |,\Gamma ^{\prime} \ra \Delta ^{\prime} $}
\RightLabel{\scriptsize $(Cut^+)$}
\BinaryInfC{$\Sigma^\prime, \Sigma|\Gamma,\Gamma^{\prime} \ra \Delta, \Delta  ^{\prime} $}
\DisplayProof

\end{tabular}
\end{center}
\end{prop}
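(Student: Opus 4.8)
The plan is to reuse, with the appropriate modal adjustments, the cut‑elimination argument already carried out for $\mathsf{G}(\mathbf{L})$: assuming a single application of $(Cut)$ or of $(Cut^+)$ stands at the root of an otherwise cut‑free derivation, one constructs a cut‑free derivation with the same conclusion. The argument is a double induction whose primary parameter is the complexity of the cut formula — this is $C$ for $(Cut)$ and $\Box_iC$ for $(Cut^+)$, so the latter strictly dominates its immediate subformula $C$ — and whose secondary parameter is the sum of the heights of the two premise derivations $\mathcal{D}_1,\mathcal{D}_2$. It is worth stressing at the outset that the two cut rules must be eliminated \emph{simultaneously} in one induction: reducing a $(Cut^+)$ on $\Box_iC$ spawns a $(Cut)$ on $C$, of strictly smaller complexity, while reducing a $(Cut)$ on a boxed formula $\Box_jD$ against $(\Box_{Tn}^+)$ spawns a $(Cut^+)$ on $\Box_jD$ (same complexity, smaller height) together with a $(Cut)$ on $D$; neither rule can be handled on its own.

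I would split the analysis, as before, into (1) one premise an initial sequent; (2) the cut formula not principal in the last rule of $\mathcal{D}_1$ or not principal in that of $\mathcal{D}_2$; (3) principal in both. Case (1) and the purely logical instances of (2) and (3) are literally those of the cut‑admissibility proof for $\mathsf{G}(\mathbf{L})$, using the height‑preserving admissibility of contraction (and, where the logical reductions require it, height‑preserving invertibility of the logical rules and of $(\Box_{Tn}^+)$, Proposition~\ref{prop:inverti of t plus rule}); the weakening rules are invoked only as admissible rules, not as height‑preserving ones, exactly as remarked there. For the modal instances of case (2) the dichotomy is: if the cut formula sits in a pure context zone of a modal inference — one of $\Sigma,\Pi,\Omega$ in $(\Box_{Kn}^+)$, or the succedent or history context of $(\Box_{Tn}^+)$ — then that inference is re‑derived from its premise without the cut formula and the remaining parts of the cut conclusion are restored by admissible weakening (including $(LW^+)$), the other premise being discarded; if instead the cut formula is merely carried along, one permutes the cut upward into the premise of the modal inference, re‑applies the inference (sometimes with one $(LW^+)$ to reinstate a history formula) and appeals to the secondary hypothesis.

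The genuinely new content is case (3) for $(Cut^+)$, and it lies in tracking the cut formula $\Box_iC$ while it resides in the history zone $\Sigma$ of $\mathcal{D}_2$. Since a boxed formula can enter a succedent only through $(\Box_{Kn}^+)$, after the left‑hand permutations we may assume $\mathcal{D}_1$ ends with $(\Box_{Kn}^+)$ introducing $\Box_iC$ as its principal succedent formula, from a premise $\emptyset|\Gamma^{\ast}\ra C$. In $\mathcal{D}_2$ the history occurrence of $\Box_iC$ is context for the logical rules and for $(\Box_{Tn}^+)$, and is resolved only at a $(\Box_{Kn}^+)$ inference (or at an initial sequent, where it is again context): there, either $\Box_iC$ belongs to the discarded part $\Sigma$ of the inference, in which case it plays no role — re‑derive that conclusion without it, weaken, and discard $\mathcal{D}_1$ — or the principal agent of that $(\Box_{Kn}^+)$ is $i$ and $\Box_iC$ belongs to the unpacked part $\Box_i\Gamma$, so that in the premise it reappears as $C$ in the antecedent, say $\emptyset|\Gamma'',C\ra B$. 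In this last, decisive case one cuts $\emptyset|\Gamma^{\ast}\ra C$ against $\emptyset|\Gamma'',C\ra B$ on $C$ — a $(Cut)$ whose cut formula has strictly smaller complexity, so the primary hypothesis applies — obtaining $\emptyset|\Gamma^{\ast},\Gamma''\ra B$, and re‑applies $(\Box_{Kn}^+)$ with principal agent $i$, choosing its three context zones so as to reconstitute the cut conclusion on the nose: the non‑$i$ histories of the two original modal inferences become the new $\Sigma$, their atomic left contexts the new $\Pi$, and their remaining succedent contexts the new $\Omega$. The symmetric modal principal case for $(Cut)$ — $\mathcal{D}_1$ ending with $(\Box_{Kn}^+)$ introducing $\Box_jD$ and $\mathcal{D}_2$ ending with $(\Box_{Tn}^+)$ acting on $\Box_jD$ — is handled exactly as the $(\Box_{Tn})$ case of the earlier cut proof: first cut $\mathcal{D}_1$'s conclusion against $\mathcal{D}_2$'s premise via $(Cut^+)$ on $\Box_jD$ (secondary hypothesis, reduced height), then cut on $D$ via $(Cut)$ (primary hypothesis, reduced complexity), then re‑fold the antecedent with repeated $(\Box_{Tn}^+)$.

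The main obstacle I anticipate is the bookkeeping around the history zone: one must verify that in every rule of $\gktnplus$ the tracked occurrence of $\Box_iC$ either stays in the history (logical rules, $(\Box_{Tn}^+)$) or is unambiguously classified, at a $(\Box_{Kn}^+)$, as ``discarded'' or as ``unpacked'', and that in the unpacked case the $(Cut)$ on $C$ followed by the re‑application of $(\Box_{Kn}^+)$ reproduces the cut conclusion exactly, history zone included. Once this is in place, the remaining cases are a routine transcription of the cut‑elimination arguments already given for $\gkn$, $\gkdn$ and $\gktn$.
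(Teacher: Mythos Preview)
Your approach is exactly the one the paper takes: its entire proof reads ``We need to do a simultaneous induction to show the above results,'' and you have supplied precisely that simultaneous induction, spelling out the case analysis (in particular the history‑zone tracking for $(Cut^+)$ and the interplay between the two cut rules in the principal–principal modal cases) that the paper leaves implicit. One small remark: the statement is labelled ``height‑preserving,'' which is almost certainly a slip in the paper --- neither your argument nor the paper's one‑line proof establishes height preservation, and the reduction in the $(\Box_{Kn}^+)/(\Box_{Tn}^+)$ principal case (two nested cuts followed by repeated $(\Box_{Tn}^+)$) visibly does not preserve height; what you prove, and what the paper intends, is plain admissibility.
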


\begin{proof}
    We need to do a simultaneous induction to show the  above results. 
\end{proof}
 As was mention in \cite{bilkova2007uniform}, $(Cut^+)$ with a formula in a general form cannot enjoy the admissibility. However, this will not affect the proof of UIP.


\begin{prop}
    \label{prop:empty sigma}
        Given multi-sets $\Gamma,\Delta$ of formulas in $\mathcal{L}^1$,
    \begin{center}
         if $\gktnplus\vdash \Sigma|\Gamma\ra \Delta$ then $\gktn\vdash \Sigma,\Gamma\ra \Delta$
    \end{center}
\end{prop}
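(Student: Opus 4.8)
The plan is to argue by induction on the height of the given $\gktnplus$-derivation of $\Sigma\,|\,\Gamma\ra\Delta$, showing in each case that the rule applied can be simulated in $\gktn$ once the bar is dissolved, i.e.\ once $\Sigma$ is absorbed into the antecedent. For the initial sequents $\Sigma\,|\,\Gamma,p\ra p,\Delta$ and $\Sigma\,|\,\bot,\Gamma\ra\Delta$, the corresponding sequents $\Sigma,\Gamma,p\ra p,\Delta$ and $\bot,\Sigma,\Gamma\ra\Delta$ are already initial sequents of $\gktn$. For every logical rule of $\gktnplus$ the component $\Sigma$ plays the role of pure context: the induction hypothesis supplies $\gktn$-derivations of the premises with $\Sigma$ merged into the antecedent, and applying the homonymous logical rule of $\gktn$ (now with $\Sigma$ sitting inside its context) yields the conclusion.

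The two cases requiring care are the modal rules. Consider first $(\Box_{Tn}^+)$, passing from $\Box_i A,\Sigma\,|\,\Gamma, A\ra\Delta$ to $\Sigma\,|\,\Gamma,\Box_i A\ra\Delta$. By the induction hypothesis $\gktn\vdash \Box_i A,\Sigma,\Gamma, A\ra\Delta$; since $\Box_i A,\Sigma,\Gamma, A$ and $A,\Gamma',$ with $\Gamma'=\Sigma,\Gamma$, describe the same multiset, one application of the $\gktn$-rule $(\Box_{Tn})$ with context $\Sigma,\Gamma$ delivers $\gktn\vdash\Sigma,\Gamma,\Box_i A\ra\Delta$. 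This is exactly the point of the design of $(\Box_{Tn}^+)$: moving $\Box_i A$ into the loop-zone mirrors the $\gktn$-rule that retains $\Box_i A$ in the antecedent after its reflexive unpacking.

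For $(\Box_{Kn}^+)$, passing from $\emptyset\,|\,\Gamma\ra A$ to $\Sigma,\Box_i\Gamma\,|\,\Pi\ra\Box_i A,\Omega$ — with $\Sigma$ consisting of outmost-boxed formulas other than $\Box_i$, $\Pi$ of propositional variables and $\bot$, and $\Omega$ of propositional variables, $\bot$, or outmost-boxed formulas — the induction hypothesis gives $\gktn\vdash\Gamma\ra A$. Applying the $\gktn$-rule $(\Box_{Kn})$ with left context $\Sigma,\Pi$ and right context $\Omega$ is legitimate, since $\Sigma,\Pi$ contains only propositional variables, $\bot$, or outmost-boxed formulas different from $\Box_i$, and $\Omega$ contains only propositional variables, $\bot$, or outmost-boxed formulas; this yields $\gktn\vdash\Sigma,\Pi,\Box_i\Gamma\ra\Box_i A,\Omega$, the desired sequent. (The admissibility of weakening in $\gktn$ is available should one prefer to be generous with contexts, but it is in fact not needed here.) The only, rather modest, obstacle is the bookkeeping: verifying that the $\gktnplus$ side conditions specialize to the $\gktn$ side conditions and that the multisets match up once the bar is replaced by a comma.
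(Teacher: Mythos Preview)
Your proof is correct; the paper does not give a proof of this proposition (it is stated without argument and treated as routine), and your induction on the height of the $\gktnplus$-derivation is exactly the natural verification the paper implicitly relies on. The only case requiring any care is $(\Box_{Kn}^+)$, and you handle it correctly by checking that the side conditions on $\Sigma,\Pi$ and $\Omega$ in $\gktnplus$ specialize to the side conditions on the context in $(\Box_{Kn})$.
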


\begin{lem}
\label{lem:equipplloent of gktn and gktnplus}
    Given multi-sets $\Gamma,\Delta$ of formulas in $\mathcal{L}^1$,
    \begin{center}
        $\gktn\vdash \Gamma\ra \Delta$ if and only if $\gktnplus\vdash \emptyset|\Gamma\ra \Delta$
    \end{center}
\end{lem}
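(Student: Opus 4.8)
The plan is to prove the two implications separately. The right-to-left direction is immediate: if $\gktnplus\vdash\emptyset|\Gamma\ra\Delta$, then Proposition~\ref{prop:empty sigma} with $\Sigma=\emptyset$ gives $\gktn\vdash\Gamma\ra\Delta$. So all the work lies in the left-to-right direction, which I would prove by induction on the height of a $\gktn$-derivation of $\Gamma\ra\Delta$, establishing $\gktnplus\vdash\emptyset|\Gamma\ra\Delta$.

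For the base case, an initial sequent of $\gktn$ becomes, after prefixing an empty frozen context, an initial sequent of $\gktnplus$. If the last rule of the $\gktn$-derivation is one of the logical rules, then applying the corresponding logical rule of $\gktnplus$ with empty frozen context to the premises provided by the induction hypothesis yields the conclusion. If the last rule is $(\Box_{Tn})$, with end sequent $\Box_iA,\Gamma\ra\Delta$ and premise $\Box_iA,A,\Gamma\ra\Delta$, I would first derive $\gktnplus\vdash\emptyset|\Box_iA\ra A$: from the reflexivity sequent $\emptyset|A\ra A$ apply $(LW^{+})$ to get $\Box_iA|A\ra A$, then $(\Box_{Tn}^{+})$ to get $\emptyset|\Box_iA\ra A$. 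Cutting this against the premise $\emptyset|\Box_iA,A,\Gamma\ra\Delta$ (available by induction) on the formula $A$ via the admissible rule $(Cut)$, and then contracting the two resulting copies of $\Box_iA$ via the admissible $(LC)$, gives $\emptyset|\Box_iA,\Gamma\ra\Delta$.

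The crux is the case where the last rule is $(\Box_{Kn})$, say with end sequent $\Sigma,\Box_i\Gamma'\ra\Box_iA,\Omega$ and premise $\Gamma'\ra A$, where $\Sigma$ contains only propositional variables, $\bot$, and outmost-boxed formulas whose modality is not $\Box_i$, and $\Omega$ only propositional variables, $\bot$, and outmost-boxed formulas. The obstacle is that the rule $(\Box_{Kn}^{+})$ of $\gktnplus$ forces the $\Box_i$-boxed premises to occur in the \emph{frozen} context of its conclusion, whereas the target sequent must have empty frozen context. My plan: by the induction hypothesis $\gktnplus\vdash\emptyset|\Gamma'\ra A$, so applying $(\Box_{Kn}^{+})$ with its $\Sigma$-, $\Pi$-, and $\Omega$-slots all empty and $\Gamma'$ in the position that becomes $\Box_i\Gamma'$ yields $\gktnplus\vdash\Box_i\Gamma'|\ra\Box_iA$. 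Then I would move the members of $\Box_i\Gamma'$ out of the frozen context one at a time: for each $\Box_iB$ still frozen, cut the reflexivity sequent $\emptyset|\Box_iB\ra\Box_iB$ against the current sequent by the admissible rule $(Cut^{+})$ on the boxed formula $\Box_iB$, which replaces the frozen occurrence of $\Box_iB$ by an occurrence in the ordinary antecedent. Iterating over all of $\Box_i\Gamma'$ produces $\gktnplus\vdash\emptyset|\Box_i\Gamma'\ra\Box_iA$, after which finitely many applications of the admissible weakenings $(LW)$ and $(RW)$ add $\Sigma$ and $\Omega$, giving $\gktnplus\vdash\emptyset|\Sigma,\Box_i\Gamma'\ra\Box_iA,\Omega$.

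I expect this ``unfreezing'' step to be the point needing the most care: one must check that the side condition of $(\Box_{Kn}^{+})$ is satisfied when it is applied to the induction hypothesis, and that each application of $(Cut^{+})$ is legitimate --- recall that this cut is admissible only when the cut formula is a boxed formula, which is exactly our situation. Everything else reduces to the structural properties of $\gktnplus$ already established: derivability of the reflexivity sequents and admissibility of weakening, contraction, and cut.
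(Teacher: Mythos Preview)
Your proposal is correct and follows the same approach the paper sketches: Proposition~\ref{prop:empty sigma} for the right-to-left direction and induction on the $\gktn$-derivation for the other, with your case analysis filling in what the paper leaves implicit. As a minor simplification, both uses of cut can be bypassed by the same unfreezing trick: from $\Box_iB,\Sigma'|\Gamma'\ra\Delta'$ one obtains $\Sigma'|\Box_iB,\Gamma'\ra\Delta'$ directly by weakening in $B$ via $(LW)$ and then applying $(\Box_{Tn}^{+})$, so neither $(Cut)$ nor $(Cut^{+})$ is actually needed.
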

\begin{proof}
    The right-to-left direction can be obtained from the Proposition \ref{prop:empty sigma}.
    The left-to-right direction can be shown by induction on the derivation. 
\end{proof}

\subsection{Main theorem of $\gktn$}
\label{sec:main thm of T}

Then, similar to Definition \ref{dfn:Ap formula in Gkn}, we can define $\mathcal{A}$-formulas in system-{\bf T} as follows.

\begin{defn}
\label{dfn:ap formula for t}
      Let $\Gamma,\Delta$ be finite multi-sets of formulas,  $\Sigma$ be a finite multi-set of out-most boxed formulas,  $p$ be a propositional variable.
      An $\mathcal{A}$-formula $\mathcal{A}_p(\Sigma|\Gamma;\Delta)$  is defined inductively as follows.

\[
\begin{array}
{|c|c|c|c|}\hline & \Sigma|\Gamma;\Delta\text{ matches} & \mathcal{A}_p(\Sigma|\Gamma;\Delta)\text{ equals} \\
\hline

~~1~~ & \Sigma|\Gamma',p;\Delta',p & \top \\

~~2~~ & \Sigma|\Gamma',\bot;\Delta & \top  \\
~~3~~ & \Sigma|\Gamma',C_1\wedge C_2;\Delta & \mathcal{A}_p(\Sigma|\Gamma',C_1,C_2;\Delta) \\
4 & \Sigma|\Gamma;C_1\wedge C_2,\Delta' & \mathcal{A}_p(\Sigma|\Gamma;C_1,\Delta')\land \mathcal{A}_p(\Sigma|\Gamma;C_2,\Delta') \\

5 & \Sigma|\Gamma',C_1\vee C_2;\Delta & \mathcal{A}_p(\Sigma|\Gamma',C_1;\Delta)\land \mathcal{A}_p(\Sigma|\Gamma',C_2;\Delta) \\

~~6~~ & \Sigma|\Gamma;C_1\vee C_2,\Delta' & \mathcal{A}_p(\Sigma|\Gamma';C_1,C_2,\Delta') \\

~~7~~ & \Sigma|\Gamma',\neg C;\Delta & \mathcal{A}_p(\Sigma|\Gamma';C,\Delta) \\

8& \Sigma|\Gamma;\neg C,\Delta' & \mathcal{A}_p(\Sigma|\Gamma,C;\Delta') \\

9 & \Sigma|\Gamma',C_1\rightarrow C_2;\Delta & \mathcal{A}_p(\Sigma|\Gamma';\Delta,C_1)\land \mathcal{A}_p(\Sigma|\Gamma',C_2;\Delta) \\
10 & \Sigma|\Gamma;C_1\rightarrow C_2,\Delta' & \mathcal{A}_p(\Sigma|\Gamma,C_1; \Delta',C_2) \\

11 & \Sigma|\Gamma',\Box_i C;\Delta & \mathcal{A}_p(\Sigma,\Box_i C|\Gamma',C;\Delta) \\

12\dagger& ~~ \overrightarrow{\Box_{s_m}\sigma_m } |\Phi; \overrightarrow{\Box_{d_n}\delta_n }, \Psi ~~  &  \mathsf{X} \\ 
 
\hline
\end{array}
\]
{\footnotesize $\dagger:$ $\Phi$ and  $\Psi$ are multisets containing only propositional variables or $\bot$, 
besides $q,r$ differ from $p$. 
Furthermore $\Phi\cup \overrightarrow{\Box_{s_m}\sigma_m }\cup\overrightarrow{\Box_{d_n}\delta_n }\cup \Psi$ is not empty.}

The formula $\mathsf{X}$ is:

\[\bigvee_{q\in\Psi}q
\vee
\bigvee_{r\in\Phi}\neg r 
\vee 
\bigvee_{\Box_{s_j}\sigma_j \in \{\overrightarrow{\Box_{s_m}\sigma_m }\}} \Diamond_{s_j} \mathcal{A}_p (   \emptyset| \{  \overrightarrow{\Box_{s_m\sigma_m}} \}^{\flat_{s_j}} ;\emptyset) 
\]

\[
  \vee \bigvee_{\Box d_i\delta_i \in \{\overrightarrow{\Box_{d_n}\delta_n } \}}   \Box{d_i} \mathcal{A}_p (  \emptyset| \{\overrightarrow{\Box_{s_m}\sigma_m }  \}^{\flat_{di}}; \delta_i   )
\]


Recall that for any formula $A$, $n\in\mathbb{N}$, $\overrightarrow{A_n}$ stands for  $A_1,\cdots, A_n$. 
$\Gamma^{\flat_i}=\{A|\Box_iA\in \Gamma\}$,

The formula $\mathcal{A}_p(\Sigma|\Gamma;\Delta)$ is defined in the following procedure: at first, the lines $1-11$ are repeatedly applied until it reaches  the line $12$.  We repeat this procedure until $\Sigma|\Gamma;\Delta$ cannot  match any lines in the table, then $\mathcal{A}_p(\Sigma|\Gamma;\Delta)$ is defined as $\bot$. Especially $\mathcal{A}_p(\Sigma|\emptyset;\emptyset)$ is defined as $\bot$.

We can define a well-order relation of $\mathcal{A}$-formulas as follows:
\begin{center}
$\mathcal{A}_p (\Sigma|\Gamma; \Delta)\prec \mathcal{A}_p (\Sigma^\prime|\Gamma^\prime; \Delta^\prime) $ if and only if

$(\Sigma|\Gamma\ra \Delta)\prec (\Sigma^\prime|\Gamma^\prime\ra \Delta^\prime) $ 
\end{center}
Given the fact that all back proof-search in $\gktnplus$ always terminates (in Proposition \ref{prop:gktplus terminate}), we can see that 
such a formula can always be determined.

\end{defn}

\begin{prop}
\label{prop:initial casein main thm in gktnplus}
   Let $\Gamma,\Delta$ be finite multi-sets of formulas,  $\Sigma$ be a finite multi-set of out-most boxed formulas,  $p,q$ be  propositional variable, such that $p\neq q$.  

\begin{enumerate}
    \item $\gktnplus\vdash \emptyset|q\ra \mathcal{A}_p(\Sigma|\Gamma;\Delta,q)$;
    \item $\gktnplus\vdash \emptyset|\neg q\ra \mathcal{A}_p(\Sigma|\Gamma,q;\Delta)$;
    \item $\gktnplus\vdash  \emptyset|\emptyset \ra \mathcal{A}_p(\Sigma|q,\Gamma;\Delta,q)$.
\end{enumerate}\end{prop}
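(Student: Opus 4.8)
The plan is to prove the three items of Proposition~\ref{prop:initial casein main thm in gktnplus} simultaneously, by induction on the well-order relation $\prec$ on $\mathcal{A}$-formulas defined in Definition~\ref{dfn:ap formula for t}. These statements are exactly the analogues (for the loop-checking system $\gktnplus$) of Proposition~\ref{prop:initial casein main thm in gkn}, and they are precisely the facts needed to handle the initial-sequent case in the main theorem for $\gktn$; so the argument should mirror that of Proposition~\ref{prop:initial casein main thm in gkn}, with the extra bookkeeping for the $\Sigma$-component and for the new line $11$ (the $\Box_i C$ case that moves a boxed formula into $\Sigma$). I would unfold $\mathcal{A}_p(\Sigma|\Gamma;\Delta)$ according to which line of the table in Definition~\ref{dfn:ap formula for t} applies, and in each case reduce the goal to an instance of the induction hypothesis on a $\prec$-smaller $\mathcal{A}$-formula, gluing the pieces back together with the (height-preserving admissible) logical rules and the admissible weakening rules of $\gktnplus$.

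First I would dispatch the base cases: if $\Gamma;\Delta$ (together with the relevant marked variable) matches line $1$ or $2$, then $\mathcal{A}_p$ equals $\top$, and in each of the three statements the required sequent is an initial sequent after trivial logical steps — e.g. for (3), $\mathcal{A}_p(\Sigma|q,\Gamma;\Delta,q)=\top$ and $\emptyset|\emptyset\ra\top$ is derivable since $\top\equiv\neg\bot$. Next, for the propositional lines $3$--$10$: here $\mathcal{A}_p(\Sigma|\Gamma;\Delta)$ is built from $\mathcal{A}_p$ of $\prec$-smaller arguments via $\wedge$ (or is literally equal to one smaller $\mathcal{A}_p$), and the marked occurrence of $q$ (or $\neg q$) is simply carried along into the smaller argument — for instance, in line $4$, $\mathcal{A}_p(\Sigma|\Gamma;C_1\wedge C_2,\Delta',q)=\mathcal{A}_p(\Sigma|\Gamma;C_1,\Delta',q)\wedge\mathcal{A}_p(\Sigma|\Gamma;C_2,\Delta',q)$, and applying the induction hypothesis to each conjunct and then $(R\wedge)$ gives (1). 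The new line $11$, $\mathcal{A}_p(\Sigma|\Gamma',\Box_iC;\Delta)=\mathcal{A}_p(\Sigma,\Box_iC|\Gamma',C;\Delta)$, is also immediate: the right-hand side is a $\prec$-smaller $\mathcal{A}$-formula (its $\mathsf{b}$-value drops), and since $q$ is a propositional variable it is unaffected by moving $\Box_iC$ from $\Gamma$ into $\Sigma$, so the induction hypothesis applies verbatim — the conclusion sequent does not mention $\Sigma$ or the boxed formula at all.

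The genuinely interesting case is line $12$, the critical case, where $\mathcal{A}_p(\overrightarrow{\Box_{s_m}\sigma_m}|\Phi;\overrightarrow{\Box_{d_n}\delta_n},\Psi)=\mathsf{X}$ is the big disjunction. For item (1), the marked $q$ lies in $\Psi$, so $q$ is literally one of the disjuncts $\bigvee_{q\in\Psi}q$ of $\mathsf{X}$; hence $\emptyset|q\ra\mathsf{X}$ follows by $(R\lor)$ and the admissibility of right weakening. For item (2), $\neg q$ with $q\in\Phi$ makes $\neg q$ one of the disjuncts $\bigvee_{r\in\Phi}\neg r$, and $\emptyset|\neg q\ra\mathsf{X}$ follows the same way. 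For item (3), the matching sequent is $\overrightarrow{\Box_{s_m}\sigma_m}|q,\Phi;\overrightarrow{\Box_{d_n}\delta_n},q,\Psi$, i.e. $q$ appears both in the (propositional part of the) antecedent and in the succedent; then $q\in\Psi'$ where $\Psi=\Psi',q$ up to reindexing, so again $q$ is a disjunct of $\mathsf{X}$, and $\emptyset|\emptyset\ra\mathsf{X}$ is obtained from the initial sequent $\emptyset|q\ra q$ by $(R\lor)$, weakening, and then... actually one cannot derive $\emptyset|\emptyset\ra q$; instead one uses that $q$ occurs on \emph{both} sides, so the correct reading is that $\mathsf{X}$ contains $q$ as a disjunct and the whole succedent-antecedent pattern forces $\mathcal{A}_p$ to match line $1$ before ever reaching line $12$. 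This is the subtle point and the main obstacle: one must check that the \emph{procedure} defining $\mathcal{A}_p$ (apply lines $1$--$11$ first, only then line $12$) never strips away a line-$1$ match that is present, so that a sequent with $q$ on both sides is resolved to $\top$ rather than descending into a $\mathsf{X}$; equivalently, (3) really reduces to the line-$1$/line-$2$ base case, and the line-$12$ discussion of (3) is vacuous. I would therefore state carefully at the outset that the induction is on the $\prec$-rank of the \emph{target} $\mathcal{A}$-formula as actually produced by the defining procedure, and verify that lines $1$--$2$ take priority whenever applicable; granting that, the remaining cases go through routinely as sketched.
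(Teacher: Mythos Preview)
The paper states this proposition (and its $\gkn/\gkdn$ analogue, Proposition~\ref{prop:initial casein main thm in gkn}) without proof, so there is nothing to compare against directly; your overall strategy --- simultaneous induction along the $\prec$-order of Definition~\ref{dfn:ap formula for t}, case-splitting on the line of the table that applies --- is the natural one and would be what the paper has in mind. However, your treatment of item~(3) at the critical case (line~$12$) contains a genuine error.

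You argue that when $q$ occurs on both sides, line~$1$ must fire first and so the line-$12$ case for~(3) is vacuous. That is not correct: line~$1$ of the table reads $\Sigma\mid\Gamma',p;\Delta',p\ \mapsto\ \top$, and the $p$ there is the \emph{specific} variable being eliminated by $\mathcal{A}_p$, not an arbitrary atom. Since $q\neq p$ by hypothesis, the presence of $q$ on both sides does \emph{not} trigger line~$1$. For a minimal instance, compute $\mathcal{A}_p(\emptyset\mid q;q)$: no line $1$--$11$ applies, line~$12$ fires, and the result is $q\vee\neg q$, not $\top$. So the line-$12$ case of~(3) is not vacuous and must be handled.

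The correct argument there is short: in line~$12$ the marked $q$ lies in both $\Phi$ and $\Psi$, so by the footnote ($q,r$ differ from $p$) the disjunction $\mathsf{X}$ contains both $q$ (from the $\bigvee_{q'\in\Psi}q'$ block) and $\neg q$ (from the $\bigvee_{r\in\Phi}\neg r$ block). Hence $\emptyset\mid\emptyset\ra\mathsf{X}$ follows from the initial sequent $\emptyset\mid q\ra q$ by $(R\neg)$, admissible right weakening, and $(R\lor)$. With this fix your induction goes through. One further small correction: at line~$11$ the first component $\mathsf{b}(\Sigma,\Gamma,\Delta)$ does \emph{not} drop (the set of boxed subformulas is unchanged --- cf.\ the $(\Box^+_{Tn})$ clause in the proof of Proposition~\ref{prop:gktplus terminate}); it is the second component $\mathsf{wt}(\Gamma,\Delta)$ that strictly decreases, which still gives $\prec$.
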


\begin{thm}
\label{thm:main theorem of gktnplus}
   Let $\Gamma,\Delta$ be finite multi-sets of formulas,  $\Sigma$ be a finite multi-set of out-most boxed formulas.
    For every propositional variable $p$ there exists a formula $\mathcal{A}_p(\Sigma|\Gamma;\Delta)$ such that:
    \begin{enumerate}[(i)]
        \item  
            $\mathsf{V}( \mathcal{A}_p(\Sigma|\Gamma,\Delta)) \subseteq \mathsf{V}(\Sigma\cup\Gamma\cup \Delta)\backslash\{p\}    $
        \item $\gktnplus\vdash    \Sigma|\Gamma , \mathcal{A}_p(\Sigma|\Gamma; \Delta) \ra \Delta$
        \item given finite multi-sets $\Pi, \Lambda$ of formulas, $\Theta$  of out-most boxed formulas, such that 
        \begin{center}
            $p\notin \mathsf{V}(\Pi\cup\Lambda\cup\Theta)$ and $\gktnplus \vdash \Theta,\Sigma|  \Pi,\Gamma \ra \Delta, \Lambda$
        \end{center}
        then 
        \begin{center}
            $\gktnplus\vdash  \emptyset|\Theta,\Pi \ra  \mathcal{A}_p(\Sigma|\Gamma; \Delta) ,\Lambda$
        \end{center}
    \end{enumerate}
    
\end{thm}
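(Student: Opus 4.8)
The plan is to follow the three-part structure of the proof of Theorem~\ref{thm:main theorem of gkn}, carrying the extra left-of-bar component $\Sigma$ through every clause and handling the one genuinely new case, which comes from the $(\Box_{Tn}^+)$ rule and corresponds to line~11 of Definition~\ref{dfn:ap formula for t}. Part (i) is a routine induction along the well-order $\prec$ on $\mathcal{A}$-formulas: lines 1--11 never introduce a variable outside $\mathsf{V}(\Sigma\cup\Gamma\cup\Delta)$, and line~12 builds $\mathsf{X}$ only from (subformulas of) the matched sequent together with $\mathcal{A}_p$-calls on pieces of that sequent, while $p$ is excluded by the side conditions on $\Phi,\Psi$ and by the inductive hypothesis.

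For part (ii) I would induct on $\prec$ (well-founded by Proposition~\ref{prop:gktplus terminate}) and verify $\gktnplus\vdash\Sigma|\Gamma,\mathcal{A}_p(\Sigma|\Gamma;\Delta)\ra\Delta$ line by line. Lines 1--10 are as in the propositional cases of Theorem~\ref{thm:main theorem of gkn}. For line~11, $\mathcal{A}_p(\Sigma|\Gamma',\Box_iC;\Delta)$ is by definition $\mathcal{A}_p(\Sigma,\Box_iC|\Gamma',C;\Delta)$; the reduced $\mathbf{T}$-sequent is $\prec$ the original (its $\mathsf{b}$-value is unchanged, since every boxed subformula of $C$ already occurs inside $\Box_iC$, while its $\mathsf{wt}$ strictly drops), so the induction hypothesis applies and a single $(\Box_{Tn}^+)$ re-derives the goal. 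Line~12 is handled exactly as line~11 of Theorem~\ref{thm:main theorem of gkn}: one establishes the sequent with each disjunct of $\mathsf{X}$ alone in the antecedent --- the propositional disjuncts directly, the $\Diamond_{s_j}$-disjuncts via the induction hypothesis together with $(R\neg),(\Box_{Kn}^+),(L\neg)$, and the $\Box_{d_i}$-disjuncts via the induction hypothesis and $(\Box_{Kn}^+)$ --- and then glues them together with iterated $(L\lor)$.

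Part (iii) is the heart, and I would prove it by induction on the height of the given derivation of $\Theta,\Sigma|\Pi,\Gamma\ra\Delta,\Lambda$, splitting on its last rule. An initial sequent is covered by Proposition~\ref{prop:initial casein main thm in gktnplus}. A logical rule is simply re-applied after invoking the induction hypothesis on its premise(s): when it acts on $\Gamma$ or $\Delta$ the $\mathcal{A}$-formula unfolds through the matching line 3--10, and when it acts on $\Pi$ or $\Lambda$ the $\mathcal{A}$-formula is unchanged, with $p$ staying out of the modified parameter multisets because subformulas of $p$-free formulas are $p$-free. For $(\Box_{Tn}^+)$ with principal $\Box_iC$: if $\Box_iC$ comes from $\Gamma$, the premise realises the parameters $\Sigma,\Box_iC\,|\,\Gamma\setminus\{\Box_iC\},C$, so by line~11 the induction hypothesis delivers $\mathcal{A}_p(\Sigma|\Gamma;\Delta)$ outright; if $\Box_iC$ comes from $\Pi$, one absorbs it into $\Theta$, invokes the induction hypothesis, and reassembles the conclusion using height-preserving invertibility of $(\Box_{Tn}^+)$ (Proposition~\ref{prop:inverti of t plus rule}) together with admissibility of contraction and weakening --- or, more uniformly, by passing through $\gktn$ via Proposition~\ref{prop:empty sigma} and Lemma~\ref{lem:equipplloent of gktn and gktnplus}.

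The main obstacle will be the $(\Box_{Kn}^+)$ case. There the conclusion is critical, $\Gamma$ contains only variables and $\bot$, $\Sigma|\Gamma;\Delta$ matches line~12, the whole left-of-bar $\Theta,\Sigma$ is unpacked, and the premise is $\emptyset\,|\,\Theta^{\flat_i},\Sigma^{\flat_i}\ra A$ with $\Box_iA$ the principal succedent formula. I would split according to whether $\Box_iA$ lies in $\Delta$ or in $\Lambda$, and in each subcase apply the induction hypothesis to the premise with the split $\Pi:=\Theta^{\flat_i}$, $\Gamma:=\Sigma^{\flat_i}$ (the hypothesis $p\notin\mathsf{V}(\Theta)$ yields $p\notin\mathsf{V}(\Theta^{\flat_i})$). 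If $\Box_iA\in\Delta$ the resulting $\mathcal{A}_p(\emptyset|\Sigma^{\flat_i};A)$ is exactly the body of the $\Box_{d_i}$-disjunct of $\mathsf{X}$ with $\delta_i=A$, which one re-boxes with $(\Box_{Kn}^+)$; if $\Box_iA\in\Lambda$ one instead uses the $\Diamond_i$-disjunct $\Diamond_i\mathcal{A}_p(\emptyset|\Sigma^{\flat_i};\emptyset)$ (available whenever $\Sigma$ carries a $\Box_i$-formula, and otherwise $\Box_iA$ is simply rebuilt), using $(L\neg),(\Box_{Kn}^+),(R\neg)$. In every subcase one then folds the obtained disjunct into $\mathsf{X}$ by $(R\lor)$ and weakening, and finally returns the boxed formulas of $\Theta$ from the left-of-bar into the antecedent --- by $(LW)$ followed by $(\Box_{Tn}^+)$, or again via Proposition~\ref{prop:empty sigma} and Lemma~\ref{lem:equipplloent of gktn and gktnplus} --- to reach $\emptyset|\Theta,\Pi\ra\mathcal{A}_p(\Sigma|\Gamma;\Delta),\Lambda$. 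The bookkeeping that has to be done carefully throughout is the separation of the parameter boxed formulas (those from $\Theta$, which are $p$-free and invisible to the variable condition) from the interpolated ones (those from $\Sigma$, each of which must contribute a $\Diamond$- or $\Box$-disjunct to $\mathsf{X}$), together with the constant reshuffling between the left-of-bar position and the antecedent forced by the shape of the statement.
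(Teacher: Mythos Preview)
Your proposal is correct and follows essentially the same route as the paper's own proof: three-part induction along $\prec$, with part~(iii) by induction on derivation height, splitting on the last rule and handling $(\Box_{Tn}^+)$ via line~11 together with invertibility and contraction, and $(\Box_{Kn}^+)$ by the $\Lambda$/$\Delta$ case-split that picks out a $\Diamond$- or a $\Box$-disjunct of $\mathsf{X}$. You are in fact slightly more careful than the paper in singling out the subcase where $\Sigma$ carries no $\Box_i$-formula (so that no $\Diamond_i$-disjunct is available and $\Box_iA\in\Lambda$ must be rebuilt directly), which the paper's text leaves implicit.
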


\begin{proof}

We proceed in a similar way to the proof of Theorem \ref{thm:main theorem of gkn}.

The proof of (i) can be obtained by inspecting the table in Definition \ref{dfn:ap formula for t}.

The proof of  (ii) can be proved by induction on the weight of $\mathcal{A}_p(\Gamma,\Delta)$. We prove $\gktnplus\vdash    \Gamma , \mathcal{A}_p(\Gamma; \Delta) \ra \Delta$ for each line of the table in  Definition \ref{dfn:Ap formula in Gkn}.
The cases of lines from 1 to 11 are easy.
We only concentrate on the case of  line 12.

The proof is similar to Theorem \ref{thm:main theorem of gkn}, only significant cases are shown here.

\begin{itemize}
    \item For each  $\Box_{s_j}\sigma_j \in \{\overrightarrow{\Box_{s_m}\sigma_m }\}$,

$ \gktnplus\vdash    \emptyset|\{ \overrightarrow{\Box_{s_m\sigma_m}} \}^{\flat_{s_j}},\mathcal{A}_p (   \emptyset| \{ \overrightarrow{\Box_{s_m\sigma_m}} \}^{\flat_{s_j}} ;\emptyset)  \ra    $  from induction hypothesis. After applying $(R\neg)$,  $(\Box^{+}_{Kn})$,$(L\neg)$, and weakening rules we obtain

$ \gktnplus \vdash \overrightarrow{\{\Box_{s_m}\sigma_m \}}|\Phi, \Diamond_{s_j} \mathcal{A}_p (   \emptyset| \{ \overrightarrow{\Box_{s_m\sigma_m}} \}^{\flat_{s_j}} ;\emptyset)  \ra  \overrightarrow{\Box_{d_n}\delta_n },\Psi   $ 
\item  
For each $\Box d_i\delta_i \in \{\overrightarrow{\Box_{d_n}\delta_n } \}$, by induction hypothesis, we obtain:
 
$  \gktnplus \vdash \emptyset| \{\overrightarrow{\Box_{s_m}\sigma_m }  \}^{\flat_{di}}, \mathcal{A}_p ( \emptyset|  \{\overrightarrow{\Box_{s_m}\sigma_m }  \}^{\flat_{di}}; \delta_i   )  
\ra \delta_i  .$

 After applying $ (\Box^{+}_{Kn})$, $ (\Box^{+}_{Tn})$  and weakening rules we obtain the desired result.

$ \gktnplus \vdash \overrightarrow{\{\Box_{s_m}\sigma_m \}}|\Phi, \Box_{d_i} \mathcal{A}_p (   \emptyset| \{ \overrightarrow{\Box_{s_m\sigma_m}} \}^{\flat_{s_j}} ;\delta_i)  \ra  \overrightarrow{\Box_{d_n}\delta_n },\Psi   $ 

\end{itemize}
After apply $(L\lor)$ for finite many times, we can obtain the desired result.

Next, in the proof of (iii), we consider the last rules applied in the derivation. We focus on the following cases. 

When it is an initial sequent, we need the Proposition \ref{prop:initial casein main thm in gktnplus}.

When the last rule is $(\Box^{+}_{T_n})$, we need to consider the cases:
\begin{enumerate}
    \item When principal formula $\Box_i A $ appears in $\Pi$, we need to apply the invertibility in Proposition \ref{prop:inverti of t plus rule} and contraction rules.
    \item When principal formula $\Box_i A $ appears in $\Gamma$, the proof can be obtained directly from the definition.
    
\end{enumerate}
 
When the last rule is $(\Box^{+}_{K_n})$, 
the arguments are divided into the following cases:
\begin{enumerate}
    \item The right principal formula $\Box_i A$ is in the multiset $\Lambda$.

    \begin{center}

\AxiomC{$\emptyset|\Theta'',\Sigma''\ra A$}
\RightLabel{$(\Box_{K_n^+}) $}
\UnaryInfC{$\Theta',\Box_i\Theta'',\overrightarrow{\Box s_m \sigma_m},\Box_i\Sigma''|\Pi, \Gamma\ra \Box_i A, \Lambda',  \overrightarrow{\Box d_n \delta_n}, \Psi $}
\DisplayProof

\end{center}

 In this case, $\Theta$ is $\Theta',\Box_i\Theta''$, and $\Sigma$ is $\overrightarrow{\Box s_m \sigma_m},\Box_i\Sigma''$   ($\Box_i $ is not among $\overrightarrow{\Box s_m }$) , and $\Lambda$ is  $\Box_i A, \Lambda'$, and $\Delta$ is $\overrightarrow{\Box d_n \delta_n}, \Psi$.

From assumption, we find that $p\notin\mathsf{V}(\Theta'' \cup\{A\})$. Then, according to induction hypothesis, we can derive: $\emptyset|\Theta'' \ra \mathcal{A}_p(\emptyset| \{\overrightarrow{\Box s_m \sigma_m},\Box_i\Sigma''  \}^{\flat_i} ;\emptyset) ,A$.
Then after applying $(L\neg)$ and $(\Box^+_{K_n})$, we derive:
\begin{center}
    $\Theta',\Box_i\Theta'', \Box_i  \neg\mathcal{A}_p(\emptyset| \{\overrightarrow{\Box s_m \sigma_m},\Box_i\Sigma''  \}^{\flat_i} ;\emptyset) |\Pi \ra \Box_iA,\Lambda'$.
\end{center}
Next, after applying weakening, $(\Box^+_{T_n})$ and $(R\neg)$, we derive:
\begin{center}
    $\emptyset|\Theta',\Box_i\Theta'', \Pi \ra \Diamond_i  \mathcal{A}_p(\emptyset| \{\overrightarrow{\Box s_m \sigma_m},\Box_i\Sigma''  \}^{\flat_i} ;\emptyset),\Box_iA,\Lambda'$.
\end{center}
Finally we can obtain the results by applying weakening and $(R\lor)$ for finite many times.

        \item The right principal formula $\Box_i A$ is in the multiset $\Delta$.

\begin{center}

\AxiomC{$\emptyset|\Theta'',\Sigma''\ra A$}
\RightLabel{$(\Box_{K_n^+}) $}
\UnaryInfC{$\Theta',\Box_i\Theta'',\overrightarrow{\Box s_m \sigma_m},\Box_i\Sigma''|\Pi, \Gamma\ra \Box_i A,  \overrightarrow{\Box d_n \delta_n}, \Psi,\Lambda $}
\DisplayProof
\end{center}

 In this case, $\Theta$ is $\Theta',\Box_i\Theta''$, and $\Sigma$ is $\overrightarrow{\Box s_m \sigma_m},\Box_i\Sigma''$,  ($\Box_i $ is not among $\overrightarrow{\Box s_m }$) and $\Delta$ is $\Box_i A,\overrightarrow{\Box d_n \delta_n}, \Psi$.

From assumption, we find that $p\notin\mathsf{V}(\Theta'' )$. Then, according to induction hypothesis, we can derive: $\emptyset|\Theta'' \ra \mathcal{A}_p(\emptyset| \{\overrightarrow{\Box s_m \sigma_m},\Box_i\Sigma''  \}^{\flat_i} ;A)$.
After applying $(\Box^+_{K_n})$, we derive:
\begin{center}
    $\Theta',\Box_i\Theta''  |\Pi \ra \Box_i  \mathcal{A}_p(\emptyset| \{\overrightarrow{\Box s_m \sigma_m},\Box_i\Sigma''  \}^{\flat_i} ;A),\Lambda$.
\end{center}
Next, after applying weakening and $(\Box^+_{T_n})$ for finite many times we derive:
\begin{center}
    $\emptyset|\Theta',\Box_i\Theta''  ,\Pi \ra \Box_i  \mathcal{A}_p(\emptyset| \{\overrightarrow{\Box s_m \sigma_m},\Box_i\Sigma''  \}^{\flat_i} ;A),\Lambda$.
\end{center}
Finally we can obtain the results by applying weakening and $(R\lor)$ for finite many times. \qedhere

\end{enumerate}
\end{proof}

Then, we can transfer the above results of $\gktnplus$ to $\gktn$.

\begin{cor}
\label{cor: cor main theorem of gktn}

    Let $\Gamma,\Delta$ be finite multi-sets of formulas. For every propositional variable $p$ there exists a formula $\mathcal{A}_p(\Gamma;\Delta)$ such that:
    \begin{enumerate}[(i)]
        \item  
            $\mathsf{V}( \mathcal{A}_p(\Gamma;\Delta)) \subseteq \mathsf{V}(\Gamma\cup \Delta)\backslash\{p\}    $
        \item $\gktn\vdash    \Gamma , \mathcal{A}_p(\Gamma; \Delta) \ra \Delta$
        \item given finite multi-sets $\Pi, \Lambda$ of formulas such that 
        \begin{center}
            $p\notin \mathsf{V}(\Pi\cup\Lambda)$ and $\gktn\vdash  \Pi,\Gamma \ra \Delta, \Lambda$
        \end{center}
        then 
        \begin{center}
            $\gktn\vdash  \Pi \ra  \mathcal{A}_p(\Gamma; \Delta) ,\Lambda$
        \end{center}
    \end{enumerate}
    
\end{cor}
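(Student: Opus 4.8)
The plan is to derive the corollary by specializing Theorem~\ref{thm:main theorem of gktnplus} to the case $\Sigma=\emptyset$ and then transporting the resulting judgments across the equipollence of $\gktn$ and $\gktnplus$ recorded in Lemma~\ref{lem:equipplloent of gktn and gktnplus}. Concretely, for finite multisets $\Gamma,\Delta$ and a propositional variable $p$, I would \emph{define} the interpolant $\mathcal{A}_p(\Gamma;\Delta)$ demanded by the corollary to be the {\bf T}-system formula $\mathcal{A}_p(\emptyset|\Gamma;\Delta)$ of Definition~\ref{dfn:ap formula for t}. This is a genuine $\mathcal{L}^1$-formula, since the clauses of that definition introduce only $\top$, $\bot$, $\wedge$, $\vee$, $\neg$, $\Box_i$ and $\Diamond_i$, and it is well defined because backward proof-search in $\gktnplus$ always terminates (Proposition~\ref{prop:gktplus terminate}), so the inductive recipe halts. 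It is also worth remarking that when $\Gamma\cup\Delta$ contains no outermost-boxed formula, clause~11 of Definition~\ref{dfn:ap formula for t} never fires, so this notation is coherent with the earlier $\gkn/\gkdn$ usage of $\mathcal{A}_p(\Gamma;\Delta)$.

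For clause~(i), I would simply read off clause~(i) of Theorem~\ref{thm:main theorem of gktnplus} with $\Sigma$ empty, which gives $\mathsf{V}(\mathcal{A}_p(\emptyset|\Gamma;\Delta))\subseteq\mathsf{V}(\Gamma\cup\Delta)\setminus\{p\}$ — exactly what is wanted. For clause~(ii), clause~(ii) of Theorem~\ref{thm:main theorem of gktnplus} (again with $\Sigma=\emptyset$) yields $\gktnplus\vdash\emptyset|\Gamma,\mathcal{A}_p(\emptyset|\Gamma;\Delta)\ra\Delta$, and then the right-to-left direction of Lemma~\ref{lem:equipplloent of gktn and gktnplus} (i.e.\ Proposition~\ref{prop:empty sigma}) gives $\gktn\vdash\Gamma,\mathcal{A}_p(\Gamma;\Delta)\ra\Delta$.

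For clause~(iii), assume $p\notin\mathsf{V}(\Pi\cup\Lambda)$ and $\gktn\vdash\Pi,\Gamma\ra\Delta,\Lambda$. The left-to-right direction of Lemma~\ref{lem:equipplloent of gktn and gktnplus} turns the hypothesis into $\gktnplus\vdash\emptyset|\Pi,\Gamma\ra\Delta,\Lambda$. I would then apply clause~(iii) of Theorem~\ref{thm:main theorem of gktnplus} with the choices $\Sigma=\emptyset$ and $\Theta=\emptyset$: the side condition $p\notin\mathsf{V}(\Pi\cup\Lambda\cup\Theta)$ holds since $\Theta=\emptyset$, and the required premise $\gktnplus\vdash\Theta,\Sigma|\Pi,\Gamma\ra\Delta,\Lambda$ is precisely the judgment just obtained. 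This produces $\gktnplus\vdash\emptyset|\Pi\ra\mathcal{A}_p(\emptyset|\Gamma;\Delta),\Lambda$, and a final application of the right-to-left direction of Lemma~\ref{lem:equipplloent of gktn and gktnplus} gives $\gktn\vdash\Pi\ra\mathcal{A}_p(\Gamma;\Delta),\Lambda$, completing the proof.

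The argument is essentially bookkeeping, so I do not expect a genuine obstacle; the only points needing care are (a) checking that the instantiations $\Sigma=\emptyset$ and $\Theta=\emptyset$ are legitimate — the {\bf T}-sequent apparatus explicitly allows an empty stored zone, and an empty $\Theta$ trivially meets the variable side condition — and (b) confirming, as above, that $\mathcal{A}_p(\emptyset|\Gamma;\Delta)$ is an $\mathcal{L}^1$-formula and agrees with the prior notation on box-free inputs. Once $\mathcal{A}_p(\Gamma;\Delta)$ is fixed in this way, each of (i)--(iii) follows immediately from the correspondingly numbered clause of Theorem~\ref{thm:main theorem of gktnplus} together with Lemma~\ref{lem:equipplloent of gktn and gktnplus}; and, exactly as in the proof of Corollary~\ref{cor:uip in kn, kdn, first}, the uniform interpolation property for $\gktn$ in $\mathcal{L}^1$ then follows by iterating $\mathcal{A}_p$ and $\mathcal{E}_p:=\neg\mathcal{A}_p(\neg\,\cdot\,)$ over the variables to be eliminated.
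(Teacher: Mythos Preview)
Your proposal is correct and matches the paper's own proof essentially verbatim: the paper also sets $\mathcal{A}_p(\Gamma;\Delta):=\mathcal{A}_p(\emptyset\mid\Gamma;\Delta)$ and then invokes Theorem~\ref{thm:main theorem of gktnplus} together with Lemma~\ref{lem:equipplloent of gktn and gktnplus}. You simply spell out the instantiations $\Sigma=\emptyset$, $\Theta=\emptyset$ and the two directions of the equipollence more explicitly than the paper does.
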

\begin{proof}
    Let $\mathcal{A}_p(\Gamma;\Delta) $ be $\mathcal{A}_p(\emptyset|\Gamma;\Delta) $. Then, we apply Lemma \ref{lem:equipplloent of gktn and gktnplus} and Theorem \ref{thm:main theorem of gktnplus}
. \end{proof}

\begin{cor}
 \label{cor:uip in ktn first}
    Uniform interpolation property is satisfied in $\gktn$.
    
\end{cor}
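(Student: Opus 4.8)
The plan is to mirror the proof of Corollary~\ref{cor:uip in kn, kdn, first} essentially line by line, now invoking Corollary~\ref{cor: cor main theorem of gktn} — the $\gktn$-version of the interpolation theorem — in place of Theorem~\ref{thm:main theorem of gkn}, together with admissibility of cut in $\gktn$ (Section~\ref{sec:sequent cal}) and the equipollence of $\gktn$ with $\hktn$ (Theorem~\ref{prop:equivilence of Hilbert and sequent}). Since Corollary~\ref{cor: cor main theorem of gktn} has exactly the same shape as Theorem~\ref{thm:main theorem of gkn} — items (i), (ii), (iii) for a formula $\mathcal{A}_p(\Gamma;\Delta)$, where here $\mathcal{A}_p(\Gamma;\Delta)$ abbreviates the $\gktnplus$-formula $\mathcal{A}_p(\emptyset|\Gamma;\Delta)$ of Definition~\ref{dfn:ap formula for t} — the interpolant constructions carry over without change, and the statement to be proved is the one in Corollary~\ref{cor:uip in kn, kdn, first} with $\mathbf{L}=\mathbf{KT_n}$.

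Concretely, writing $\mathcal{A}_p(B)$ for $\mathcal{A}_p(\emptyset;B)$ and $\mathcal{E}_p(B)$ for $\neg\mathcal{A}_p(\neg B)$ as in Section~\ref{sec:main thm of gkn and gkdn}, for a formula $B(\overrightarrow{q},\overrightarrow{r})$ with all $q_i$ distinct from all $r_j$ I would set
\[
\mathcal{I}_{pre}(B,\overrightarrow{q}) \;:=\; \mathcal{A}_{r_1}\bigl(\mathcal{A}_{r_2}(\cdots \mathcal{A}_{r_m}(B)\cdots)\bigr),
\]
and for a formula $A(\overrightarrow{p},\overrightarrow{q})$ I would set
\[
\mathcal{I}_{post}(A,\overrightarrow{q}) \;:=\; \mathcal{E}_{p_1}\bigl(\mathcal{E}_{p_2}(\cdots \mathcal{E}_{p_m}(A)\cdots)\bigr).
\]
Condition~(1) (the variable requirement) follows by iterating Corollary~\ref{cor: cor main theorem of gktn}(i): each application of $\mathcal{A}_{r_j}$ deletes $r_j$ and introduces no fresh variables, so none of $\overrightarrow{r}$ survives in $\mathcal{I}_{pre}$, and dually for $\mathcal{I}_{post}$. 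Condition~(2) for the pre-interpolant is obtained by chaining the derivable sequents $\gktn\vdash\mathcal{A}_{r_j}(C)\ra C$ — instances of Corollary~\ref{cor: cor main theorem of gktn}(ii) with $\Gamma=\emptyset$, $\Delta=\{C\}$ — through the admissible cut rule; condition~(2) for the post-interpolant is the same double-negation bookkeeping as in Corollary~\ref{cor:uip in kn, kdn, first}, again using~(ii) and cut. Condition~(3) for the pre-interpolant follows by iterating Corollary~\ref{cor: cor main theorem of gktn}(iii): from a derivation of $A(\overrightarrow{p},\overrightarrow{q})\ra B(\overrightarrow{q},\overrightarrow{r})$ and the fact that $r_m\notin\mathsf{V}(A)$, one peels off $r_m,r_{m-1},\dots,r_1$ successively, each step preserving derivability of the left-hand side until $\gktn\vdash A\ra\mathcal{I}_{pre}(B,\overrightarrow{q})$; condition~(3) for the post-interpolant is the dual computation, carried out with $\neg A$ on the right, exactly as in the earlier corollary.

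Combining these with Theorem~\ref{prop:equivilence of Hilbert and sequent} then yields UIP for $\mathbf{KT_n}$ in the sense of the Introduction. I do not expect any genuinely new difficulty at this stage: the real work for the $\mathbf{T}$-case was already absorbed into Corollary~\ref{cor: cor main theorem of gktn}, which itself rests on the loop-preventing calculus $\gktnplus$ and Lemma~\ref{lem:equipplloent of gktn and gktnplus}. The one place a reader might pause is whether the loop-preventing component $\Sigma$ of a $\mathbf{T}$-sequent interferes with the constructions above, but every $\mathcal{A}_p(B)$ occurring here is $\mathcal{A}_p(\emptyset|\emptyset;B)$ with empty $\Sigma$, and Corollary~\ref{cor: cor main theorem of gktn} is already phrased purely in terms of ordinary $\gktn$-sequents, so the corollary reduces to a routine transcription of the $\mathbf{K_n}$/$\mathbf{KD_n}$ argument.
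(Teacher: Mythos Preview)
Your proposal is correct and matches the paper's approach exactly: the paper's proof is the single line ``We proceed in a similar way to the proof of Corollary~\ref{cor:uip in kn, kdn, first},'' and you have simply spelled out that transcription, invoking Corollary~\ref{cor: cor main theorem of gktn} in place of Theorem~\ref{thm:main theorem of gkn}.
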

\begin{proof}
    We proceed in a similar way to the proof of Corollary \ref{cor:uip in kn, kdn, first}. 
\end{proof}

 \section{Propositional quantifiers and UIP}
\label{sec:2nd order}


In this section, we show that quantification over propositional variables can be modeled by UIP in  $\gkn$, $\gkdn$ and $\gktn$.
This also provide an alternative method to show the uniform interpolation property from main theorem as in \cite{pitts1992interpretation,bilkova2007uniform}.

\begin{table}[htb]
\caption{Sequent Calculi $\gksecn$, $\gkdsecn$ and $\gktsecn$.}
\label{table:gksecn}
\begin{center}
\hrule
\begin{tabular}{ll}

\multicolumn{2}{c}{ Sequent Calculus $\gksecn$, $\gkdsecn$ and $\gktsecn$
 } \\
 
\multicolumn{2}{c}{ Adding the following rules  to $\gkn$ ,$\gkdn$  and $\gktn$  respectively
 } \\

{\bf Initial Sequent} 

&

$\forall p \Box_i A \ra \Box_i \forall p A$

\\

{ \bf Structural Rules}   & 
\AxiomC{$\Gamma \ra \Delta$}
\RightLabel{\scriptsize $(LW)$}
\UnaryInfC{$A,\Gamma \ra \Delta$}
\DisplayProof

\AxiomC{$\Gamma \ra \Delta$}
\RightLabel{\scriptsize $(RW)$}
\UnaryInfC{$\Gamma \ra \Delta , C$}
\DisplayProof
\\

&

\AxiomC{$\Gamma \ra \Delta , A,A$}
\RightLabel{\scriptsize $(RC)$}
\UnaryInfC{$\Gamma \ra \Delta ,A $}
\DisplayProof

\AxiomC{$A,A,\Gamma \ra \Delta$}
\RightLabel{\scriptsize $(LC)$}
\UnaryInfC{$A,\Gamma \ra \Delta$}
\DisplayProof
\\

\, &    
\AxiomC{$\Gamma \ra  \Delta, C$}
\AxiomC{$C,\Gamma ^{\prime} \ra \Delta ^{\prime} $}
\RightLabel{\scriptsize $(Cut)$}
\BinaryInfC{$\Gamma,\Gamma^{\prime} \ra \Delta, \Delta  ^{\prime} $}
\DisplayProof
\\

  {\bf Prop Quantifier rule}

& 

\AxiomC{$ \Gamma , A(p/B)\ra \Delta   $ }
\RightLabel{\scriptsize $( L\forall )$ }
\UnaryInfC{$ \Gamma ,\forall p A \ra \Delta$}
\DisplayProof

\AxiomC{$ \Gamma \ra A,\Delta   $ }
\RightLabel{\scriptsize $( R\forall )$$\ddagger$ }
\UnaryInfC{$ \Gamma  \ra\forall p A, \Delta$}
\DisplayProof
\\
\\
\multicolumn{2}{l}{ {\footnotesize $\ddagger$:  $p$ is not free in $\Gamma$,$\Delta$.  }
   }

\\

\hline

\end{tabular}
\end{center}
\end{table}

The rules in the table \ref{table:gksecn} are originated from \cite{bilkova2007uniform}.
The propositional version of the Barcon formula is represented by the initial sequent   $\forall p \Box_i A \ra \Box_i \forall p A$ \, since we want to have constant domain for every possible world in this logic. However, this will make it difficult to prove cut-elimination theorem. (cf. \cite{bull1969modal,bilkova2007uniform})

\begin{prop}[Substitution]
    Whenever a sequent $\Gamma\ra \Delta $ is derivable in $\gksecn$, $\gkdsecn$ and $\gktsecn$, a sequent in the form of
    $\Gamma [p/B]\ra \Delta [p/B]$ is also derivable.
\end{prop}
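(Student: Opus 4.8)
The plan is to argue by induction on the height of the derivation of $\Gamma \ra \Delta$, carried out uniformly for $\gksecn$, $\gkdsecn$ and $\gktsecn$, since the three systems share the same initial sequents, logical rules, structural rules and quantifier rules, and differ only in their modal rules, all of which behave identically under $[p/B]$. Throughout, $[p/B]$ is read as capture-avoiding substitution, so bound propositional variables are renamed away from the free variables of $B$. A convenient preliminary step is to record that renaming a free propositional variable by a fresh one is \emph{height-preserving} admissible: this is the special case of the lemma in which $B$ is itself a variable, and it is genuinely height-preserving because variable-for-variable substitution never enlarges an initial sequent and never disturbs the context side-conditions of the modal rules, so none of the weakening-invoking cases below are triggered. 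One may also assume, by applying this step to the whole derivation first, that all eigenvariables are fresh and in particular do not occur in $B$.

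The routine cases are as follows. The initial sequent $\Gamma, q \ra q, \Delta$ stays an initial sequent if $q \neq p$, and becomes $\Gamma[p/B], B \ra B, \Delta[p/B]$ if $q = p$, which is derivable from the derivability of $B \ra B$ together with the admissible weakening rules; the $\bot$-initial sequent is unchanged. The logical rules and the explicit structural rules $(LW)$, $(RW)$, $(LC)$, $(RC)$, $(Cut)$ commute with $[p/B]$: apply the induction hypothesis to the premises and re-apply the rule. For $(\Box_{Kn})$, $(\Box_{Dn})$, $(\Box_{Tn})$, substitution commutes with every $\Box_i$, so the induction hypothesis on the premise delivers the boxed premise of the substituted instance; the context side-conditions on $\Sigma, \Omega$ may be broken by $[p/B]$, but harmlessly --- one applies the modal rule with empty context and then reinstates $\Sigma[p/B], \Omega[p/B]$ by the admissible weakening rules.

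The cases carrying the real content are the quantifier rules and the Barcan-type initial sequent $\forall q\,\Box_i A \ra \Box_i \forall q\, A$. For $(L\forall)$ with premise $\Gamma, A(q/C) \ra \Delta$ and conclusion $\Gamma, \forall q\, A \ra \Delta$: if $p = q$ the principal formula is untouched and one re-applies $(L\forall)$, matching premises via the commutation identity $A(q/C)[p/B] = (A[p/B])(q/C[p/B])$, valid for $p \neq q$ with $q$ not free in $B$ (which the capture-avoiding convention grants); if $p \neq q$ the same identity lets one re-apply $(L\forall)$ to $\forall q\,(A[p/B])$ with witness $C[p/B]$. For $(R\forall)$ with premise $\Gamma \ra A, \Delta$, eigenvariable $q$ not free in $\Gamma, \Delta$, conclusion $\Gamma \ra \forall q\, A, \Delta$: if $p = q$ the eigenvariable condition forces $p$ not free in $\Gamma, \Delta$, so the substituted conclusion coincides with the original and is already derivable; if $p \neq q$, after the preliminary normalization $q$ does not occur in $B$, so $[p/B]$ on the premise keeps the eigenvariable condition, and re-applying $(R\forall)$ with eigenvariable $q$ yields exactly $(\forall q\, A)[p/B]$ on the right. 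The Barcan initial sequent is unchanged when $p = q$ and, when $p \neq q$, becomes another instance of the same initial sequent with $A$ replaced by $A[p/B]$.

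The step I expect to be the main obstacle is exactly the bound-variable bookkeeping around $(R\forall)$ and the quantified initial sequent: one must fix the capture-avoiding definition of $[p/B]$ on $\forall$-formulas, verify it is compatible with the eigenvariable manipulations of $(R\forall)$ and $(L\forall)$, and set up the preliminary height-preserving renaming (to push all eigenvariables and bound variables off the free variables of $B$) so that the appeal to the induction hypothesis in the $(R\forall)$ case is legitimate. Once this scaffolding is in place, every remaining case reduces to ``apply the induction hypothesis, then the matching rule, plus the admissible weakening rules where a context condition intervened''.
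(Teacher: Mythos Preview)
Your proposal is correct and is exactly the standard induction-on-derivations argument one would expect; the paper itself states the proposition without proof, so there is nothing further to compare. Your careful handling of the quantifier rules (eigenvariable renaming, capture-avoidance, the $p=q$ versus $p\neq q$ split) and of the modal context side-conditions via the primitive weakening rules of the second-order calculi fills in precisely the details the paper leaves implicit.
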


We define a translation from $\mathcal{L}^1$ to $\mathcal{L}^2$. 
Let $\mathbf{L}\in \{ \mathbf{K}_n, \mathbf{KD}_n,\mathbf{KT}_n\}$. It can be shown that for an arbitrary   $\mathcal{L}^2$ formula  derivable in $\mathsf{G}(\mathbf{L}^2)$ , its translated formula in $\mathcal{L}^1$  is also   derivable in $\mathsf{G}(\mathbf{L})$.

    Recall that,
    for a formula $B\in \mathcal{L}^1$, a propositional variable $p$,  $\mathcal{A}_p (B)$  is $\mathcal{A}_p (\emptyset;B)$ .

\begin{defn}
    Given a formula $A$ in $\mathcal{L}^2$, we define $A^\ast$ in $\mathcal{L}^2$ inductively as follows:
    \begin{itemize}
        \item $p^\ast := p$
        
        \item $(B\circ C)^\ast := B^\ast \circ C^\ast$ ($\circ\in \{\wedge,\lor, \rightarrow \}$)

                \item $(\neg B)^\ast := \neg B ^\ast$
                \item $(\Box_i B)^\ast := \Box_i B^\ast$
                        \item $(\forall p B)^\ast := \mathcal{A} p B^\ast$
        
    \end{itemize}
    For a set $\Gamma$ of formulas, $\Gamma^\ast = \{ A^\ast| A\in \Gamma \}.$
\end{defn}

We can observe that in a quantifier-free formula $B$, $B^\ast = B$.

As a corollary of our main theorem \ref{thm:main theorem of gkn} and corollary \ref{cor: cor main theorem of gktn}, we can obtain the following results to prove the preservation of quantifier rules.
\begin{cor}
\label{cor:deri}
   Let $\mathbf{L}\in \{ \mathbf{K}_n, \mathbf{KD}_n,\mathbf{KT}_n\}$. In the language $\mathcal{L}^1$, let $C$ be a formula, $\Gamma,\Delta$ be multisets of formulas not containing $p$. 
\begin{enumerate}

    \item  $\mathsf{G}(\mathbf{L})\vdash  \mathcal{A}_p (C) \ra  C[p/B] $ for any formula $B$;
    \item   $\mathsf{G}(\mathbf{L})\vdash \Gamma\ra C,\Delta $ implies $\mathsf{G}(\mathbf{L})\vdash \Gamma\ra \mathcal{A}_p (C),\Delta $.
\end{enumerate}
  \end{cor}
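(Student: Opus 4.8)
The plan is to obtain both items directly from the main theorem --- Theorem~\ref{thm:main theorem of gkn} when $\mathbf{L}\in\{\mathbf{K}_n,\mathbf{KD}_n\}$ and Corollary~\ref{cor: cor main theorem of gktn} when $\mathbf{L}=\mathbf{KT}_n$ --- using that $\mathcal{A}_p(C)$ is by definition $\mathcal{A}_p(\emptyset;C)$, so that $p\notin\mathsf{V}(\mathcal{A}_p(C))$ by part (i) of that theorem.

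Item (2) is essentially a direct instantiation of part (iii). Given $\Gamma,\Delta$ with $p\notin\mathsf{V}(\Gamma\cup\Delta)$ and $\mathsf{G}(\mathbf{L})\vdash\Gamma\ra C,\Delta$, one applies part (iii) with the inner antecedent taken to be $\emptyset$, the inner succedent $\{C\}$, and with $\Pi:=\Gamma$, $\Lambda:=\Delta$; the hypotheses $p\notin\mathsf{V}(\Pi\cup\Lambda)$ and $\mathsf{G}(\mathbf{L})\vdash\Pi,\emptyset\ra C,\Lambda$ are exactly what is assumed, and the conclusion $\mathsf{G}(\mathbf{L})\vdash\Pi\ra\mathcal{A}_p(\emptyset;C),\Lambda$ is precisely $\mathsf{G}(\mathbf{L})\vdash\Gamma\ra\mathcal{A}_p(C),\Delta$. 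Nothing more is needed here.

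For item (1) the starting point is part (ii) of the main theorem with the context multisets taken empty: $\mathsf{G}(\mathbf{L})\vdash\mathcal{A}_p(\emptyset;C)\ra C$, i.e.\ $\mathsf{G}(\mathbf{L})\vdash\mathcal{A}_p(C)\ra C$. To move from $C$ to $C[p/B]$ I would apply the substitution $[p/B]$ to a derivation of this sequent; since $p$ does not occur in $\mathcal{A}_p(C)$, the resulting end sequent is $\mathcal{A}_p(C)\ra C[p/B]$, which is the claim. This uses admissibility of the substitution $[p/B]$ in $\mathsf{G}(\mathbf{L})$ for $\mathcal{L}^1$-formulas, proved by induction on the height of the derivation: an initial sequent $\Gamma,p\ra p,\Delta$ becomes $\Gamma[p/B],B\ra B,\Delta[p/B]$, still derivable by the identity-sequent proposition together with the admissible weakening rules, and the propositional rules present no difficulty.

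The delicate point, which I expect to be the main obstacle, is the modal rules: substituting a formula $B$ for a propositional variable sitting in the context $\Sigma$ or $\Omega$ of $(\Box_{K_n})$ or $(\Box_{D_n})$ can destroy the side condition that $\Sigma$ and $\Omega$ consist only of propositional variables, $\bot$ and outmost-boxed formulas of the appropriate modality. This is repaired by applying the modal rule to the inductively transformed premise with those context formulas that no longer have a permitted shape deleted --- so that the side condition holds again --- and then re-introducing them into the conclusion by finitely many applications of the admissible weakening rules of Proposition~\ref{prop:weakening of k,d,t}; the rule $(\Box_{T_n})$ imposes no context restriction and so is immediate. One may also argue more directly, avoiding the general substitution lemma: in the particular derivation of $\mathcal{A}_p(C)\ra C$ that the proof of part (ii) constructs, the variable $p$ never occurs bare in the context of a modal inference, because the clause defining the critical case of $\mathcal{A}_p$ explicitly requires the propositional atoms occurring there to be distinct from $p$; hence literal substitution $[p/B]$ already yields a valid derivation of $\mathcal{A}_p(C)\ra C[p/B]$. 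With substitution available, (1) and (2) hold uniformly for $\mathbf{K}_n$, $\mathbf{KD}_n$ and $\mathbf{KT}_n$.
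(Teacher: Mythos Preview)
Your proposal is correct and follows essentially the same approach as the paper: the paper's own proof says only ``Proved immediately from (ii) and (iii) of Theorem~\ref{thm:main theorem of gkn}, Corollary~\ref{cor: cor main theorem of gktn}'', i.e.\ item (2) is part (iii) verbatim, and item (1) comes from part (ii) together with a substitution step (which the paper later uses freely, e.g.\ in the proof of Corollary~\ref{cor:sub}, without stating it separately for the first-order calculi). Your extra care about admissibility of substitution through the side conditions of $(\Box_{K_n})$ and $(\Box_{D_n})$ is welcome detail that the paper glosses over.
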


\begin{proof}
 Proved
 immediately from (ii) and (iii) of theorem \ref{thm:main theorem of gkn}, corollary \ref{cor: cor main theorem of gktn}.
\end{proof}

\begin{prop}
\label{prop:congurence}
   Let $\mathbf{L}\in \{ \mathbf{K}_n, \mathbf{KD}_n,\mathbf{KT}_n\}$.  $\mathsf{G}(\mathbf{L})\vdash A \leftrightarrow B, C[q/B] \ra C[q/A] $
    
\end{prop}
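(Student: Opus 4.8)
The plan is to prove this by induction on the structure of $C$, with $A$, $B$, and $q$ held fixed; I read the statement as asserting that from a derivation of $A\leftrightarrow B$ in $\mathsf{G}(\mathbf{L})$ one builds a derivation of the sequent $C[q/B]\ra C[q/A]$. Since the hypothesis $A\leftrightarrow B$ is symmetric in $A$ and $B$, establishing this for every $C$ at once also yields the reversed sequent $C[q/A]\ra C[q/B]$; I would carry both orientations through the induction, because the negation step and the antecedent of an implication consume the hypothesis in the flipped direction.

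For the base cases: when $C$ is the variable $q$, the goal is $B\ra A$ (and, reversed, $A\ra B$), and starting from $\mathsf{G}(\mathbf{L})\vdash\ \ra A\leftrightarrow B$ and peeling off with the invertibility of $(R\wedge)$ and $(R\rightarrow)$ (Proposition~\ref{prop:invertibility of all logical rules in Gkn}) delivers both. When $C$ is $\bot$ or a variable distinct from $q$, the substitutions are trivial and $C[q/B]\ra C[q/A]$ is an initial sequent modulo admissible weakening (Proposition~\ref{prop:weakening of k,d,t}).

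For the connective cases $\wedge,\vee,\rightarrow,\neg$, I would break the principal occurrence on the left with the corresponding left rule, apply the induction hypothesis to the immediate subformulas — in the same orientation for positively occurring subformulas, in the reversed orientation (available by symmetry) for the antecedent of an implication and the scope of a negation — and reassemble with the matching right rule; admissibility of contraction merges duplicated copies of the hypothesis and admissibility of weakening supplies any missing context, so these steps are identical in $\gkn$, $\gkdn$ and $\gktn$. The load-bearing case is $C=\Box_i C_1$: substitution commutes with the modality, so $C[q/B]=\Box_i(C_1[q/B])$ and $C[q/A]=\Box_i(C_1[q/A])$; the induction hypothesis gives $\mathsf{G}(\mathbf{L})\vdash C_1[q/B]\ra C_1[q/A]$, and one application of $(\Box_{K_n})$ with empty side-contexts ($\Sigma=\Omega=\emptyset$) promotes it to $\mathsf{G}(\mathbf{L})\vdash\Box_i(C_1[q/B])\ra\Box_i(C_1[q/A])$. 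Since $(\Box_{K_n})$ belongs to all three systems and neither $(\Box_{D_n})$ nor $(\Box_{T_n})$ is ever invoked, the modal step, and hence the whole proof, is uniform across $\mathbf{K}_n$, $\mathbf{KD}_n$, $\mathbf{KT}_n$.

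I do not anticipate a genuine difficulty: the only care needed is to set up the induction so that both directions are available simultaneously, and to note that only plain (not height-preserving) weakening is used, which is harmless since heights play no role here. A tidier packaging would first prove the equivalence form — from $\mathsf{G}(\mathbf{L})\vdash\ \ra A\leftrightarrow B$ infer $\mathsf{G}(\mathbf{L})\vdash\ \ra C[q/B]\leftrightarrow C[q/A]$, using for the modal step only necessitation (the empty-context instance of $(\Box_{K_n})$) together with the derivable schema $\Box_i(D\leftrightarrow E)\ra(\Box_i D\leftrightarrow\Box_i E)$ — and then extract the stated sequent by a cut.
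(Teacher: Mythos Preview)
You have misread the statement. The proposition asserts derivability of the single \emph{sequent} $A\leftrightarrow B,\,C[q/B]\ra C[q/A]$, with $A\leftrightarrow B$ sitting as a formula in the antecedent; it is not the conditional ``from $\mathsf{G}(\mathbf{L})\vdash\,\ra A\leftrightarrow B$ conclude $\mathsf{G}(\mathbf{L})\vdash C[q/B]\ra C[q/A]$''. That this is the paper's intended reading is confirmed by the proof of Corollary~\ref{cor:sub}, where the instance $q\leftrightarrow B,\,C[q/B]\ra C[q/q]$ is cut against another sequent and the formula $q\leftrightarrow B$ is carried along in the antecedent until it is discharged at the very end by substituting $q\mapsto B$ and cutting with $\ra B\leftrightarrow B$.

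Under the correct reading your modal step fails outright: from the induction hypothesis $A\leftrightarrow B,\,C_1[q/B]\ra C_1[q/A]$, one application of $(\Box_{K_n})$ yields $\Box_i(A\leftrightarrow B),\,\Box_i C_1[q/B]\ra \Box_i C_1[q/A]$, not the required $A\leftrightarrow B,\,\Box_i C_1[q/B]\ra \Box_i C_1[q/A]$, and there is no rule in $\mathsf{G}(\mathbf{K}_n)$ that lets you strip the box from the hypothesis. In fact the sequent as literally stated is not derivable in general: take $C=\Box_1 q$, $A=p$, $B=r$, and a two-world Kripke model $w_0 R w_1$ where $p,r$ are both true at $w_0$ but only $r$ is true at $w_1$; then $p\leftrightarrow r$ and $\Box_1 r$ hold at $w_0$ while $\Box_1 p$ fails, so $p\leftrightarrow r,\,\Box_1 r\ra\Box_1 p$ is underivable in $\mathsf{G}(\mathbf{K}_n)$. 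Thus either the paper tacitly intends your conditional reading (in which case your argument is sound, though the proof of Corollary~\ref{cor:sub} then needs repair), or the proposition as printed is in error; either way, your proof does not establish the proposition as stated.
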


\begin{cor}
\label{cor:sub}
   Let $\mathbf{L}\in \{ \mathbf{K}_n, \mathbf{KD}_n,\mathbf{KT}_n\}$. Given a formula $C$ and a formula  $B$,  $B$ doesn’t contain $p,q$ and $p\ne q$.
\begin{enumerate}
    \item   $\mathsf{G}(\mathbf{L})\vdash \mathcal{A}p(C[q/B]) \ra (\mathcal{A}p(C))[q/B]$ 
    \item    $\mathsf{G}(\mathbf{L})\vdash (\mathcal{A}p(C))[q/B] \ra   \mathcal{A}p(C[q/B])$.
\end{enumerate}
\end{cor}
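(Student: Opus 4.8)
The plan is to prove both inclusions by induction on the weight of $C$, following the recursive structure of the $\mathcal{A}$-formula definition (Definition \ref{dfn:Ap formula in Gkn}, or its {\bf T}-variant Definition \ref{dfn:ap formula for t}). The key observation is that substitution $[q/B]$ commutes, clause by clause, with the construction of $\mathcal{A}_p(\cdot)$, \emph{except} that it may change which line of the table applies: if a subformula of $C$ is a propositional variable $q$, then after substitution it becomes $B$, and since $B$ is boxed-or-not, a critical sequent may cease to be critical, so the reduction continues differently. So the induction is really on the proof-search tree, not literally on $C$, and this is where I expect the main technical friction. One should first record the auxiliary fact (immediate from Definition \ref{dfn:Ap formula in Gkn}, lines 1--10/11 and the termination Proposition \ref{prop:termination of proof search in GKn}) that $\mathcal{A}_p(\Gamma;\Delta)[q/B]$ and $\mathcal{A}_p(\Gamma[q/B];\Delta[q/B])$ are provably equivalent whenever $q\neq p$ and $q,p\notin\mathsf{V}(B)$, generalising the statement from the single-formula case; then clause (1) and (2) of the corollary are the special cases $\Gamma=\emptyset$, $\Delta=\{C\}$ (one direction each).

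For direction (1), $\mathsf{G}(\mathbf{L})\vdash \mathcal{A}p(C[q/B])\ra(\mathcal{A}p(C))[q/B]$, I would argue semantically-syntactically via the universal property of $\mathcal{A}_p$ in Theorem \ref{thm:main theorem of gkn}. By part (ii) of that theorem, $\mathcal{A}_p(C)\ra C$ is derivable with $p$ absent from the consequent-free side; substituting $[q/B]$ through a derivation (here one needs a substitution lemma for $\mathsf{G}(\mathbf{L})$ — available since all rules are schematic in propositional variables, cf. the Substitution Proposition used for the second-order calculi) gives $(\mathcal{A}_p(C))[q/B]\ra C[q/B]$ — wait, that is the wrong direction, so instead I use part (iii): since $(\mathcal{A}_p(C))[q/B]$ does not contain $p$ (as $p\notin\mathsf{V}(B)$ and $p\notin\mathsf{V}(\mathcal{A}_p(C))$ by part (i)), and since $(\mathcal{A}_p(C))[q/B]\ra C[q/B]$ is derivable, part (iii) applied to the $\mathcal{A}$-formula $\mathcal{A}_p(C[q/B])$ with $\Pi=\{(\mathcal{A}_p(C))[q/B]\}$, $\Lambda=\emptyset$ yields $(\mathcal{A}_p(C))[q/B]\ra\mathcal{A}_p(C[q/B])$. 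That is direction (2), not (1); symmetrically, for direction (1) I start from $\mathcal{A}_p(C[q/B])\ra C[q/B]$ (part (ii) applied to $C[q/B]$), and I need to turn $C[q/B]$ into $(\mathcal{A}_p(C))[q/B]$, which does not follow from (iii) directly because $C[q/B]$ is not of the form guaranteeing $\mathcal{A}_p(C)\ra C[q/B]$ after substitution in the right orientation. So direction (1) genuinely needs the commutation lemma proved by induction, whereas direction (2) can be bootstrapped from the universal property.

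Concretely, then, the main line of work is: (a) prove by induction on $\mathsf{wt}(\Gamma,\Delta)$ (well-founded by the termination results) the commutation statement $\mathsf{G}(\mathbf{L})\vdash \mathcal{A}_p(\Gamma;\Delta)[q/B]\leftrightarrow\mathcal{A}_p(\Gamma[q/B];\Delta[q/B])$ under the hypotheses $p\neq q$, $p,q\notin\mathsf{V}(B)$, handling each propositional line by applying the inductive hypothesis inside the corresponding Boolean combination (using the congruence Proposition \ref{prop:congurence}) and handling line $11$ (resp.\ $12$) by distributing substitution over the big disjunction and over the $\Diamond_{g_j}$/$\Box_{d_i}$ modalities, again via Proposition \ref{prop:congurence} and the inductive hypothesis on the strictly lighter sequents $\{\cdots\}^{\flat_{g_j}}$ and $(\{\cdots\}^{\flat_{d_i}};\delta_i)$; (b) observe that in line $11$ the substitution $[q/B]$ can expose new boxed formulas from among $\Phi,\Psi$ (those positions that held $q$), so before reading off line $11$ on the substituted sequent one must first re-run lines $1$--$10$ — the inductive hypothesis still applies because the re-run strictly decreases weight; (c) derive (1) and (2) as the instances $\mathcal{A}_p(\emptyset;C)$, splitting the biconditional. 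The hard part is bookkeeping in step (b): keeping track of how $[q/B]$ migrates variables between the "propositional part" $\Phi,\Psi$ and the "modal part" $\overrightarrow{\Box_{g_m}\gamma_m},\overrightarrow{\Box_{d_n}\delta_n}$ of a critical sequent, and checking that the resulting $\mathcal{A}$-formula is provably equivalent to the one obtained by first substituting and then reducing. For the $\mathbf{KT}_n$ case one additionally threads the side multiset $\Sigma$ through (Definition \ref{dfn:ap formula for t}, line $11$), but since $[q/B]$ commutes with $\Box_i$ and with adding $\Box_iC$ to $\Sigma$, no new phenomenon arises beyond invoking Corollary \ref{cor: cor main theorem of gktn} in place of Theorem \ref{thm:main theorem of gkn}.
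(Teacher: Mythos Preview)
Your argument for direction~(2) is exactly the paper's: substitute $[q/B]$ through $\mathcal{A}_p(C)\ra C$ (part~(ii)), observe that $p\notin\mathsf{V}((\mathcal{A}_p(C))[q/B])$, and apply part~(iii).

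For direction~(1), however, your claim that it ``genuinely needs the commutation lemma proved by induction'' is not correct, and the paper avoids your induction entirely. The paper's trick is to use the congruence Proposition~\ref{prop:congurence} to \emph{undo} the substitution: from $\mathcal{A}_p(C[q/B])\ra C[q/B]$ (part~(ii)) and $q\leftrightarrow B,\,C[q/B]\ra C$ (Proposition~\ref{prop:congurence}), cut gives $q\leftrightarrow B,\,\mathcal{A}_p(C[q/B])\ra C$. Now the antecedent contains no $p$ (since $p\notin\mathsf{V}(B)$ and $p\neq q$), so part~(iii) yields $q\leftrightarrow B,\,\mathcal{A}_p(C[q/B])\ra\mathcal{A}_p(C)$. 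Finally substitute $[q/B]$ throughout this sequent (noting $q\notin\mathsf{V}(\mathcal{A}_p(C[q/B]))$) and cut with $\ra B\leftrightarrow B$ to obtain $\mathcal{A}_p(C[q/B])\ra(\mathcal{A}_p(C))[q/B]$. Both directions thus follow from the universal property plus congruence and substitution; no structural induction on the $\mathcal{A}$-table is needed.

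Your inductive route could perhaps be pushed through, but step~(b) as you describe it has a real gap: the ``re-run strictly decreases weight'' refers to the weight of the \emph{substituted} sequent $(\Gamma[q/B];\Delta[q/B])$, whereas your induction is on $\mathsf{wt}(\Gamma,\Delta)$; when $B$ is complex the former may be much larger than the latter, so the inductive hypothesis is not available for the sequents produced by re-running lines~1--10 on the substituted side. Fixing this would require a more elaborate measure or a separate lemma relating the two reductions, which is precisely the bookkeeping the paper's argument sidesteps.
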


\begin{proof}    To see 1, we consider the following derivation.
    \begin{center}
    \noLine
    \AxiomC{\footnotesize (ii) of Theorem \ref{thm:main theorem of gkn},  Corollary \ref{cor: cor main theorem of gktn}}
    \UnaryInfC{$\mathcal{A}_p (C [q/B])\ra C[q/B] $}
    \noLine
    \AxiomC{Proposition \ref{prop:congurence} }
    \UnaryInfC{$q\leftrightarrow B, C[q/B]\ra C[q/q]$}
    \RightLabel{$(Cut)$}
    \BinaryInfC{$ q\leftrightarrow B, \mathcal{A}_p ( C[q/B]) \ra C$}
    \DisplayProof
    \end{center}
From (iii) of theorem \ref{thm:main theorem of gkn}, corollary \ref{cor: cor main theorem of gktn} and $p$ not occurring in $ q\leftrightarrow B, \mathcal{A}_p ( C[q/B])$,
we obtain $\mathsf{G}(\mathbf{L})\vdash q\leftrightarrow B, \mathcal{A}_p ( C[q/B]) \ra \mathcal{A}_p(C)$.
Then, applying substitution of $[q/B] $ and $(cut)$ with $\ra B \leftrightarrow B$,
    we obtain $\mathsf{G}(\mathbf{L})\vdash \mathcal{A}p(C[q/B]) \ra (\mathcal{A}p(C))[q/B]$.

    To see 2, at first we have $\mathsf{G}(\mathbf{L})\vdash  \mathcal{A}_p (C) \ra C $ from (ii) of theorem \ref{thm:main theorem of gkn}, corollary \ref{cor: cor main theorem of gktn}. By substitution we have $\mathsf{G}(\mathbf{L})\vdash  \mathcal{A}_p (C)[q/B]  \ra C [q/B]$.
     Since the antecedent doesn’t contain $p$,  $\mathsf{G}(\mathbf{L})\vdash (\mathcal{A}p(C))[q/B] \ra   \mathcal{A}p(C[q/B])$ from (iii) of theorem \ref{thm:main theorem of gkn}, corollary \ref{cor: cor main theorem of gktn}.
\end{proof}

We can immediately infer the following results.

\begin{prop}
       Let $\mathbf{L}\in \{ \mathbf{K}_n, \mathbf{KD}_n,\mathbf{KT}_n\}$. Given a formula $C$ and a formula  $B$,
       $B$ doesn’t contain $p,q$ and $p\ne q$.
\begin{enumerate}
    \item   $\mathsf{G}(\mathbf{L})\vdash (C[q/B])^\ast \ra C^\ast[q/B]$;
    \item    $\mathsf{G}(\mathbf{L})\vdash C^\ast[q/B] \ra   (C[q/B])^\ast$.
\end{enumerate}
\end{prop}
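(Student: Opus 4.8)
The plan is to prove both inclusions by induction on the structure of the formula $C$ in $\mathcal{L}^2$, since the statement is really a compatibility claim between the translation $(-)^\ast$ and substitution of a quantifier-free formula $B$ for a propositional variable $q$. First I would record the base and easy inductive cases: when $C$ is a propositional variable $s$, either $s = q$ (so $C[q/B] = B$, and since $B$ is quantifier-free, $B^\ast = B = C^\ast[q/B]$ and the two sequents are just $B \ra B$), or $s \neq q$ (so $C[q/B] = C = C^\ast = C^\ast[q/B]$), and similarly when $C$ is $\bot$. For the propositional connectives $\wedge, \vee, \rightarrow, \neg$ and for the modality $\Box_i$, both $(-)^\ast$ and $[q/B]$ commute past the connective, so the claim follows from the induction hypotheses for the immediate subformulas together with the obvious congruence facts for sequent derivability (that $\mathsf{G}(\mathbf{L})\vdash A_1 \ra A_1'$ and $\mathsf{G}(\mathbf{L})\vdash A_2 \ra A_2'$ give derivability of the corresponding implication/conjunction/etc., and that $\mathsf{G}(\mathbf{L})\vdash A \ra A'$ gives $\mathsf{G}(\mathbf{L})\vdash \Box_i A \ra \Box_i A'$ via $(\Box_{K_n})$, using $(cut)$ to glue things together). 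All of these I would treat as routine.

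The one genuinely non-trivial case is $C = \forall s\, D$, where by definition $C^\ast = \mathcal{A}_s(D^\ast)$. I would first dispose of the trivial subcase $s = q$ (bound variable captured, in which case $C[q/B] = C$ and both sides coincide); assuming $s \neq q$, and also (by the usual variable-convention, renaming the bound variable if necessary) that $s$ does not occur in $B$, we have $(\forall s\, D)[q/B] = \forall s\,(D[q/B])$, so $(C[q/B])^\ast = \mathcal{A}_s((D[q/B])^\ast)$ and $C^\ast[q/B] = (\mathcal{A}_s(D^\ast))[q/B]$. The key point is that Corollary~\ref{cor:sub} (with $p$ there instantiated to $s$, and $C$ there instantiated to $D^\ast$) gives exactly
\[
\mathsf{G}(\mathbf{L})\vdash \mathcal{A}_s(D^\ast[q/B]) \ra (\mathcal{A}_s(D^\ast))[q/B]
\quad\text{and}\quad
\mathsf{G}(\mathbf{L})\vdash (\mathcal{A}_s(D^\ast))[q/B] \ra \mathcal{A}_s(D^\ast[q/B]).
\]
To connect this with the desired statement I would combine these with the induction hypothesis for $D$, namely $\mathsf{G}(\mathbf{L})\vdash (D[q/B])^\ast \ra D^\ast[q/B]$ and $\mathsf{G}(\mathbf{L})\vdash D^\ast[q/B] \ra (D[q/B])^\ast$. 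By part 2 of Corollary~\ref{cor:deri} applied to these derivations (the antecedents do not contain $s$, after the variable renaming), one gets $\mathsf{G}(\mathbf{L})\vdash \mathcal{A}_s((D[q/B])^\ast) \ra \mathcal{A}_s(D^\ast[q/B])$ and conversely, i.e. the two $\mathcal{A}_s$-formulas are interderivable. Chaining these interderivability facts with the two sequents from Corollary~\ref{cor:sub} displayed above, via $(cut)$, yields both directions of the proposition.

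The main obstacle I expect is precisely the bookkeeping around bound-variable capture and the side conditions of the $\mathcal{A}$-operator: I need to ensure that when I apply Corollary~\ref{cor:sub} and Corollary~\ref{cor:deri} the hypotheses ``$B$ does not contain $p,q$'' and ``$p \neq q$'' and ``$\Gamma,\Delta$ do not contain $p$'' are all met, which forces a careful use of the convention that bound propositional variables in $C$ are chosen distinct from $p$, $q$, and the variables of $B$; a subsidiary lemma that $(-)^\ast$ respects such renaming of bound variables (which follows from the congruence Proposition~\ref{prop:congurence} and an easy induction) may need to be invoked. Beyond that, everything reduces to stitching together interderivabilities with $(cut)$, which is available since cut is admissible in each $\mathsf{G}(\mathbf{L})$.
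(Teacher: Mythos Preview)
Your overall approach—induction on $C$ with the quantifier case handled via Corollary~\ref{cor:sub}—is exactly the intended one; the paper's own proof is just the one-line remark ``We can immediately infer the following results.'' The propositional, modal, and bound-variable-capture bookkeeping you outline is fine.

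There is, however, a genuine slip in your quantifier step. You claim that Corollary~\ref{cor:deri}(2) applied to the induction hypothesis $(D[q/B])^\ast \ra D^\ast[q/B]$ (and its converse) yields interderivability of $\mathcal{A}_s((D[q/B])^\ast)$ and $\mathcal{A}_s(D^\ast[q/B])$, on the grounds that ``the antecedents do not contain $s$, after the variable renaming.'' But the antecedent $(D[q/B])^\ast$ typically \emph{does} contain $s$: $s$ is the variable bound in $\forall s\, D$ and will normally occur free in $D$, hence in $D[q/B]$ and its translation. Your renaming only ensures $s\neq q$ and $s\notin \mathsf{V}(B)$; it does not (and must not) remove $s$ from $D$. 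So Corollary~\ref{cor:deri}(2) does not apply directly to that sequent.

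The repair is easy and uses the same ingredients. First cut the induction hypothesis against $\mathcal{A}_s((D[q/B])^\ast) \ra (D[q/B])^\ast$, which holds by part~(ii) of Theorem~\ref{thm:main theorem of gkn}/Corollary~\ref{cor: cor main theorem of gktn}; this gives $\mathcal{A}_s((D[q/B])^\ast) \ra D^\ast[q/B]$. \emph{This} antecedent is $s$-free by property~(i) of the same theorem, so now Corollary~\ref{cor:deri}(2) legitimately yields $\mathcal{A}_s((D[q/B])^\ast) \ra \mathcal{A}_s(D^\ast[q/B])$. The converse direction is symmetric. With this correction your argument goes through.
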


\begin{cor}
\label{cor:main cor}
      Let $\mathbf{L}\in \{ \mathbf{K}_n, \mathbf{KD}_n,\mathbf{KT}_n\}$. For multi-sets $\Gamma,\Delta$ of formulas in $\mathcal{L}^2$, if $\mathsf{G}(\mathbf{L}^2)\vdash \Gamma\ra \Delta$ then $\mathsf{G}(\mathbf{L})\vdash \Gamma^\ast \ra \Delta^\ast$.
\end{cor}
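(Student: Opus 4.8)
The plan is to argue by induction on the height of the derivation of $\Gamma \ra \Delta$ in $\mathsf{G}(\mathbf{L}^2)$, translating each rule of the second-order calculus into a derivation in $\mathsf{G}(\mathbf{L})$ of the $(\cdot)^\ast$-image of its conclusion, using the already-established results about $\mathcal{A}$-formulas. Since the first-order systems $\mathsf{G}(\mathbf{L})$ admit weakening (Proposition~\ref{prop:weakening of k,d,t}), contraction, and cut as (admissible) rules, and since the added rules of $\mathsf{G}(\mathbf{L}^2)$ in Table~\ref{table:gksecn} are precisely weakening, contraction, cut, the two quantifier rules, plus the initial sequent $\forall p\Box_i A \ra \Box_i\forall p A$, it suffices to check these cases together with the observation that every purely propositional or modal rule of $\mathsf{G}(\mathbf{L}^2)$ commutes with $(\cdot)^\ast$ (because $(\cdot)^\ast$ is a homomorphism for all connectives and for $\Box_i$, and because $B^\ast = B$ for quantifier-free $B$).

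First I would dispose of the easy cases. For the logical rules $(R\wedge),(L\wedge),\dots,(L\neg)$ and the modal rules $(\Box_{K_n}),(\Box_{D_n}),(\Box_{T_n})$, the translation of premises and conclusion differ only by replacing each subformula $\forall p C$ by $\mathcal{A}p\,C^\ast$ uniformly, so the same rule instance applies verbatim in $\mathsf{G}(\mathbf{L})$; similarly $(LW),(RW),(LC),(RC),(Cut)$ translate to the corresponding admissible structural rules. For the initial sequents $\Gamma,p\ra p,\Delta$ and $\bot,\Gamma\ra\Delta$ the images are again initial sequents of $\mathsf{G}(\mathbf{L})$. This leaves the two quantifier rules and the Barcan initial sequent.

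The heart of the argument is the three genuinely new cases. For $(L\forall)$: from the induction hypothesis $\mathsf{G}(\mathbf{L})\vdash \Gamma^\ast, (A(p/B))^\ast \ra \Delta^\ast$ I would first rewrite $(A(p/B))^\ast$ as $A^\ast[p/B^\ast]$ up to provable equivalence (this is exactly the Proposition preceding this corollary, i.e.\ the $(\cdot)^\ast$-version of Corollary~\ref{cor:sub}), then cut against Corollary~\ref{cor:deri}(1), which gives $\mathsf{G}(\mathbf{L})\vdash \mathcal{A}p\,A^\ast \ra A^\ast[p/B^\ast]$; since $(\forall p A)^\ast = \mathcal{A}p\,A^\ast$, chaining these with cut yields $\mathsf{G}(\mathbf{L})\vdash \Gamma^\ast, (\forall p A)^\ast \ra \Delta^\ast$. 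For $(R\forall)$, with the eigenvariable condition $p\notin\mathsf{V}(\Gamma)\cup\mathsf{V}(\Delta)$: the induction hypothesis is $\mathsf{G}(\mathbf{L})\vdash \Gamma^\ast \ra A^\ast, \Delta^\ast$, and since $(\cdot)^\ast$ does not introduce $p$ anew we still have $p\notin\mathsf{V}(\Gamma^\ast)\cup\mathsf{V}(\Delta^\ast)$, so Corollary~\ref{cor:deri}(2) applies directly to give $\mathsf{G}(\mathbf{L})\vdash \Gamma^\ast \ra \mathcal{A}p\,A^\ast, \Delta^\ast$, i.e.\ $\Gamma^\ast \ra (\forall p A)^\ast, \Delta^\ast$. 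Finally, for the Barcan initial sequent $\forall p\Box_i A \ra \Box_i\forall p A$ one must show $\mathsf{G}(\mathbf{L})\vdash \mathcal{A}p(\Box_i A^\ast) \ra \Box_i\,\mathcal{A}p\,A^\ast$: here I would use Corollary~\ref{cor:deri}(2) backwards — namely, since $\mathsf{G}(\mathbf{L})\vdash \mathcal{A}p(\Box_i A^\ast) \ra \Box_i A^\ast$ and $p$ does not occur in the antecedent nor (crucially) in $\mathcal{A}p(\Box_i A^\ast)$, one wants to "push $\mathcal{A}p$ through $\Box_i$" by applying Theorem~\ref{thm:main theorem of gkn}(iii)/Corollary~\ref{cor: cor main theorem of gktn} to a derivation obtained from $A^\ast, \mathcal{A}p\,A^\ast \ra $ — oh wait — more precisely, from $\mathcal{A}p(\Box_i A^\ast)\ra \Box_i A^\ast$ apply $(\Box_{K_n})$-style reasoning: strip the box, obtain the premise, replace $\Box_i A^\ast$ by $A^\ast$ suitably, then apply property (iii) with the role of the quantified variable played by $p$ and re-box.

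The step I expect to be the main obstacle is precisely this Barcan case: unlike the quantifier rules, it is not a direct instance of Corollary~\ref{cor:deri} but requires interleaving the box rule with property (iii) of the main theorem, and care is needed because $\mathcal{A}p$ acting on $\Box_i A^\ast$ is computed by the table entry for outermost-boxed formulas (line~11 of Definition~\ref{dfn:Ap formula in Gkn}), whose shape $\Diamond_{g_j}\mathcal{A}_p(\dots)$ must be matched against $\Box_i\,\mathcal{A}p\,A^\ast$. I would therefore prove a small auxiliary lemma — $\mathsf{G}(\mathbf{L})\vdash \mathcal{A}p(\Box_i C) \leftrightarrow \Box_i\,\mathcal{A}p\,C$ when $p\notin$ the other context, or at least the one implication needed — by direct inspection of the definition of $\mathcal{A}_p$ together with the invertibility (Proposition~\ref{prop:invertibility of all logical rules in Gkn}) and the box rules, and then the corollary follows by the straightforward induction sketched above.
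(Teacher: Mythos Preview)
Your overall approach is correct and essentially identical to the paper's: induction on the derivation, with the quantifier rules handled by Corollary~\ref{cor:deri} and the substitution issue in $(L\forall)$ handled by the $(\cdot)^\ast$-version of Corollary~\ref{cor:sub}.

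The one place where you make life harder than necessary is the Barcan initial sequent. You describe an attempted semantic manoeuvre (``strip the box \dots\ apply property (iii) \dots\ re-box''), abandon it, and then propose proving $\mathcal{A}_p(\Box_i C)\leftrightarrow\Box_i\mathcal{A}_p C$ as an auxiliary lemma via invertibility and the box rules. None of that machinery is needed. The paper simply reads off line~11 of Definition~\ref{dfn:Ap formula in Gkn} (resp.\ line~12 of Definition~\ref{dfn:ap formula for t}): when the input is $\emptyset;\Box_i B$, the sets $\Phi,\Psi$ and the antecedent boxed part are all empty, the first three disjuncts of $\mathsf{X}$ are vacuous, and the sole remaining disjunct is literally $\Box_i\mathcal{A}_p(\emptyset;B)$. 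So $\mathcal{A}_p(\Box_i B^\ast)=\Box_i\mathcal{A}_p(B^\ast)$ holds \emph{as a syntactic equality}, and the translated Barcan sequent is an instance of $A\ra A$. Your worry about matching the $\Diamond_{g_j}$ disjuncts against $\Box_i\mathcal{A}_pA^\ast$ dissolves because those disjuncts come from the \emph{antecedent} boxed formulas, of which there are none here.
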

\begin{proof}
      By induction on the derivation and applying Corollary \ref{cor:deri} and Corollary \ref{cor:sub}.
      It is noted that, if $p$ is not free in a formula $A\in \mathcal{L}^2$, then $p$ does not occur in $ A^\ast\in \mathcal{L}^1$. 
      
      In the case of new initial sequent $\forall p \Box_i B \ra \Box_i \forall p B$, 
       from the definition of $\mathcal{A}_p$ formulas in definition \ref{dfn:Ap formula in Gkn}, $\mathcal{A}_p(\emptyset; \Box_i B )$ equals to  $\Box_i\mathcal{A}_p(\emptyset;  B )$.
     Then we obtain 
     $\mathsf{G}(\mathbf{L})\vdash \mathcal{A}_p(\Box_i B)\ra \Box_i \mathcal{A}_p (B)$.
\end{proof}

\begin{lem}
\label{lem:uip in second}
    Uniform interpolation properties are satisfied in $\gksecn$, $\gkdsecn$, $\gktsecn$ in language $\mathcal{L}^2$.
\end{lem}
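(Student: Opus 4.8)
The plan is to take the uniform interpolants to be blocks of propositional quantifiers, as in Pitts~\cite{pitts1992interpretation} and B{\'\i}lkov{\'a}~\cite{bilkova2007uniform}, and to establish the required sequents using only the propositional and quantifier rules, so that the argument is uniform over the three calculi. Fix $\mathbf{L}\in\{\mathbf{K}_n,\mathbf{KD}_n,\mathbf{KT}_n\}$. Given an $\mathcal{L}^2$-formula $A$ and variables $\overrightarrow{q}$, let $p_1,\dots,p_k$ enumerate, in a fixed order, the members of $\mathsf{V}(A)\setminus\{\overrightarrow{q}\}$, and set $\mathcal{I}_{post}(A,\overrightarrow{q}):=\exists p_1\cdots\exists p_k A$; dually, given $B$ and $\overrightarrow{q}$, let $r_1,\dots,r_\ell$ enumerate $\mathsf{V}(B)\setminus\{\overrightarrow{q}\}$ and set $\mathcal{I}_{pre}(B,\overrightarrow{q}):=\forall r_1\cdots\forall r_\ell B$ (recall that $\exists p\, D$ abbreviates $\neg\forall p\neg D$). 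Both are $\mathcal{L}^2$-formulas depending only on $A$ (resp.\ $B$) and $\overrightarrow{q}$, and the occurrence condition of each clause of UIP holds immediately because each $p_i$ (resp.\ $r_j$) is bound by the matching quantifier; if $\mathsf{V}$ is read as ``all variables'' rather than ``free variables'', this costs at most a harmless renaming of bound variables inside $A$ and $B$.

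Next I would check the two derivability conditions for each interpolant. For $\mathcal{I}_{pre}$: the sequent $\forall r_1\cdots\forall r_\ell B\ra B$ is obtained from the identity sequent $B\ra B$ by $\ell$ applications of $(L\forall)$, each instantiating $r_j$ by $r_j$ itself (so no capture occurs); and if $A\ra B$ is derivable with $\mathsf{V}(A)\cap\mathsf{V}(B)\subseteq\{\overrightarrow{q}\}$, then no $r_j$ occurs in $A$, so building the prefix $\forall r_1\cdots\forall r_\ell$ from the innermost quantifier outward by $\ell$ applications of $(R\forall)$ --- whose freeness side condition is met at every step --- yields $A\ra\mathcal{I}_{pre}(B,\overrightarrow{q})$. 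For $\mathcal{I}_{post}$ one argues dually through $\exists p\,D\equiv\neg\forall p\neg D$: $A\ra\exists p_1\cdots\exists p_k A$ follows by repeatedly unfolding the leading $\exists$ and applying $(R\neg)$, $(L\forall)$ with witness $p_i$, and $(L\neg)$, bottoming out at $A\ra A$; and given a derivation of $A\ra B$ --- where each $p_i\notin\mathsf{V}(B)$ since the shared variables lie in $\{\overrightarrow{q}\}$ --- one peels the quantifier block off the antecedent by repeated use of $(L\neg)$, $(R\forall)$ (whose side condition holds since $p_i$ is not free in $B$ and the antecedent is then empty), and $(R\neg)$, reaching $A\ra B$ again. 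No modal rule is used, so the same derivations are available in $\gksecn$, $\gkdsecn$ and $\gktsecn$ alike.

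Finally I would assemble these facts into the post- and pre-interpolant clauses of UIP exactly as in the proof of Corollary~\ref{cor:uip in kn, kdn, first}. I do not expect a genuine obstacle: the only delicate points are keeping the non-shared variables on the side where the freeness side condition of $(R\forall)$ can be discharged, and the bound-variable bookkeeping mentioned above. It is worth stressing that, unlike Corollary~\ref{cor:main cor}, this lemma needs neither cut-elimination for the second-order calculi nor the Barcan-type initial sequent $\forall p\,\Box_i A\ra\Box_i\forall p A$; it is essentially the easy half of the correspondence between UIP and propositional quantification.
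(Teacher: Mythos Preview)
Your proposal is correct and takes essentially the same approach as the paper: the paper also defines the post-interpolant as $\overrightarrow{\exists p}A(\overrightarrow{p},\overrightarrow{q})$ and the pre-interpolant as $\overrightarrow{\forall r}B(\overrightarrow{q},\overrightarrow{r})$, and derives the required sequents by finitely many applications of $(L\forall)$ and $(R\forall)$, treating the post-interpolant case as the dual. Your write-up is more explicit about the freeness side conditions and the unfolding of $\exists$ as $\neg\forall\neg$, but the underlying idea is identical.
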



  Let $\mathbf{L}\in \{ \mathbf{K}_n, \mathbf{KD}_n,\mathbf{KT}_n\}$.
For any formula $A(\overrightarrow{p},\overrightarrow{q})$, any  propositional  variables $\overrightarrow{q}$, such that all $q$ are different from all $p$,
 there exists a formula (post-interpolant) $\mathcal{I}_{post}(A, \overrightarrow{q} )$  such that:
\begin{enumerate}
    \item $A(\overrightarrow{p},\overrightarrow{q})\ra\mathcal{I}_{post} (A, \overrightarrow{q} )  $ is derivable in $\mathsf{G}(\mathbf{L}^2)$;
    \item  for any formula $B(\overrightarrow{q},\overrightarrow{r})$, where
    all $r$ are different from all $q$, 
    if $A(\overrightarrow{p},\overrightarrow{q})  \ra  B(\overrightarrow{q},\overrightarrow{r})$ is derivable in $\mathsf{G}
    (\mathbf{L}^2)$ then   $\mathcal{I}_{post}(A,\overrightarrow{q} )\ra B(\overrightarrow{q},\overrightarrow{r}) $ is derivable in $\mathsf{G}
    (\mathbf{L}^2)$.
\end{enumerate}

Furthermore, for any formula $B(\overrightarrow{q},\overrightarrow{r})$, any  propositional  variables $\overrightarrow{q}$, such that all $q$ are different from all $r$,
 there exists a formula (pre-interpolant) $\mathcal{I}_{pre}(B, \overrightarrow{q} )$  such that:
\begin{enumerate}
    \item $\mathcal{I}_{pre} (B, \overrightarrow{q} ) \ra B(\overrightarrow{q},\overrightarrow{r})  $ is derivable in $\mathsf{G}(\mathbf{L}^2)$;
    \item  for any formula $A(\overrightarrow{p},\overrightarrow{q})$, where
    all $p$ are different from all $q$, 
    if $A(\overrightarrow{p},\overrightarrow{q})  \ra  B(\overrightarrow{q},\overrightarrow{r})$ is derivable in $\mathsf{G}
    (\mathbf{L}^2)$ then   $
    A(\overrightarrow{p},\overrightarrow{q})\ra
\mathcal{I}_{post}(B,\overrightarrow{q} )
    $ is derivable in $\mathsf{G}(\mathbf{L}^2)$.
\end{enumerate}
\begin{proof}

    Take $\overrightarrow{ \exists p}A (\overrightarrow{p},\overrightarrow{q} )$ as post-interpolant $\mathcal{I}_{post}(A, \overrightarrow{q} )$, 
    $\overrightarrow{ \forall r}B (\overrightarrow{q},\overrightarrow{r} )$ as pre-interpolant $\mathcal{I}_{pre}(B, \overrightarrow{q} )$.

    The case of pre-interpolant can be proved as follows:
    At first,     We can easily prove that $\mathsf{G}(\mathbf{L}^2)\vdash  \overrightarrow{ \forall r}B (\overrightarrow{q},\overrightarrow{r} )\ra B(\overrightarrow{q},\overrightarrow{r})$   from   applying the rule $(L\forall)$ for finite times ;
    Next,  
    $\mathsf{G}(\mathbf{L}^2)\vdash   A(\overrightarrow{p},\overrightarrow{q}) \ra     \overrightarrow{ \forall r}(\overrightarrow{q},\overrightarrow{r} ) $ can be derived from   applying the rule $(R\forall)$ for finite times.
    Next, the case of post-interpolant is a dual. 
\end{proof}

The translation  provides an alternative method to show the uniform interpolation property (already proved in as in Corollary 
\ref{cor:uip in kn, kdn, first} and \ref{cor:uip in ktn first}) from main theorem. 

\begin{cor}
    Uniform interpolation properties are satisfied in $\gkn$, $\gkdn$ and $\gktn$ language $\mathcal{L}^1$.
\end{cor}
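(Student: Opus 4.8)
The plan is to obtain this as an immediate consequence of the second-order uniform interpolation (Lemma \ref{lem:uip in second}) by transporting everything back to $\mathcal{L}^1$ along the translation $(\cdot)^\ast$ and invoking Corollary \ref{cor:main cor}. The key observation is that a quantifier-free formula is its own $\ast$-translate, that $(\forall r\,B)^\ast = \mathcal{A}_r(B^\ast)$, and hence $(\exists p\,A)^\ast = (\neg\forall p\,\neg A)^\ast = \neg\mathcal{A}_p(\neg A^\ast) = \mathcal{E}_p(A^\ast)$; so, by an easy induction on the length of a quantifier prefix, the $\ast$-translate of $\overrightarrow{\forall r}\,B$ (resp. $\overrightarrow{\exists p}\,A$) is exactly the iterated $\mathcal{A}$- (resp. $\mathcal{E}$-) interpolant already used in Corollaries \ref{cor:uip in kn, kdn, first} and \ref{cor:uip in ktn first}.

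Concretely, I would fix $\mathbf{L}\in\{\mathbf{K}_n,\mathbf{KD}_n,\mathbf{KT}_n\}$ and a quantifier-free $A(\overrightarrow{p},\overrightarrow{q})\in\mathcal{L}^1$, and take as post-interpolant $\mathcal{I}_{post}(A,\overrightarrow{q}) := (\overrightarrow{\exists p}\,A)^\ast$, and dually $\mathcal{I}_{pre}(B,\overrightarrow{q}) := (\overrightarrow{\forall r}\,B)^\ast$ for quantifier-free $B$. Since $\mathsf{G}(\mathbf{L}^2)$ extends $\mathsf{G}(\mathbf{L})$, any $\mathcal{L}^1$-derivation of $A\ra B$ is also a $\mathsf{G}(\mathbf{L}^2)$-derivation, so Lemma \ref{lem:uip in second} yields $\mathsf{G}(\mathbf{L}^2)$-derivations of $A\ra\overrightarrow{\exists p}\,A$ and of $\overrightarrow{\exists p}\,A\ra B$; applying $(\cdot)^\ast$ and Corollary \ref{cor:main cor}, and using $A^\ast=A$ and $B^\ast=B$, gives $\mathsf{G}(\mathbf{L})$-derivations of $A\ra\mathcal{I}_{post}(A,\overrightarrow{q})$ and of $\mathcal{I}_{post}(A,\overrightarrow{q})\ra B$, which are precisely the required conditions of UIP for $\mathcal{L}^1$ (the passage between $A\rightarrow C$ being derivable and $A\ra C$ being derivable is routine via Theorem \ref{prop:equivilence of Hilbert and sequent} and admissibility of cut and weakening). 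The variable condition follows from the remark in the proof of Corollary \ref{cor:main cor}: $(\cdot)^\ast$ introduces no new propositional variables, and if $p$ is not free in an $\mathcal{L}^2$-formula then $p$ does not occur in its $\ast$-translate, so $\overrightarrow{p}$ does not occur in $(\overrightarrow{\exists p}\,A)^\ast$. The pre-interpolant case is symmetric, using $(L\forall)$ and $(R\forall)$ and the same translation argument.

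Since this is essentially a packaging step, there is no deep obstacle; the only point requiring a little care is the compatibility of the quantifier rules with $(\cdot)^\ast$ — in particular that the substitution instance introduced by $(L\forall)$ survives the translation — which is exactly what Corollary \ref{cor:sub} (and its use inside Corollary \ref{cor:main cor}) is designed to cover. The other mild bookkeeping is verifying $(\overrightarrow{\forall r}\,B)^\ast = \mathcal{A}_{r_1}(\cdots\mathcal{A}_{r_m}(B)\cdots)$ by induction on $m$, so that the interpolant produced here literally coincides with the one from Corollaries \ref{cor:uip in kn, kdn, first} and \ref{cor:uip in ktn first}, confirming that the two routes to UIP deliver the same formulas.
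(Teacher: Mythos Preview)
Your proposal is correct and follows exactly the paper's approach: the paper's proof is the single line ``Proved from Corollary \ref{cor:main cor} and Lemma \ref{lem:uip in second},'' and you have simply unpacked how that combination works. Your additional observations about $(\overrightarrow{\exists p}\,A)^\ast=\mathcal{E}_{p_1}(\cdots\mathcal{E}_{p_n}(A)\cdots)$ and the variable condition are correct elaborations of details the paper leaves implicit.
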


\begin{proof}
    Proved from Corollary \ref{cor:main cor} and Lemma \ref{lem:uip in second}.
\end{proof}

 \section{Conclustion and Future Direction}
This paper expands the single-modal systems from  B{\'\i}lkov{\'a}\cite{bilkova2007uniform} to multi-modal systems. 
We have given pure syntactic proof of UIP in multi-modal logic $\mathbf{K_n}$ (in Corollary \ref{cor:uip in kn, kdn, first}) , $\mathbf{KD_n}$ (in Corollary \ref{cor:uip in kn, kdn, first}, it is noted that $\mathbf{KD}$ is not in B{\'\i}lkov{\'a}\cite{bilkova2007uniform}) and $\mathbf{KT_n}$ (in Corollary \ref{cor:uip in ktn first}).
We also show that quantification over propositional variables can be modeled by UIP in these systems.

In the next step, it could be interesting to prove UIP in the intuitionistic multi-modal system. To do this, we may need to make use of the $\mathbf{G4}$-style sequent calculus in Pitts \cite{pitts1992interpretation}.
Another direction is to show the UIP for distributed knowledge. We did not take agents into account in interpolation. However, counting agents in Craig interpolation is very straightforward (for example in \cite{murai2024intui_pal_dis,Su2021_diel_lori}) in epistemic logic with distributed knowledge. It could be interesting to develop a method to count agents.



\bibliographystyle{splncs04}
\bibliography{UIP202505}


\end{document}